\renewcommand*\@xhline{\ifx\reserved@a\hline
               \vskip\doublerulesep
               \vskip-\arrayrulewidth
             \fi
      \ifnum0=`{\fi}}
\renewcommand*{\doi}[1]{\href{http://dx.doi.org/#1}{doi:\nolinkurl{#1}}}
\pgfplotsset{compat=newest}
\crefname{figure}{figure}{figures}
\newtheorem{theorem}{Theorem}
\newtheorem{lemma}{Lemma}
\newtheorem{definition}{Definition}
\newcommand*{\eg}{e.g.\xspace}
\newcommand*{\ie}{i.e.\xspace}
\newcommand*{\diff}{\ensuremath{\mathop{}\!\mathrm{d}}}
\newcommand*{\defeq}{\vcentcolon=}
\newcommand*{\pd}[1]{\ensuremath{\frac{\partial}{\partial #1}}}
\newcommand*{\pdt}[1]{\ensuremath{\frac{\partial #1}{\partial t}}}
\newcommand*{\pdfun}[2]{\ensuremath{\frac{\partial #1}{\partial #2}}}
\newcommand*{\Fr}{\ensuremath{\mathrm{Fr}}}
\newcommand*{\FrLe}{\ensuremath{\Fr_{\text{\tiny LE}}}}
\newcommand*{\Le}[1]{{#1}_{\text{\tiny LE}}}
\newcommand{\vect}[1]{\boldsymbol{#1}}
\renewcommand{\div}{\ensuremath{\nabla\cdot}}
\newcommand*{\jump}[1]{\ensuremath{\left\llbracket{#1}\right\rrbracket}}
\newcommand*{\av}[1]{\ensuremath{\left\langle{#1}\right\rangle}}
\newcommand*{\spaceav}[1]{\ensuremath{\overline{#1}}}
\newcommand*{\abs}[1]{\ensuremath{\left\lvert{#1}\right\rvert}}
\newcommand*{\norm}[1]{\ensuremath{\left\lVert{#1}\right\rVert}_1}
\newcommand*{\n}{\ensuremath{\vect{\hat n}}}
\newcommand*{\ndot}[1]{\ensuremath{\vect{\hat n}\cdot\vect{#1}}}
\newcommand*{\nd}[0]{\ensuremath{\vect{\hat n}\cdot}}
\newcommand*{\escaled}{\ensuremath{\widetilde{E_k}}}
\begin{document}


\title{A consistent reduction of the two-layer shallow-water equations to an accurate one-layer spreading model}

\author{Eirik Holm Fyhn}
\affiliation{NTNU, Department of Physics, Høgskoleringen 5, NO-7491 Trondheim, Norway}
\author{Karl Yngve Lervåg}
\author{Åsmund Ervik}
\affiliation{SINTEF Energy Research, P.O. Box 4671 Sluppen, NO-7465 Trondheim, Norway}
\author{Øivind Wilhelmsen}
\affiliation{SINTEF Energy Research, P.O. Box 4671 Sluppen, NO-7465 Trondheim, Norway}
\affiliation{NTNU, Department of Energy and Process Engineering, NO-7465 Trondheim, Norway}

\date{\today}

\begin{abstract}
  The gravity-driven spreading of one fluid in contact with another fluid is of key importance to a range of topics.
  These phenomena are commonly described by the two-layer shallow-water equations (SWE).
  When one layer is significantly deeper than the other, it is common to approximate the system with the much simpler one-layer SWE.
  It has been assumed that this approximation is invalid near shocks, and one has applied additional front conditions to correct the shock speed.
  In this paper, we prove mathematically that an effective one-layer model can be derived from the two-layer equations that correctly captures the behaviour of shocks and contact discontinuities without additional closure relations.
  The result shows that simplification to an effective one-layer model is  justified mathematically and can be made without additional knowledge of the shock behaviour.
  The shock speed in the proposed model is consistent with empirical models and identical to front conditions that have been found theoretically by \eg~von Kármán and by Benjamin.
  This suggests that the breakdown of the SWE in the vicinity of shocks is less severe than previously thought.
  We further investigate the applicability of the SW framework to shocks by studying one-dimensional lock-exchange/-release.
  We derive expressions for the Froude number that are in good agreement with the widely employed expression by Benjamin.
  The equations are solved numerically to illustrate how quickly the proposed model converges to solutions of the full two-layer SWE.
  We also compare numerical results from the model with results from experiments, and find good agreement.
\end{abstract}

\maketitle

\section{Introduction}
\label{sec:Introduction}
The spreading of two layers of fluids with different density is of considerable importance.
It has been an active field of study since at least 1774, when \citet{franklin1774} investigated how oil spreads on water and how this can be used to still waves.
Applications where this phenomenon plays an important role include spills of oil~\citep{hoult1972, fay1971, chebbi2001} and liquefied gaseous fuels,\citep{fay2007, fay2003, brandeis1983_part2} stratified flow inside pipes,\citep{stanislav1986} gravity currents particularly in geophysical systems,\citep{adduce2012, shin2004, moodie2002,meriaux2016} monomolecular layers for evaporation control,\citep{stickland1972} and coalescence in three-phase fluid systems.\citep{mar1968}
These applications include non-miscible fluids such as oil and water, or systems with miscible fluids at large Richardson number, \ie~where buoyancy dominates mixing effects and ensures separation into layers.

A fundamental property of spreading phenomena is the rate of spreading, or the speed of the leading edge of the spreading fluid.
This is typically characterized by the dimensionless Froude number,\citep{white2003,vaughan2005}
\begin{equation}
  \Fr = \frac{u}{\sqrt{g'h}},
  \label{eq:Fr}
\end{equation}
where $u$ is the velocity, $h$ is the height of the layer that is spreading, and $g'$ is the effective gravitational acceleration.
In two layer spreading, the effective gravitational acceleration is $g' = (1-\rho_1/\rho_2)g$, where $\rho_1$ and $\rho_2$ are the two fluid densities and $\rho_1 < \rho_2$.

An early result for the Froude number of gravity currents was presented by \citeauthor{karman1940}.\cite{karman1940}
They found that for the edge of a spreading gravity current at semi-infinite depth, $\FrLe = \sqrt{2}$, where the subscript is short for ``leading edge''.
\citet{benjamin1968} later developed a model for $\FrLe$ for spreading of gravity currents with constant height,
\begin{equation}
  \FrLe^2 = \frac{(1-\alpha)(2-\alpha)}{(1+\alpha)},
  \label{eq:benjamin}
\end{equation}
where $\alpha = h_2/(h_1+h_2)$.
Here $h_1$ and $h_2$ are the heights of the top and bottom layers, respectively.
This model approaches the result by \citeauthor{karman1940} when the bottom layer becomes thin, $h_2 \ll h_1$.
More recently, \citet{ungarish2017} extended the result of \citeauthor{benjamin1968} to the spreading of gravity currents into a lighter fluid with an open surface.
This result also gives $\FrLe = \sqrt{2}$ when the spreading fluid becoms relatively much thinner than the ambient fluid.

The next step beyond characterizing spreading rates is to develop a model that predicts the phenomenon in more detail.
An early model was presented by \citeauthor{fay1969},\cite{fay1969} who studied the spreading of oil on water.
They divided the spreading into three phases; one where inertial forces dominate, one where viscous forces dominate, and one where the surface tension dominates.
In the inertial phase, the speed of the front can be written as
\begin{equation}
  \Le{u} = \beta \sqrt{\frac{g' V}{A}},
  \label{eq:fay}
\end{equation}
where $\beta$ is an empirical constant and $V$ and $A$ are the volume and area, respectively.
Then $V/A$ is the average height of the spreading oil.
In this model, $\beta$ represents an effective Froude number where the height at the leading edge is approximated by the average height.
The value of $\beta$ has been discussed in the literature and is commonly set to $\beta = 1.31$ in the one-dimensional case and $\beta = 1.41$ in the axisymmetric case.\citep{fannelop1972,fay1971,hoult1972,fay2007}

A more general approach than the Fay model is the two-layer shallow-water equations (2LSWE), which are derived from the Euler equations by assuming a negligible vertical velocity.\citep{ovsyannikov1979,vreugdenhil1979}
These equations model the flow of two layers of shallow liquids and may be used to simulate for instance gravity currents.\citep{audusse11}
However, internal breaking of waves or large differences in velocities of the two layers can break the hyperbolicity of the equations.
Even if the initial conditions are hyperbolic, the system can evolve into a non-hyperbolic state.\citep{milewski2004}
A breakdown of hyperbolicity causes problems such as ill-posedness and Kelvin-Helmholtz like instabilities.\citep{lannes2015,stewart2013,lam2016}
Non-hyperbolic equations are generally more difficult to analyse and computationally much more expensive to solve than hyperbolic equations.\citep{bouchut2008}
Attempts to amend the non-hyperbolicity of the systems include adding numerical (non-physical) friction forces,\citep{castro2011} operator-splitting approaches,\citep{bouchut2010} and introduction of an artificial compressibility.\citep{chiapolino2018}

Due to their comparative simplicity, the one-layer shallow-water equations (1LSWE) have often been used to model two-layer phenomena like liquid-on-liquid spreading and gravity currents where one assumes that the layers are in a buoyant equilibrium.
In this case, a forced constant Froude-number boundary condition at the leading edge of a spreading liquid is used to account for the effect of the missing layer. \citep{fannelop1972,hoult1972,hatcher2014}
The additional boundary condition at the leading edge has also been used in combination with the 2LSWE.\citep{rottman1983,ungarish2013}
In particular, \citet{rottman1983} argued that a front condition that includes the Froude number is necessary because viscous dissipation and vertical acceleration are too significant to be neglected at the front.

The 1LSWE are always hyperbolic and therefore have fewer challenges than the 2LSWE.
However, there are situations where even the 1LSWE are not strictly hyperbolic, meaning that the two eigenvalues of the Jacobian coincide.
This situation is found when considering the wet--dry transition, such as the dam break on a dry bottom, or for certain bottom topographies.
In particular the case of a gravity current flowing upslope, as in a shallow water wave encountering a beach, is of importance and has seen new developments in recent years.\citep{lombardi2015,bjornestad2017,zemach2019}
There is an exhaustive literature on the subject of hyperbolicity of the 1LSWE, \citep{fracarollo1995,zhou2001,lefloch2007,liang2009,lefloch2011,murillo2016} including the topic of well-balanced formulation, the more general E-balanced schemes, and the identification of resonant versus non-resonant regimes of flow.
These points are mainly of interest for the numerical solution of the equations in specific regimes.
As the present paper is focused more on the theoretical developments, a detailed discussion of hyperbolicity is beyond the scope of the present work.

The main results of the present paper are the following.
First, we show that the need to impose boundary conditions or empirical closures for the spreading rate when using the 1LSWE instead of the 2LSWE follows from the different shock behaviour of the two formulations.
Second, we demonstrate that weak solutions of the 2LSWE converge to weak solutions of a \emph{locally conservative} form of the one-layer equations.
This formulation is different from the standard 1LSWE, and removes the need for front conditions.

This is a strong result as it implies that in many situations, such as when considering liquid spills on water or ocean layers in deep water, one may use the much simpler locally conservative 1LSWE even for two-layer spreading phenomena, without the need for additional boundary conditions or closures.
An example is presented in \cref{fig:intro-example}, which illustrates how solutions to different forms of the 1LSWE compare to the solution of the 2LSWE for a dam-break problem. The figure shows a clear difference between the \emph{locally} and \emph{globally} conservative 1LSWE.

\begin{figure}
  \centering
  \tikzsetnextfilename{intro-example}
  \begin{tikzpicture}
    \begin{axis}[
        height = 0.7\columnwidth,
        width = 0.9\columnwidth,
        xlabel = {Length},
        ylabel = {Height},
        axis lines = left,
        ticks = none,
        xmax = 2.1,
        ymin =-0.1,
        ymax = 1.1,
        line width = 1pt,
        legend pos = north east,
        legend style = {xshift=1cm, draw=none, fill=black!5},
        legend cell align = {left},
        table/x = x,
        domain=-2:2,
        samples=500,
        han1/.style = {
          color=Set1-C,
        },
        han2/.style = {
          dashed,
          color=Set1-E,
        },
        hd2/.style = {
          densely dashed,
          color=Set1-B,
        },
      ]

      \addplot[han1] table [y=han] {data/intro-example.txt};
      \addlegendentry{1LSWE Local}

      \addplot[han2]
        {x < -1 ? 1 : (x < 2 ? (2 - x)^2 / 9 : 0)};
      \addlegendentry{1LSWE Global}

      \addplot[hd2] table [y=hd2] {data/intro-example.txt};
      \addlegendentry{2LSWE}
    \end{axis}
  \end{tikzpicture}
  \caption{An example of how solutions from different formulations of the one-layer shallow-water equations (1LSWE Local and 1LSWE Global) compares to those from the two-layer shallow-water equations (2LSWE) for a dam-break problem.}
  \label{fig:intro-example}
\end{figure}
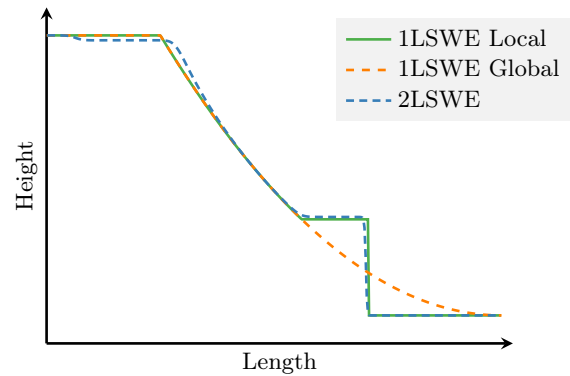

We further demonstrate that the constant Froude number at the front of an expanding fluid can be derived directly from the 2LSWE.
The Froude numbers obtained from the analysis in this paper are in excellent agreement previous results from the literature.
This indicates that the breakdown of the shallow-water equations in vicinity of shocks is less severe than previously suggested.

The paper is structured as following.
In \cref{sec:govEqs}, we introduce the two-layer shallow-water equations (2LSWE), the one-layer shallow-water equations (1LSWE) and the Rankine-Hugoniot condition for the shock.
In \cref{sec:BenjaminAlternative} we derive expressions for the Froude number from the full two-layer shallow-water equations.
The key result of the paper is presented in \cref{sec:reducingToOneLayer}, where we show the 2LSWE can be approximated by a one-layer model when the upper layer is much thicker than the bottom layer, as well as in the opposite situation.
In \cref{sec:cases} we define some  numerical experiments that are used in \cref{sec:results} to study how solutions of the 2LSWE approach the one-layer approximations.
We show that the results from the simplified model are in good agreement with experimental results.
Concluding remarks are provided in \cref{sec:conclusion}.

\section{Theory of the shallow-water equations}
\label{sec:govEqs}
Consider a two-layer system where a fluid of lower density spreads on top of another fluid, as illustrated in \cref{fig:sketch}.
Assuming that the layers are shallow, the solution of the two-layer shallow-water equations (2LSWE) gives the evolution of height and horizontal velocity of both fluids as a function of position and time.

In the following, we first describe the well-known one-layer shallow-water equations (1LSWE).
A straightforward generalization to the 2LSWE is presented next, where we discuss two approaches for reformulating the 2LSWE in a manner that makes them suitable for reduction to an effective one-layer model.
We then show how the Rankine-Hugoniot conditions can be used to predict the shock speed.
Subsequently we employ the vanishing-viscosity regularization and travelling wave solutions to obtain physically acceptable solutions of the partial differential equations (PDEs).
At the end of the section, we present a necessary energy requirement for the 2LSWE that is used to select correct physical solutions in \cref{sec:reducingToOneLayer}.
\begin{figure}
  \centering
  \includegraphics[width=0.9\columnwidth]{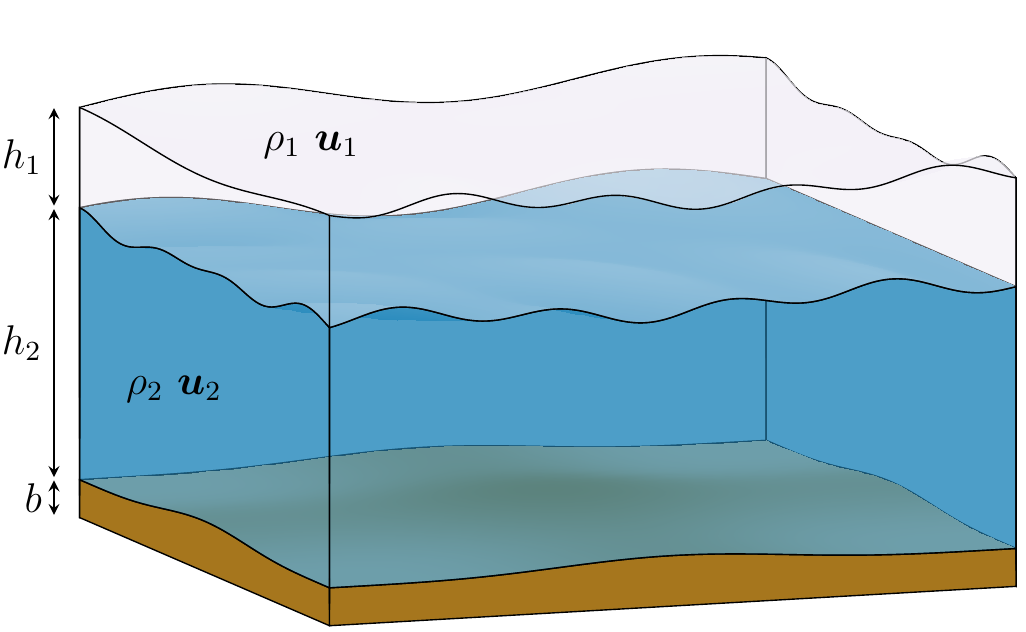}
  \caption{A sketch of a general two-layer shallow-water geometry.}
  \label{fig:sketch}
\end{figure}

\subsection{The one-layer shallow-water equations}
\label{sec:1LSWE}
The 1LSWE are typically presented in a globally conservative form where total momentum is conserved,\citep{leveque02}
\begin{subequations}
  \label{eq:1LSWEGlob}
  \begin{gather}
    \label{eq:1LSWEGlobCont}
    \pd t \rho h + \div (\rho h\vect u) = G_h, \\
    \begin{aligned}[b]
      \pd t (\rho h \vect u) &+ \div (\rho h \vect u \otimes \vect u) \\
      &\qquad+ \nabla\left(\frac 1 2 g \rho h^2\right)
      = \vect G_{hu} - g\rho h\nabla b.
    \end{aligned}
  \end{gather}
\end{subequations}
where $\rho$ is the density, $h$ is the height, $\vect u$ is the \emph{vertically averaged} horizontal velocity, $\otimes$ denotes the tensor product, $b$ is the bottom topography, $G_h$ and $\vect G_{hu}$ are source functions that may represent external phenomena, such as evaporation, Coriolis forces, wind shear stress, or interfacial shear forces.
The bottom topography is assumed to be continuous throughout.
The density $\rho$ is assumed constant in space, although it may vary in time.

One may also consider what will be referred to as the locally conservative 1LSWE, that is
\begin{subequations}
  \label{eq:1LSWELoc}
  \begin{gather}
    \label{eq:1LSWELocCont}
    \pd t \rho h + \div (\rho h \vect u) = G_h, \\
    \pd t \vect u + (\vect u \cdot \nabla)\vect u
    + g \nabla(h + b)
      = \frac 1{\rho h}(\vect G_{hu} - \vect u G_h).
  \end{gather}
\end{subequations}
Here the continuity equation~\eqref{eq:1LSWELocCont} is unchanged.
The various forms of the one-layer and two-layer equations all use the same form of the continuity equation.

One particularly striking difference between \cref{eq:1LSWELoc} and \cref{eq:1LSWEGlob} is the admissibility of shocks when the height drops to 0.
This will be further discussed in \cref{sec:damBreak}, but the upshot is that such a shock is impossible in \cref{eq:1LSWEGlob}, while in \cref{eq:1LSWELoc} it is possible with a Froude number $\FrLe = \sqrt{2}$.
This is exactly the result by \citet{karman1940} for two layer spreading with such shocks.
In fact, in \cref{sec:reducingToOneLayer}, we show that the locally conservative form correctly captures the two-layer behaviour in certain limits.
This result is consistent with previous results which show that numerical approaches will fail to solve the conservation of global momentum.\citep{bouchut2010}

\subsection{The two-layer shallow-water equations}
\label{sec:2LSWE}
The 2LSWE may be written in a general, layerwise form with arbitrary source terms as
\begin{widetext}
  \begin{subequations}
    \label{eq:2LSWEGeneral}
    \begin{align}
      \pd t \rho_1h_1 + \div (\rho_1h_1\vect u_1) &= G_{h_1},
      \label{eq:2LSWEGeneral_h1} \\
      \pd t \rho_2h_2 + \div (\rho_2h_2\vect u_2) &= G_{h_2},
      \label{eq:2LSWEGeneral_h2} \\
      \pd t (\rho_1h_1\vect u_1)
      + \div (\rho_1h_1\vect u_1 \otimes \vect u_1)
      + \nabla \left(\frac 1 2 g\rho_1 h_1^2\right) &= \vect G_{h_1u_1} - g\rho_1h_1\nabla (h_2 + b),
      \label{eq:2LSWEGeneral_mom1} \\
      \pd t (\rho_2h_2\vect u_2)
      + \div (\rho_2h_2\vect u_2 \otimes \vect u_2)
      + \nabla \left(g\rho_1 h_1 h_2 + \frac 1 2 g \rho_2 h_2^2\right)
      &= \vect G_{h_2u_2} + g\rho_1h_1\nabla (h_2 + b) - g(\rho_1h_1+\rho_2h_2)\nabla b,
      \label{eq:2LSWEGeneral_mom2}
    \end{align}
  \end{subequations}
\end{widetext}
where the subscripts 1 and 2 denote the top and bottom layers respectively.
The coupling between the two layers are captured by the last source terms on the right-hand side of the momentum equations.

This form was originally described by \citeauthor{ovsyannikov1979},\cite{ovsyannikov1979} and is referred to in more recent works as ``the conventional two-layer shallow-water model''.\citep{chiapolino2018}

\subsection{2LSWE forms that are reducible to one-layer approximations}
Conservation of momentum can be considered at three different scales:
\begin{enumerate}
  \item The globally conservative form where total momentum is conserved.
  \item The layerwise conservative form (\cref{eq:2LSWEGeneral}) where the momentum in each layer is conserved.
  \item The locally conservative form where the local momentum, or velocity, is conserved.
\end{enumerate}
Although these formulations are equivalent for smooth solutions, they are not generally equivalent, as will be further discussed in \cref{sec:RankineHugoniot}.
The layerwise formulation is not easily reducible to a one-layer model.
The remaining two approaches can be converted to an effective one-layer approximation, and our analysis will cover both.
In the locally conservative form, we combine \cref{eq:2LSWEGeneral_mom1,eq:2LSWEGeneral_mom2} with \cref{eq:2LSWEGeneral_h1,eq:2LSWEGeneral_h2} to give equations for velocity rather than momentum.
Using the product rule for differentiation,
\[
    \div \left(\rho_ih_i\vect u_i \otimes \vect u_i\right)
    = \vect u_i \div (\rho_ih_i\vect u_i) + \rho_ih_i\left(\vect u_i \cdot \nabla\right)\vect u_i,
\]
we arrive at the set of equations which we shall refer to as the \emph{locally conservative} version of the 2LSWE,
\begin{widetext}
  \begin{subequations}
    \label{eq:2LSWEGenVel}
    \begin{align}
      \pd t \rho_1h_1 + \div (\rho_1h_1\vect u_1) &= G_{h_1}, \\
      \pd t \rho_2h_2 + \div (\rho_2h_2\vect u_2) &= G_{h_2}, \\
      \pd t \vect u_1 + (\vect u_1 \cdot \nabla)\vect u_1
      + \nabla \left[g (h_1 + h_2 + b) \right]
      &= \frac{1}{\rho_1h_1}(\vect G_{h_1u_1} - \vect u_1 G_{h_1}),
      \label{eq:2LSWEGenVel_u1} \\
      \pd t \vect u_2 + (\vect u_2 \cdot \nabla)\vect u_2
      + \nabla \left[g \left(\frac{\rho_1}{\rho_2}h_1 + h_2 + b\right)\right]
      &= \frac{1}{\rho_2h_2}(\vect G_{h_2u_2} - \vect u_2 G_{h_2}).
      \label{eq:2LSWEGenVel_u2}
    \end{align}
  \end{subequations}
\end{widetext}
For a comprehensive study of the well-posedness of the locally conservative 2LSWE, see for instance.\citep{monjarret2015}

When conserving the total momentum, the sum of \cref{eq:2LSWEGeneral_mom1} and \cref{eq:2LSWEGeneral_mom2} is used, which has the advantage of eliminating the interaction between the layers.
However, this approach requires an additional conservation law.
\citet{ostapenko1999,ostapenko2001} showed that the additional conservation law should be the difference between \cref{eq:2LSWEGenVel_u2} and \cref{eq:2LSWEGenVel_u1}.
\citeauthor{ostapenko2001} used these equations in a study of the well-posedness of the 2LSWE.
The resulting equations, which we will refer to as the \emph{globally conservative} version of the 2LSWE, read
\begin{widetext}
\begin{subequations}
  \label{eq:2LSWEGenTot}
  \begin{gather}
    \pd t \rho_1h_1 + \div (\rho_1h_1\vect u_1) = G_{h_1},
    \label{eq:2LSWEGenTot_h1} \\
    \pd t \rho_2h_2 + \div (\rho_2h_2\vect u_2) = G_{h_2},
    \label{eq:2LSWEGenTot_h2} \\
    \label{eq:2LSWEGenTot_sumMom}
    \begin{aligned}[b]
      \pd t \left(\rho_1 h_1 \vect u_1 + \rho_2 h_2 \vect u_2\right)
      &+ \div \left(\rho_1 h_1 \vect u_1 \otimes \vect u_1
      + \rho_2 h_2 \vect u_2 \otimes \vect u_2\right)\\
      &\qquad+ \nabla \left(
        \frac 1 2 g\rho_1 h_1^2 + \rho_1 g h_1 h_2
      + \frac 1 2 \rho_2 g h_2^2\right)
      = \vect G_{h_1u_1}+\vect G_{h_2u_2} - g(\rho_1h_1+\rho_2h_2)\nabla b,
    \end{aligned}\\
    \label{eq:2LSWEGenTot_diffVel}
    \pd t \left(\vect u_2 - \vect u_1\right)
    + (\vect u_2 \cdot \nabla)\vect u_2 - (\vect u_1 \cdot \nabla)\vect u_1
    - \nabla \left(g \delta h_1\right) = \vect J,
  \end{gather}
\end{subequations}
\end{widetext}
where
\[
  \vect J = \frac{\vect G_{h_2u_2} - \vect u_2 G_{h_2}}{\rho_2h_2}
  - \frac{\vect G_{h_1u_1} - \vect u_1 G_{h_1}}{\rho_1h_1}
\]
and where we have defined
\begin{equation}
  \delta \defeq \frac{\rho_2 - \rho_1}{\rho_2}.
  \label{eq:delta}
\end{equation}

\subsection{The Rankine-Hugoniot condition}
\label{sec:RankineHugoniot}
When two sets of equations are equivalent in the classical sense, they may not be equivalent in the weak sense, that is, when interpreted as distributions.\citep{whitham1974,holden2015,borthwick2016}
In the 2LSWE, \cref{eq:2LSWEGeneral,eq:2LSWEGenVel,eq:2LSWEGenTot} are equivalent for smooth solutions, but not for weak solutions.
In particular, these equations will give different shock velocities.
We shall next discuss the mathematical framework used to analyze such discontinuities; the Rankine-Hugoniot condition, named after \citeauthor{rankine1870} and \citeauthor{hugoniot1887} who first introduced it.\citep{rankine1870,hugoniot1887,hugoniot1889}

The Rankine-Hugoniot condition states the following.
Assume that $u$ satisfies a general scalar conservation equation
\begin{equation}
  \pd{t} u(t, \vect x) + \div \vect q (t, \vect x) = J
\end{equation}
in the weak sense, where $J$ is some source term that does not involve the derivatives of $u$.
Further, assume that $u$ has a discontinuity along some curve $\Gamma$.
For any function $f$, define the jump across a discontinuity as $\jump{f} \equiv f_r - f_l$, where $f_r \equiv \lim_{\varepsilon \to 0^+}f(\vect \xi+ \varepsilon \vect{\hat n})$ and $f_l \equiv \lim_{\varepsilon \to 0^-}f(\vect \xi+ \varepsilon \vect{\hat n})$.
The Rankine-Hugoniot condition then states that the discontinuity at any point $\vect \xi \in \Gamma$ propagates along the outward-pointing normal vector $\vect{\hat n}$ with a speed $S$.
This speed is called the shock speed and satisfies the relation
\begin{equation}
  S\jump{u} = \nd \jump{\vect q}.
  \label{eq:rankineHugoniotScalar}
\end{equation}
Similarly, if $\vect u$ satisfies a general vector conservation equation,
\begin{equation}
  \pd{t} \vect u(\vect x,t) + \div(\vect a \otimes \vect b) + \nabla q (\vect x,t) = \vect J,
\end{equation}
then, if there is some discontinuity in $\vect u$, we have the result
\begin{equation}
  S\jump{\vect u} = \jump{\n \cdot (\vect a \otimes \vect b) + q\vect{\hat n}}.
  \label{eq:rankineHugoniotVector}
\end{equation}

\Cref{eq:rankineHugoniotScalar,eq:rankineHugoniotVector} can be directly applied to the mass conservation equations and the conservation law for total momentum, respectively.
In one dimension, the Rankine-Hugoniot conditions can also be applied to the locally conservative momentum equation.
In two dimensions, the term $\vect u\cdot \nabla \vect u$ renders the Rankine-Hugoniot condition for the transversal velocity component ill-defined.
Nevertheless, for our purposes we do not need the Rankine-Hugoniot condition for the transversal velocity component.
See \cref{app:travellingWave} for a discussion on this.
In the layerwise momentum equation, the interaction term $\propto h_1\nabla h_2$ makes the normal component for the momentum equations ill-defined, which is why we must exclude this formulation of the 2LSWE from the analysis.

We derive the Rankine-Hugoniot conditions in \cref{app:travellingWave} and find that for the locally conservative 2LSWE~(\cref{eq:2LSWEGenVel_u1,eq:2LSWEGenVel_u2}),
\begin{equation}
  S\jump{\vect u_i}\cdot\n = \jump{\frac 1 2 (\ndot u_i)^2 + g \left(\frac{\rho_1}{\rho_2}\right)^{i-1}h_1 + gh_2},
  \label{eq:2LSWEGenVelRH}
\end{equation}
where as before $i=1,2$ denotes the layer.
Similarly, for the globally conservative 2LSWE~\cref{eq:2LSWEGenTot_sumMom,eq:2LSWEGenTot_diffVel}, we find
\begin{multline}
  S\jump{\rho_1h_1\vect u_1 + \rho_2h_2\vect u_2} \\
  = \jump{(\ndot u_1)\rho_1h_1\vect u_1
    + (\ndot u_2)\rho_2h_2\vect u_2} \\
    + \jump{\frac 1 2 g \rho_1 h_1^2 + \rho_1 g h_1 h_2
      + \frac 1 2 \rho_2 g h_2^2}\n
  \label{eq:2LSWEGenTotRH_totMom}
\end{multline}
and
\begin{multline}
  S\nd\jump{\vect u_2 - \vect u_1} =
  \Biggl\llbracket
    \frac 1 2\biggl[(\ndot u_2)^2 \\
    - (\ndot u_1)^2\biggr] - g\delta h_1 \Biggr\rrbracket.
  \label{eq:2LSWEGenTotRH_diffVel}
\end{multline}

Finally, we note that in calculations with the Rankine-Hugoniot condition it is useful to observe that $\jump{ab} = \jump{a}\av{b} + \av{a}\jump{b}$ where $\av{a} = (a_l + a_r)/2$.

\subsection{Physical solutions}
\label{sec:physicalSolutions}
When PDEs are considered in the weak sense, it is necessary to impose extra conditions to extract a unique physical solution.
Such conditions are called \emph{entropy conditions}.
In this subsection, we will introduce one such condition: the \emph{energy requirement}.
For simplicity, we define a \emph{physical solution} as one that satisfies the energy requirement.

The energy requirement states that only shocks that dissipate energy are physical.
This translates into requiring that the energy of the physical solution does not increase in time except from possible source terms.
Energy, in this sense, has the role of a mathematical entropy.\cite{holden2015}
However, the word \emph{entropy} is typically restricted to convex functions of the solution variables.
As has been showed by \citeauthor{ostapenko1999},\cite{ostapenko1999} energy is indeed a convex function of the globally conservative system, but for the locally conservative system it is convex only for subcritical flow.
Because we here cover both cases we use the word energy rather than entropy.

The energy of the 2LSWE reads
\begin{multline}
  E = \frac 1 2 \left(\rho_1 h_1 \abs{\vect u_1}^2 + \rho_2 h_2 \abs{\vect u_2}^2\right)\\
  + g\Biggl[\rho_2h_2\left(\frac 1 2 h_2 + b\right)
  +  \left(\frac 1 2 h_1+h_2+b\right) \rho_1h_1 \Biggr].
  \label{eq:energy}
\end{multline}
This expression is given in terms of parameters that are already solved for in the 2LSWE.
For smooth solutions we may therefore combine the subequations of the 2LSWE to form a conservation law for the energy.
By exchanging the equality in this conservation law by an inequality, it can be fulfilled also by weak, discontinuous solutions.
We obtain
\begin{widetext}
  \begin{multline}
    \pdfun E t + \div \left[\vect q_1\left(g\left(h_1 + h_2 + b\right)
      + \frac 1 2 \abs{\vect u_1}^2\right)
      + \vect q_2\left(g\left(\frac{\rho_1}{\rho_2}h_1 + h_2 + b\right)
    + \frac 1 2 \abs{\vect u_2}^2\right)\right] \\
    \le \vect u_1 \cdot \vect G_{h_1u_1}
    + \vect u_2 \cdot \vect G_{h_2u_2}
    - \frac 1 2 g h_1^2 \pdfun{\rho_1}{t}
    - g h_2 \left(\frac{\rho_1}{\rho_2}h_1
      + \frac 1 2 h_2 \right) \pdfun{\rho_2}{t} \\
    + G_{h_1}\left(g\left(h_1 + h_2 + b\right) - \frac 1 2 \abs{\vect u_1}^2\right)
    + G_{h_2}\left(g\left(\frac{\rho_1}{\rho_2}h_1 + h_2 + b\right)
      - \frac 1 2 \abs{\vect u_2}^2\right),
    \label{eq:energyConservation}
  \end{multline}
\end{widetext}
where $\vect q_i = \rho_ih_i\vect u_i$ for short.

\section{Derivation of Froude numbers from the 2LSWE}
\label{sec:BenjaminAlternative}
In the following, we briefly illustrate the surprising effectiveness of the 2LSWE to predict shock speeds despite its underlying assumption of negligible vertical acceleration.
To do this we apply the Rankine-Hugoniot conditions and the 2LSWE to derive expressions for the leading edge Froude number ($\FrLe$) of two-layer systems with fixed total height.

Shock speeds in two-layer systems with fixed total height is important for instance in lock-exchange and lock-release problems, where a heavy fluid is spreading within a lighter fluid inside a rectangular channel as illustrated in \cref{fig:spreading-channel}.
Such problems have been studied extensively, and there is a large number of results from laboratory experiments available.\citep{rottman1983,huppert1980,shin2004}
Moreover, much theoretical work has been carried out to model the Froude-number for flows inside rectangular channels,\citep{benjamin1968,priede2019,borden2013} which means that this is a good candidate for testing the credibility of shock behaviour in the 2LSWE.

Most previous works have focused on fluids with similar densities such that $\delta \ll 1$ for $\delta$ given by \cref{eq:delta}.
This is referred to as the Boussinesq case.\citep{boussinesq1903}
The most commonly used front condition applied to such flows is the equation for the Froude number given by \citeauthor{benjamin1968},\cite{benjamin1968} \cref{eq:benjamin}.
For instance, \citet{ungarish2011} has applied the Froude number by Benjamin as a boundary condition when solving the 2LSWE for rectangular geometries.
They also generalized this to arbitrary geometries.\citep{ungarish2013}
\begin{figure}
  \centering
  \includegraphics[width=0.7\columnwidth]{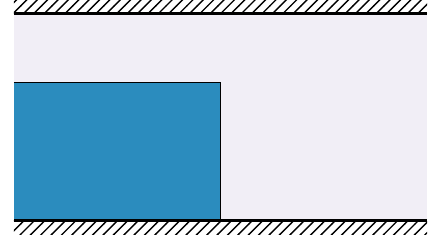}
  \caption{A sketch of the initial condition for the lock-exchange problem: Two-layer shallow-water flow in a rectangular channel.
  The grey fluid is lighter than the blue, and the initial shock is the vertical line beween blue and grey.}
  \label{fig:spreading-channel}
\end{figure}

For the particular problem where the two-layer flow is confined inside a rectangular channel, the sum of the layer depths must be constant; $h_1 + h_2 = H$.
In this case, there are no free surfaces.
We therefore add an additional pressure term $p_0$ that may vary in time and space but is constant in the vertical direction.

We first consider the locally conservative 2LSWE~\eqref{eq:2LSWEGenVel} in one spatial dimension with the added free pressure term,
\begin{subequations}
  \label{eq:2LSWECavity}
  \begin{gather}
    \pdt{h_1} + \pd{x}(h_1u_1) = 0,
    \label{eq:2LSWECavityh1}\\
    \pdt{h_2} + \pd{x}(h_2u_2) = 0,
    \label{eq:2LSWECavityh2}\\
    \pdt{u_1} + \pd{x}\left(\frac{1}{2}u_1^2 + gh_2 + \frac{1}{\rho_1} p_0\right) = 0,
    \label{eq:2LSWECavityu1}\\
    \pdt{u_2} + \pd{x}\left(\frac{1}{2}u_2^2 + gh_2 + \frac{1}{\rho_2} p_0\right) = 0.
    \label{eq:2LSWECavityu2}
  \end{gather}
\end{subequations}
These are the same equations that were used by \citet{rottman1983} to study spreading of gravity currents.
\citeauthor{rottman1983} added \cref{eq:benjamin} as an additional equation for the Froude number at the leading edge, but in the following we will show that a similar expression for $\FrLe$ can be obtained from \cref{eq:2LSWECavity} directly.

With $h_1 + h_2$ constant, the sum of \cref{eq:2LSWECavityh1,eq:2LSWECavityh2} implies that $h_1u_1 + h_2u_2$ is constant in $x$.
If we assume that the total momentum is $0$ at the boundary, \eg due to a wall or because the boundary is at infinity and the fluids were initially at rest, we may set $h_1u_1 + h_2u_2 = 0$.

By use of the Rankine-Hugoniot condition~\eqref{eq:rankineHugoniotScalar} to \cref{eq:2LSWECavityh1,eq:2LSWECavityh2}, we get
\[
  S = u_{2,l} = -\frac{h_{1,l}}{h_{2,l}} u_{1,l},
\]
where, as before, the subscript $l$ indicates the left side of the shock.
We next apply the Rankine-Hugoniot condition to~$\rho_2$\eqref{eq:2LSWECavityu2}~$-$~$\rho_1$\eqref{eq:2LSWECavityu1}, which gives
\begin{multline}
  S\left(\rho_2 S + \rho_1\frac{h_{2,l}}{h_{1,l}}S\right)
  = \frac{1}{2} \left(\rho_2 S^2 - \rho_1 \frac{h_{2,l}^2}{h_{1,l}^2}S^2\right) \\
  + \rho_2 gh_{2,l} - \rho_1 gh_{2,l}.
\end{multline}
After some algebraic manipulation, we find that
\begin{equation}
  \FrLe^2 = \frac{u_{2,l}^2}{g\delta h_{2,l}}
  = \frac{2 (1 - \alpha)^2}{1 - \delta \alpha(2 - \alpha)}
  \label{eq:froudeCavity}
\end{equation}
where $\alpha = h_{2,l}/(h_{2,l} + h_{1,l})$.

We next consider the globally conservative 2LSWE~\eqref{eq:2LSWEGenTot}.
A similar analysis and derivation now gives
\begin{equation}
  \FrLe^2 = \frac{2(1-\alpha)^2(1-\delta\alpha/2)}{1 - 2\delta\alpha(1-\alpha)}.
  \label{eq:froudeCavityTotMom}
\end{equation}
As expected, the different formulation of the 2LSWE leads to a different expressions for the Froude number.

The Boussinesq approximation is achieved by setting $\delta=0$ wherever it is not multiplied by $g$.
In this case \cref{eq:froudeCavity,eq:froudeCavityTotMom} coincide and gives that
\begin{equation}
  \FrLe = \sqrt{2} (1 - \alpha).
  \label{eq:Fr2LSWEBous}
\end{equation}
\Cref{fig:BenjaminVsUs} compares our results from the 2LSWE, \cref{eq:Fr2LSWEBous}, to the model by \citeauthor{benjamin1968},\cite{benjamin1968} \cref{eq:benjamin}.
As can be seen, the difference is small.
\Cref{eq:benjamin} is obtained by balancing forces and does not rely on any assumptions regarding negligible vertical velocities.
The similarity of \cref{eq:benjamin,eq:Fr2LSWEBous} therefore indicates that the breakdown of the shallow-water equations in vicinity of shocks is not so severe as one would think and as has been repeatedly assumed in the literature.\citep{hatcher2014,rottman1983,hoult1972,fannelop1972}
\begin{figure}
  \centering
  \tikzsetnextfilename{BenjaminVsUs}
  \begin{tikzpicture}[
      benjamin/.style={
        color=Dark2-A,
      },
      priede/.style={
        dashdotted,
        color=Dark2-H,
      },
      delta00/.style={
        densely dashed,
        color=Dark2-D,
      },
      delta04/.style={
        dashed,
        color=Dark2-B,
      },
      delta09/.style={
        densely dotted,
        color=Dark2-E,
      },
    ]
    \begin{axis}[
        width = 0.9*\columnwidth,
        height = 0.5*\columnwidth,
        ymin = 0.6,
        ymax = 1.5,
        xmin = 0.0,
        xmax = 0.525,
        xtick={0, 0.1, 0.2, 0.3, 0.4, 0.5},
        xticklabels={0, 0.1, 0.2, 0.3, 0.4, 0.5},
        ytick={1/sqrt(2), 1.0, sqrt(2)},
        yticklabels={$1/\sqrt{2}$, 1, $\sqrt{2}$},
        axis lines = left,
        grid = both,
        no marks,
        domain = 0:0.5,
        samples = 250,
        every axis plot/.append style={line width=1pt},
        xlabel = $\alpha$,
        ylabel = $\FrLe$,
        ylabel style={align=center},
      ]

      \newcommand*{\FrLeLoc}[1]{sqrt(2*(1-x)^2/(1-2*#1*x+#1*x^2))}
      \newcommand*{\FrLeGlob}[1]{sqrt(2*(1-x)^2*(1-#1*x/2)/(1-2*#1*x*(1-x)))}
      \newcommand*{\FrLeBenj}{sqrt((2-x)*(1-x)/(1+x))}
      \newcommand*{\FrLePriede}{sqrt((1 - x)^3/(1/2 - x^2))}

      \addplot+[benjamin]{\FrLeBenj};
      \label{plots:benjamin}
      \addplot+[delta00]{sqrt(2)*(1-x)};
      \label{plots:delta0}

      \legend{Benjamin, 2LSWE}
    \end{axis}

  \end{tikzpicture}
  \caption{Froude numbers calculated from the 2LSWE in the Boussinesq case (\cref{eq:Fr2LSWEBous}) compared to the equation by \citet{benjamin1968} (\cref{eq:benjamin}).}
  \label{fig:BenjaminVsUs}
\end{figure}

One advantage of the 2LSWE is that it does not use the Boussinesq approximation.
The non-Boussinesq case has more recently received attention in the literature,\cite{lowe2005,birman2005} and \cref{eq:froudeCavity,eq:froudeCavityTotMom} could be of interest in this regard.

The treatment presented here is under the assumption of negligible mixing between the layers.
In systems with mixing, \citet{sherwoods2015} has found the spreading is slower because the density difference at the leading edge, and hence the effective gravity, is reduced with time.
With their time-dependent reduced gravity, they found experimentally that $\FrLe = 0.90\pm 0.05$ for $\alpha = 0.37$.
Inserting $\alpha = 0.37$ into \cref{eq:Fr2LSWEBous} we get $\FrLe = 0.89$.
That is, if mixing is taken into account in the shallow water framework by introducing a slowly varying time-dependent density difference and possibly some source terms that do not affect the Rankine-Hugoniot condition, the resulting Froude number at the leading edge is in good agreement with the observed value.

Finally, we note that \citet{priede2019} has also found an expression for the Froude number in the 2LSWE with constant height.
They restricted the analysis to the Boussinesq case and got a result which differs slightly from \cref{eq:Fr2LSWEBous}.
The reason for the deviation is that they rewrote the equations in terms of new variables, $\eta \defeq h_1 - h_2$ and $\vartheta \defeq u_1 - u_2$, and used $\eta$ and $\eta\vartheta$ as conserved quantities before they applied the Rankine-Hugoniot condition.
This changes the weak solutions and hence the shock speed.

\section{Reducing the two-layer systems to effective one-layer systems}
\label{sec:reducingToOneLayer}
In this section, we present a theorem with a constructive proof that demonstrates that it is possible to reduce the 2LSWE into an effective one-layer model while preserving the correct behaviour of shocks and contact discontinuities.
The theorem shows that this decoupling is possible when the depth of one layer becomes large compared to the other layer.
We show that additional closures for the shock velocity are not needed, which differs from previous reductions to one-layer models presented in the literature.

\subsection{The constant-height lemma}
In the following, we denote by $s$ and $d$ the \emph{relatively} shallow and deep layers, respectively.
This means that with $(s,d) = (1,2)$, the top layer is shallow relative to the bottom layer, and vice versa for $(s,d) = (2,1)$.
Further, we let $\spaceav f$ denote the average of $f$ over the region in which it is defined.

In order to state and prove the theorem, we will use a concept we call \emph{source-boundedness}.
We will also use a lemma that states that in the indicated limits of the theorem, the relative height of the deepest layer does not change with time.

\begin{definition}[Source-boundedness]
  \normalfont
  Layer $i \in \{1,2\}$ in a two-layer shallow-water system is source-bounded if there exists $K \in \mathds{R}$ such that the source terms satisfy $\forall h_i > K$,
  \[
    \pd{h_i} \abs{\frac{G_{h_j}}{\rho_i h_i}} < 0
    \quad \text{ and } \quad
    \pd{h_i} \abs{\frac{\vect G_{h_ju_j}}{\rho_i h_i}} < 0,
  \]
  for $j = 1$ and $j=2$.
  \label{def:sourceBounded}
\end{definition}
\begin{lemma}
  \label{lemma:constantHeight}
  Let $(s, d) = (1, 2)$ or $(2, 1)$, $\{D_k\}_{k\in\mathds{N}}$ be a sequence of increasing real numbers, $h_0$ and $f$ be scalar functions, and $\vect q_{1,0}$ and $\vect q_{2,0}$ be vector functions.
  Further, consider a 2LSWE system with initial conditions
  \[
    \begin{aligned}
      h_{dk}(0,\vect x) &= D_k + f(\vect x), \\
      h_{sk}(0,\vect x) &= h_0(\vect x), \\
      \vect q_{1k}(0, \vect x) &=\vect q_{1,0}(\vect x), \\
      \vect q_{2k}(0, \vect x) &= \vect q_{2,0}(\vect x),
    \end{aligned}
  \]
  where layer $d$ is source-bounded and where both layer $d$ and the bottom layer (these are the same if $d=2$) have constant average density.
  Now let $\{(h_{1k}, h_{2k}, \vect q_{1k}, \vect q_{2k})\}_{k\in\mathds{N}}$ be physical solutions to the 2LSWE system.
  If $\{(h_{sk}, h_{dk} - D_k, \vect q_{sk}, \vect q_{dk}/D_k)\}_{k\in\mathds{N}}$ converge and the first and second derivatives are uniformly bounded in the regions where they are well-defined, then
  \[
    \lim_{k \to \infty}\frac{h_{dk}(t, \vect x)
    }{D_k} = 1.
  \]
\end{lemma}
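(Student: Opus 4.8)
The plan is to prove the equivalent statement $\lim_{k\to\infty}\bigl(h_{dk}(t,\vect x)-D_k\bigr)/D_k = 0$, via the decomposition $h_{dk}/D_k = 1 + (h_{dk}-D_k)/D_k$. An increasing real sequence either has a finite limit or tends to $+\infty$; in the former case the asserted conclusion would force $h_{dk}-D_k\to0$, which is incompatible with the initial datum $h_{dk}(0,\vect x)-D_k = f(\vect x)$ unless $f\equiv0$, so the substance of the lemma is the case $D_k\to\infty$, which I take henceforth. It then suffices to show that $\hat h_k \defeq h_{dk}-D_k$ is bounded, uniformly in $k$, at each $(t,\vect x)$ --- in fact locally uniformly --- and then divide by $D_k$.

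This bound is exactly what the hypotheses deliver. The sequence $\{\hat h_k\}_k$ is assumed to converge, hence is bounded; the assumed uniform bounds on the first and second derivatives of $\hat h_k$ give equicontinuity on each region where those derivatives exist, so by a compactness (Arzel\`a--Ascoli) argument the convergence is locally uniform there and the bound $\sup_k\abs{\hat h_k(t,\vect x)}$ is locally uniform. Across the finitely many shock curves one argues identically with the one-sided limits $\hat h_{k,l}$ and $\hat h_{k,r}$, which converge by hypothesis and are therefore bounded. Hence $\sup_k\abs{\hat h_k} \le M(t,\vect x) < \infty$ locally, and $\abs{h_{dk}(t,\vect x)/D_k - 1} \le M(t,\vect x)/D_k \to 0$.

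To connect this with the dynamics --- and to set up what is reused in \cref{sec:reducingToOneLayer} --- note that the deep-layer continuity equation \cref{eq:2LSWEGeneral_h2} (or its $d=1$ analogue), together with the constancy of $D_k$ and of the average density of layer $d$, gives $\pd t(\rho_d \hat h_k) + \div\vect q_{dk} = G_{h_{dk}}$ in the weak sense; dividing by $\rho_d D_k$ yields
\[
  \pd t \frac{\hat h_k}{D_k} + \div\frac{\vect q_{dk}}{\rho_d D_k} = \frac{G_{h_{dk}}}{\rho_d D_k}.
\]
Here the source term is controlled by \cref{def:sourceBounded}: once $D_k > K - \inf_{\vect x} f$ we have $h_{dk} > K$, so $\abs{G_{h_{dk}}/(\rho_d h_{dk})}$ is nonincreasing in $h_d$ and hence bounded, whence $\abs{G_{h_{dk}}/(\rho_d D_k)} = \abs{G_{h_{dk}}/(\rho_d h_{dk})}\,(h_{dk}/D_k) = O(h_{dk}/D_k)$; similarly the rescaled flux $\vect q_{dk}/D_k$ has uniformly bounded divergence off the shock set by hypothesis. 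Since $\hat h_k/D_k\to0$, one reads off $\div(\vect q_{dk}/D_k) - G_{h_{dk}}/(\rho_d D_k)\to0$ in $\mathcal D'$, the consistency relation that the theorem then exploits.

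The main obstacle is that the result cannot be obtained from the PDE alone by a Gr\"onwall estimate: the flux term $\div(\vect q_{dk}/D_k)$ is only $O(1)$ in $D_k$, not $o(1)$, so integrating the displayed identity in time does not by itself keep $\hat h_k/D_k$ small. The genuinely essential input is the a priori boundedness of $\hat h_k = h_{dk}-D_k$ itself --- which is precisely why it is this combination, not $h_{dk}$, that enters the convergence hypothesis. Beyond that, the only care required is to fix the topology in which the rescaled tuple converges and to handle the shock set so that ``convergent $\Rightarrow$ bounded'' applies there as well.
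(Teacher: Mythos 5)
Your proof is correct, but it takes a genuinely different and far more elementary route than the paper's. The paper proves the lemma by an energy argument: the scaled energy $\escaled = 2E_k/D_k^2$ tends at $t=0$ to the minimum value $g\rho_d$ compatible with the constraint $\spaceav{h_{dk}/D_k}=1$ (a constraint preserved for all $t$ in the limit by source-boundedness of the mass sources); the entropy inequality \eqref{eq:energyConservation} forbids $\escaled$ from increasing, so it stays at that minimum for all $t$; and since the energy is convex and minimized only by the spatially uniform profile, any non-vanishing deviation of $h_{dk}/D_k$ from $1$ would contradict minimality. You instead observe that the hypothesis already assumes $h_{dk}-D_k$ converges, hence is bounded at each $(t,\vect x)$ (with the uniform derivative bounds and the one-sided limits handling the piecewise-smooth structure), so that $\abs{h_{dk}/D_k-1}\le M(t,\vect x)/D_k\to 0$ once $D_k\to\infty$. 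Both are valid proofs of the statement as written; yours makes transparent that the stated convergence hypothesis renders the conclusion nearly immediate, and you correctly flag a point the paper leaves implicit, namely that $D_k$ must diverge (an increasing but bounded $D_k$ is incompatible with the conclusion unless $f\equiv 0$). What the paper's longer argument buys is independence from the pointwise convergence of $h_{dk}-D_k$: it extracts the uniformity of the deep layer from the dynamics (entropy dissipation plus convexity of the energy), which is the physically meaningful content and would survive a weakening of the convergence hypothesis; your closing discussion of the rescaled continuity equation points in that direction but, as you note, is not needed for the lemma itself.
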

\begin{proof}
  First, note that since the second derivatives are uniformly bounded, the mean value theorem implies that the first derivatives are equicontinuous.
  Then, since the first derivatives are also bounded, the Arzelà-Ascoli theorem gives that there is a subsequence where the first derivatives are uniformly convergent.\citep{dunford1957}
  This implies that we can interchange the order of limits and differentiation.\citep{rudin1976}
  From the definition of the energy in \cref{eq:energy} it is clear that all terms are non-negative.
  This, in addition to the fact that the energy is a convex function of the heights, means that for a system with constant bottom topography and $\spaceav{h_i} = 1$ for $i \in \{1,2\}$, the energy is bounded from below by the height and momentum distributions that give $E = g\rho_i/2$.
  We let $\escaled = 2E_k/D_k^2$ be a scaled energy, and it follows by insertion that $\escaled(0, \vect x) \to g\rho_d$ as $k \to \infty$.
  That is, the scaled energy $\escaled(0, \vect x)$ approaches the minimal for a system with $\spaceav{h_{dk}(t, \vect x)/D_k} = 1$ in the limit when $k\to\infty$.
  Source-boundedness of the mass source terms implies that $\spaceav{h_{dk}/D_k} \to 1$ for all $t \in \mathds{R}$ as $k \to \infty$.

  Further, since a physical solution must satisfy the energy conservation \eqref{eq:energyConservation}, it similarly follows by use of the source-boundedness that
  \[
    \pdfun {\escaled(0, \vect x)} t \leq 0
  \]
  in the limit when $k\to\infty$.
  Because all the terms in \cref{eq:energy} are non-negative and because the right hand side of the scaled version \cref{eq:energyConservation} remain 0 as long as the scaled energy remains minimal, we must have
  \begin{equation}
    \lim_{k \to \infty} \abs{\escaled(t, \vect x) - \escaled(0, \vect x)} = 0,
    \label{eq:energyRemainsMinimal}
  \end{equation}
  Assume that $\exists \varepsilon > 0$ and $\forall N \in \mathds{N}$, $\exists k > N$, such that
  \[
    \abs{\frac{h_{dk}(t, \vect x)}{D_k} - 1} > \varepsilon.
  \]
  This implies that $h_{dk}(t, \vect x)/D_k$ deviates from 1 by a term which does not vanish in the limit $k \to \infty$.
  This contradicts \cref{eq:energyRemainsMinimal}, as discussed above, so
  $h_{dk}(t, \vect x)/D_k \to 1$ for all $t \in \mathds{R}$ and $\vect x \in \mathds{R}^2$ as $k \to \infty$.
\end{proof}

\subsection{The one-layer approximation theorem}
In the following theorem, we show that in the similar limits as above, the 2LSWE may be reduced to the locally conservative 1LSWE~\eqref{eq:1LSWELoc} with a reduced gravity, $g\to\delta g$ with $\delta$ as defined in \cref{eq:delta}.
In the case where the top layer is shallow relative to the bottom layer, the bottom topography term drops out of the equation governing the top layer.
As before, we use $s\in\{1, 2\}$ to indicate which layer is shallow relative to the other, such that
\begin{subequations}
  \label{eq:1LSWELocGen}
  \begin{gather}
    \pd t \rho_sh_s + \div (\rho_sh_s\vect u_s) = G_{h_s}, \\
    \begin{multlined}[b]
      \pd t \vect u_s + (\vect u_s \cdot \nabla)\vect u_s
      + \delta g \nabla(h_s + b^{s-1})\\
      = \frac{1}{\rho_sh_s}(\vect G_{h_su_s} - \vect u_s G_{h_s}).
    \end{multlined}
  \end{gather}
\end{subequations}

\begin{theorem}
  \label{thm:oneLayerApproximation}
  Let $(s, d) = (1, 2)$ or $(2, 1)$, $\{D_k\}_{k\in\mathds{N}}$ be a sequence of increasing real numbers, $h_0$ and $f$ be scalar functions, and $\vect q_{1,0}$ and $\vect q_{2,0}$ be vector functions, all defined on $\Omega \subseteq \mathds{R}^n$.
  Further, consider a 2LSWE in the form of \cref{eq:2LSWEGenVel} or \cref{eq:2LSWEGenTot} with initial conditions
  \[
    \begin{aligned}
      h_{dk}(0,\vect x) &= D_k + f(\vect x), \\
      h_{sk}(0,\vect x) &= h_0(\vect x), \\
      \vect q_{1k}(0, \vect x) &=\vect q_{1,0}(\vect x), \\
      \vect q_{2k}(0, \vect x) &= \vect q_{2,0}(\vect x),
    \end{aligned}
  \]
  in which layer $d$ is source-bounded and the density of layer $d$ is constant.
  Now let $\{(h_{1k}, h_{2k}, \vect q_{1k}, \vect q_{2k})\}_{k\in\mathds{N}}$ be physical solutions to the 2LSWE such that $\vect q_{dk}$ satisfies the boundary condition
  \begin{equation}
    \abs{\vect q_{dk}(t, \vect x)} \le K
    \qquad\text{for}\qquad \vect x\in\partial\Omega,
    \label{eq:boundaryMomentum}
  \end{equation}
  with $K \in \mathds{R}$ independent of $k$ and where $\partial\Omega$ may be at infinity.

  If $\{(h_{sk}, h_{dk} - D_k, \vect q_{sk}, \vect q_{dk}/D_k)\}_{k\in\mathds{N}}$ converge and the first and second derivatives are uniformly bounded in the regions where they are well-defined, then $(h_s, \vect u_{sk}) \to (h, \vect u)$ where $(h, \vect u)$ solves \cref{eq:1LSWELocGen} in the weak sense, $\vect u_{dk}\to \vect 0$, and $h_{dk}-D_k \to \left(C - \rho_s^{d-1}h_s - \rho_2^{d-1}b\right)/\rho_d^{d-1}$, where $C$ is constant in space.
  If the domain on which the solution is defined is infinite in range or the mass source terms are zero, then $C$ is equal to $C = \spaceav{\left[\rho_s^{d-1}h_s + \rho_d^{d-1}f + \rho_2^{d-1}b\right]}_{t=0}$.
\end{theorem}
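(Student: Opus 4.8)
The plan is to combine \cref{lemma:constantHeight} with the energy requirement to force the deep-layer velocity to zero, and then pass to the limit in the remaining equations. I would take $(s,d)=(1,2)$; the case $(s,d)=(2,1)$ follows by relabelling, the only genuine difference being that the bottom topography couples only to the bottom layer, so it survives in the shallow-layer momentum balance exactly when the shallow layer is the lower one --- this is what the exponent in $b^{s-1}$ in \cref{eq:1LSWELocGen} records. First I would reuse the compactness argument from the proof of \cref{lemma:constantHeight}: uniformly bounded second derivatives of the scaled tuple $(h_{sk},h_{dk}-D_k,\vect q_{sk},\vect q_{dk}/D_k)$, together with its convergence, give via the mean value and Arzel\`a--Ascoli theorems uniform convergence of the first derivatives on the smooth pieces, so that limits commute with differentiation, the nonlinear fluxes converge, and the weak formulation of the 2LSWE --- the PDEs \cref{eq:2LSWEGenVel} or \cref{eq:2LSWEGenTot} away from shocks together with the Rankine--Hugoniot relations of \cref{sec:RankineHugoniot} across them --- survives passage to the limit. \cref{lemma:constantHeight} then gives $h_{dk}/D_k\to1$, so writing $h_{dk}=D_k+(h_{dk}-D_k)$ everywhere (only the bounded increment then appears under gradients) and using that $\rho_d$ is constant, $\vect u_{dk}=\vect q_{dk}/(\rho_d h_{dk})\to\vect w$ for some limit field $\vect w$.

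The heart of the argument is $\vect w\equiv\vect 0$. I would record three facts. (i) Dividing the deep-layer mass equation by $\rho_d D_k$ and letting $k\to\infty$, using that $h_{dk}/D_k\to1$ is $t$-independent and that source-boundedness makes the deep mass source negligible against $D_k$, gives $\div\vect w=\vect 0$. (ii) On $\partial\Omega$, \cref{eq:boundaryMomentum} gives $\abs{\vect u_{dk}}\le K/(\rho_d h_{dk})\to0$ because $h_{dk}\to\infty$ there, so $\vect w=\vect 0$ on $\partial\Omega$. (iii) At $t=0$ the deep-layer momentum equals the fixed, $k$-independent function ($\vect q_{1,0}$ or $\vect q_{2,0}$), while $h_{dk}(0,\cdot)=D_k+f\to\infty$, so $\vect w(0,\cdot)=\vect 0$. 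I would then pass to the limit in the deep-layer velocity balance --- directly in \cref{eq:2LSWEGenVel_u2} for the locally conservative form, or by dividing \cref{eq:2LSWEGenTot_sumMom} by $D_k$ and combining with the mass equations for the globally conservative form --- and both reduce to the \emph{same} equation $\pdt{\vect w}+(\vect w\cdot\nabla)\vect w+\nabla\Phi_d=\vect 0$, with $\Phi_d$ a scalar built from $h_s$, $h_{dk}-D_k$ and $b$. Testing this against $\vect w$ and using (i) and (ii) makes the convective flux and $\nabla\Phi_d$ integrate to zero, so $\frac{\diff}{\diff t}\int\tfrac12\abs{\vect w}^2$ vanishes on smooth pieces and is non-positive across shocks by the inherited energy requirement; with (iii) this forces $\vect w\equiv\vect 0$, \ie $\vect u_{dk}\to\vect 0$.

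With $\vect w\equiv\vect 0$ the limiting deep-velocity equation collapses to $\nabla\Phi_d=\vect 0$, \ie $\rho_s^{d-1}h_s+\rho_d^{d-1}(h_{dk}-D_k)+\rho_2^{d-1}b\to C(t)$ with $C$ constant in space, which is the asserted formula for $\lim(h_{dk}-D_k)$. I would then take the limit in the shallow-layer equations: the mass equation converges verbatim to the continuity equation of \cref{eq:1LSWELocGen}, while the shallow-layer velocity equation --- \cref{eq:2LSWEGenVel_u1} in the locally conservative case, or \cref{eq:2LSWEGenTot_diffVel} with $\vect u_{dk}\to\vect 0$ inserted in the globally conservative case --- has its inter-layer pressure coupling rewritten using the formula just obtained for $\lim(h_{dk}-D_k)$, so that the $h_d$, $b$ and $C(t)$ pieces combine into exactly $\delta g\,\nabla(h_s+b^{s-1})$ (the $b$ term cancelling precisely when $s=1$), while the deep-indexed source terms drop by source-boundedness and the shallow-indexed ones --- including $-\vect u_s G_{h_s}$, carried through $\vect J$ in the global case --- survive. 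That is \cref{eq:1LSWELocGen}, so $(h_{sk},\vect u_{sk})\to(h,\vect u)$ with $(h,\vect u)$ a weak solution of it. Finally, averaging the deep-layer mass equation gives $\spaceav{h_{dk}(t)}=\spaceav{h_{dk}(0)}=D_k+\spaceav{f}$ when the mass sources vanish (or, on an unbounded domain, matching to the undisturbed far field); combining this with the formula for $\lim(h_{dk}-D_k)$ and conservation of $\rho_s h_s$ then pins $C$ to $\spaceav{[\rho_s^{d-1}h_s+\rho_d^{d-1}f+\rho_2^{d-1}b]}_{t=0}$.

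The step I expect to be the main obstacle is proving $\vect w\equiv\vect 0$ rigorously: the deep-layer kinetic energy is subleading in the scaled energy $\escaled$ and so is \emph{not} controlled by \cref{lemma:constantHeight} directly; one must instead argue that the weak limit is an energy-admissible solution of the limiting $\vect w$-equation, so that $\int\tfrac12\abs{\vect w}^2$ is non-increasing and, starting from $0$, stays $0$. This is where the uniform control of the first and second derivatives, and the interchange of limits with the Rankine--Hugoniot/weak formulation, are essential; in two dimensions it also requires the care about normal versus transversal velocity components already flagged for \cref{eq:2LSWEGenTot_diffVel}. The remaining work is bookkeeping: matching the $\rho^{\,d-1}$ powers and the $b^{s-1}$ exponent consistently in both the $d=1$ and $d=2$ cases, and checking that \cref{def:sourceBounded} is exactly strong enough to discard every deep-indexed source contribution while leaving the shallow-indexed ones intact.
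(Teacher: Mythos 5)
Your overall architecture matches the paper's: establish convergence on smooth regions, show $\vect u_{dk}\to\vect 0$, deduce that $\rho_s^{d-1}h_s+\rho_d^{d-1}(h_{dk}-D_k)+\rho_2^{d-1}b$ becomes constant in space, and substitute into the shallow-layer equation. Your route to $\vect u_{dk}\to\vect 0$ is, however, genuinely different: you test the limiting deep-velocity equation against $\vect w$ and run a kinetic-energy argument, whereas the paper shows that $\div\vect u_{dk}\to 0$ (from mass conservation and \cref{lemma:constantHeight}), that the generalized curl $w_{dk}=\nabla\wedge\vect u_{dk}\to 0$ (by applying $\nabla\wedge$ to the velocity equation, using source-boundedness and the vanishing of the initial curl in the limit), and then reconstructs $\vect u_{dk}$ from its divergence, curl and boundary data via the fundamental theorem of geometric calculus, so that these three vanishing ingredients force $\vect u_{dk}\to\vect 0$ pointwise. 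The paper's route needs no energy estimate on the deep layer at all; yours hinges on precisely the step you flag as the obstacle, namely that the limit $\vect w$ is an energy-admissible weak solution of its own equation so that $\int\tfrac12\abs{\vect w}^2$ is non-increasing across shocks. That admissibility is not inherited from \cref{eq:energyConservation}, because the deep-layer kinetic energy is $O(D_k)$ while the scaled energy $\escaled$ controls only the $O(D_k^2)$ potential part; as written, this step is a gap rather than a proof.

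The more consequential omission is the shock analysis. You assert that the Rankine--Hugoniot relations ``survive passage to the limit,'' but this is exactly the nontrivial content of the theorem: the different formulations of the 2LSWE have different Hugoniot loci, and one must show that in the limit they all collapse onto the Hugoniot locus of the \emph{locally conservative} 1LSWE~\eqref{eq:1LSWELocGen} rather than, say, the globally conservative one. The paper does this by writing the full set of jump conditions in the form \cref{eq:2LSWECombinedRH}, isolating correction terms $g_1,g_2,g_3$ proportional to $\gamma=1/\av{h_d}$, computing them explicitly for each formulation in \cref{app:2LSWE-Rankine-Hugoniot}, and invoking \cref{lemma:constantHeight} to conclude $\gamma\to 0$. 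Without this computation your proposal does not establish that the limit is a weak solution of \cref{eq:1LSWELocGen} at discontinuities --- which is the claim that removes the need for empirical front conditions and yields $\FrLe=\sqrt 2$.
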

\begin{proof}
  First, we note that weak solutions of \cref{eq:2LSWEGenVel,eq:2LSWEGenTot} will be piecewise differentiable and their states on both sides of a discontinuity are connected by a \emph{Hugoniot locus}.
  A Hugoniot locus at some location in phase space is defined as all those states for which there is a shock speed that satisfies the Rankine-Hugoniot condition.\citep{holden2015}

  To prove the theorem, it is therefore sufficient to show i) local convergence for regions where the solution is differentiable and ii) that the states that are allowed by the Hugoniot loci of the 2LSWE~(\eqref{eq:2LSWEGenVel} and~\eqref{eq:2LSWEGenTot}) converge to those of the 1LSWE~\eqref{eq:1LSWELocGen}.
  As before, we may interchange the order of limits and differentiation since the second derivatives are uniformly bounded.

  We will first prove i).
  This will be done by proving that $\vect u_{dk} \to \vect 0$ in the limit $k\to \infty$ by the use of the fundamental theorem of geometric calculus.
  The reader is referred to \citet{doran2003} for an overview of this branch of mathematics.
  The purpose of using this theorem is to give a way to explicitly express a vector quantity in terms of its divergence, curl and boundary conditions.

  From conservation of mass and through source-boundedness and \cref{lemma:constantHeight}, we get that
  \begin{align}
    \div \left(\frac{\vect q_{dk}}{D_k}\right)
    = \frac{G_{h_d}}{D_k} -\pd t \frac{\rho_{dk} h_{dk}}{D_k}
    \xrightarrow{k\to \infty} 0 \notag \\
    \implies \div \vect u_{dk} \xrightarrow{k\to\infty} 0.
    \label{eq:udiv}
  \end{align}
  Next, we show that also the curl of $\vect u_{dk}$ vanish in the limit $k\to\infty$.
  In the following, we use $ A\wedge B$ to denote the wedge product, or exterior product, of $A$ and $B$, and $A\odot B$ to denote their geometric product.
  Applying $\nabla\wedge$, a generalized curl, from the left of the velocity equation of \cref{eq:2LSWEGenVel} yields
  \begin{multline}
    \pdfun{w_{dk}}{t}
      + \nabla\wedge I^{-1}\odot\vect u_{dk} \wedge I^{-1}\odot w_{dk} \\
    = \frac{\nabla\wedge\left(\vect G_{h_du_d}
      - \vect u_{dk} G_{h_d}\right)}{\rho_{dk}h_{dk}},
    \label{eq:ucurl}
  \end{multline}
  where $w_{dk} \equiv \nabla\wedge\vect u_{dk}$ is a bivector which is equal in magnitude to the curl of $\vect u_{dk}$ but well-defined in any dimension.
  Here $I^{-1} = \vect e_m \wedge \dotsb \wedge \vect e_1$, where $\vect e_i$ is the unit vector in direction $i$ and $m$ is the number of dimensions.
  Source-boundedness and the fact that $w_{dk} = 0$ at $t=0$ implies that $w_{dk}\to \vect 0$ in the limit $k\to\infty$.

  From the Helmholtz theorem, we know that a vector field defined on a finite domain or which goes sufficiently fast to $\vect 0$ is uniquely specified by its boundary condition, curl and divergence.
  Using techniques from geometric calculus, we can give an analytic expression.
  The fundamental theorem of geometric calculus states that\citep{doran2003}
  \begin{align}
    \oint_{\partial V} L(I_{m-1}(\vect x'))\diff^{m-1}x'  \int_V \dot{L}(\dot{\nabla}\odot I_m) \diff^{m}x',
    \label{eq:fundamentalThm}
  \end{align}
  where $L$ is any linear function,
  $I_m$ is the pseudoscalar of the tangent space to $V$ and $I_{m-1}(\vect x)$ is the pseudoscalar of the tangent space to $\partial V$ at $\vect x$.
  The vector derivative is $\nabla\odot A = \nabla\cdot A + \nabla \wedge A$, and the overdot indicates where it acts.
  That is, the integrand on the right hand side of \cref{eq:fundamentalThm} is $\sum_i\partial_i L(\vect e_i\odot I_m)$.
  See for instance the textbook by \citeauthor{doran2003}.\cite{doran2003}

  Let $V$ be some region where $u_{dk}$ is differentiable and let
  \begin{equation}
    L(A) = G\odot A \odot\vect u_{dk} =
    \frac{\vect x' - \vect x}{S_{m-1}\abs{\vect x' - \vect x}^m}\odot A \odot \vect u_{dk},
    \label{eq:L}
  \end{equation}
  where $G$ is the Green's function for the vector derivative, meaning that $\nabla\odot G(x', x) = \delta(x- x')$, and $S_{m-1}$ is the volume of the $(m-1)$-sphere.
  \Cref{eq:fundamentalThm} then states that
  \begin{widetext}
  \begin{multline}
    \vect u_{dk}(\vect x) = \frac{I_m^{-1}}{S_{m-1}}
    \odot\Biggl\{
      (-1)^m\int_V\frac{\vect x' - \vect x}{\abs{\vect x' - \vect x}^m}
      \odot I_m\odot \left[
        \nabla\cdot\vect u_{dk}(\vect x')
        + w_{dk}(\vect x')
      \right]\diff^mx' \\
      +\oint_{\partial V}\frac{\vect x' - \vect x}{\abs{\vect x' - \vect x}^m}
      \odot I_{m-1}(\vect x')\odot \vect u_{dk}(\vect x')
      \diff^{m-1}x'
    \Biggr\},
    \label{eq:udAnalytic}
  \end{multline}
  \end{widetext}

  The surface integral can be decomposed into one vector component whose integrand is proportional to $\vect u_{dk}\cdot\n$ and one triplet-vector component whose integrand is proportional to $\vect u \wedge \n$.
  Only the vector component will contribute to $\vect u_{dk}$.
  To show that $\vect u_{dk} \to 0$, it remains only to show that the last integral in \cref{eq:udAnalytic} goes to zero.
  This is proved in part ii) by showing that $\lim_{k\to\infty}\ndot u_{dk}$ is continuous.
  By applying \cref{eq:fundamentalThm} with $L$ given by \cref{eq:L} on a domain which does not include $\vect x$ the only contribution comes from surface integral in the limit $k\to \infty$, because $\lim_{k\to\infty} \nabla\cdot\vect u_{dk} + w_{dk} = 0$ everywhere.
  $I_{m-1}$ has opposite sign on opposite sides on surfaces, so surface integrals from neighbouring domains cancel as $\lim_{k\to\infty} \vect u_{dk}$ is continuous.
  Thus, we can extend the integral over $\partial V$ to an integral over $\partial \Omega$ by applying the fundamental theorem of geometric calculus in the neighboring domains.
  From \cref{eq:boundaryMomentum} with lemma 1, we get that $\lim_{k\to\infty}\vect u_{dk}$ must vanish on $\partial\Omega$.
  Hence, $\lim_{k\to\infty}\vect u_{dk} = \vect 0$ everywhere.

  From the momentum equations in \cref{eq:2LSWEGenVel}, then,
  \begin{multline}
    \nabla \left[\rho_{1k}^{d-1}h_{1k} + \rho_{2k}^{d-1}(h_{2k} + b)\right]
    = \Biggl(\frac{\vect G_{h_du_d} - \vect u_{dk} G_{h_d}}{\rho_{dk}h_{dk}} \\
    - \pdfun{\vect u_{dk}}{t}
    - (\vect u_{dk} \cdot \nabla)\vect u_{dk}\Biggr)\rho_{dk}^{d-1}
    \to \vect 0,
  \end{multline}
  and so in the limit $k\to\infty$, $\rho_s^{d-1}h_s + \rho_d^{d-1}(h_{dk} - D_k) + \rho_2^{d-1}b$ is constant in space.
  Finally, plugging this into the equation for $\vect u_s$ in \cref{eq:2LSWEGenVel} or \cref{eq:2LSWEGenTot}, we get \cref{eq:1LSWELocGen}.
  In the regions where the solution is differentiable, the various formulations of the 2LSWE, \cref{eq:2LSWEGeneral,eq:2LSWEGenVel,eq:2LSWEGenTot}, are equivalent.
  This completes the proof of i).

  For the proof of ii), we will compare the Hugoniot loci of the 2LSWE in the limit $k\to\infty$ to the Hugoniot locus of \cref{eq:1LSWELocGen}.
  Let $\gamma \defeq 1/\av{h_d}$.
  In \cref{app:2LSWE-Rankine-Hugoniot}, we show that the full set of Rankine-Hugoniot conditions for the 2LSWE \cref{eq:2LSWEGenVel,eq:2LSWEGenTot} may be written as
  \begin{subequations}
    \begin{align}
      S\jump{\rho_s h_s} &= \nd\jump{\rho_s h_s\vect u_s}, \\
      \nonumber
      S\nd\jump{\vect u_s} &= \jump{\frac 1 2 (\ndot u_s)^2 + \delta g h_s} \\
                           &+ g_1(\gamma, S, h_s, \ndot u_s, \ndot u_d), \\
      \jump{\rho_1^{d-1}h_1 + \rho_2^{d-1}h_2} &= g_2(\gamma, S, h_s, \ndot u_s, \ndot u_d), \\
      S\nd \jump{\vect u_d} &= g_3(\gamma, S, h_s, \ndot u_s, \ndot u_d),
    \end{align}
    \label{eq:2LSWECombinedRH}
  \end{subequations}
  where
  \begin{equation}
    g_3 = \gamma S \jump{h_d} \left(S - \av{\ndot u_d}\right).
  \end{equation}
  For \cref{eq:2LSWEGenVel},
  \begin{align}
    g_1 &= \gamma \jump{h_d}(S - \av{\ndot u_d})^2, \\
    g_2 &= \frac{\rho_2^{d-1}}{g} g_1.
  \end{align}
  For \cref{eq:2LSWEGenTot} with $d=1$,
  \begin{equation}
    g_1 = \left(\gamma\jump{h_1}(S - \av{\ndot u_1})^2
      + \delta g\, g_2\right)\n
  \end{equation}
  and
  \begin{multline}
    g_2 = \frac{\gamma}{g}\Biggl(
      \frac{S}{\rho_1}\jump{\rho_2h_2\ndot u_2}
      - \frac{1}{\rho_1}\jump{\rho_2h_2(\ndot u_2)^2} \\
      + \jump{h_1}\left(S\av{\ndot u_1} - \av{(\ndot u_1)^2}\right) \\
      - g\av{h_2}\jump{h_1} - \frac{\rho_2}{2\rho_1}\jump{h_2^2} \\
    + \jump{h_1}\left(S-\av{\ndot u_1}\right)\left(S - 2\av{\ndot u_1}\right)\Biggr).
  \end{multline}
  And finally, for \cref{eq:2LSWEGenTot} with $d=2$,
  \begin{equation}
    g_1 = \gamma \jump{h_2}(S - \av{\ndot u_2})^2
  \end{equation}
  and
  \begin{multline}
    g_2 = \frac{\gamma}{g}\Biggl(
      S\jump{\rho_1h_1\ndot u_1}
      - \jump{\rho_1h_1(\ndot u_1)^2} \\
      + S\rho_2 \jump{h_2}\av{\ndot u_2}
      - \rho_2\jump{h_2}\av{(\ndot u_2)^2} \\
      - \jump{\frac 1 2 g \rho_1 h_1^2}
      - \rho_1g\av{h_1}\jump{h_2} \\
    + \rho_2\jump{h_2}(S - \av{\ndot u_2})(S - 2 \av{\ndot u_2})\Biggr).
  \end{multline}
  In particular, we note that $g_1, g_2$, and $g_3$ vanish for $\gamma = 0$ in all cases.

  Next, we notice that \cref{eq:2LSWECombinedRH} with $\gamma = 0$ is exactly the Rankine-Hugoniot relations for the locally conservative 1LSWE~\eqref{eq:1LSWELocGen} together with the conditions that $\rho_1^{d-1}h_1 + \rho_2^{d-1}h_2$ is constant and $\vect u_d = \vect 0$.
  From \cref{lemma:constantHeight}, it follows that $\lim_{k\to\infty} \gamma = 0$.
  Thus the Hugoniot loci match, and this concludes the proof of the theorem.
\end{proof}

\subsection{Discussion of the theorem}
\Cref{thm:oneLayerApproximation} shows that we may approximate the thinnest layer of the 2LSWE with the locally conservative 1LSWE where $g \to (1 - \rho_1/\rho_2)g$ according to \cref{eq:1LSWELocGen}.
The approximation becomes more accurate when the depth of the deepest layer is increased without increasing momentum or other key properties.
\Cref{fig:theorem-cases} shows a sketch of how the two-layer cases converge to one-layer cases when we increase the ``depth'', $D_k$.
\begin{figure}
  \centering
  \includegraphics[width=1.0\columnwidth]{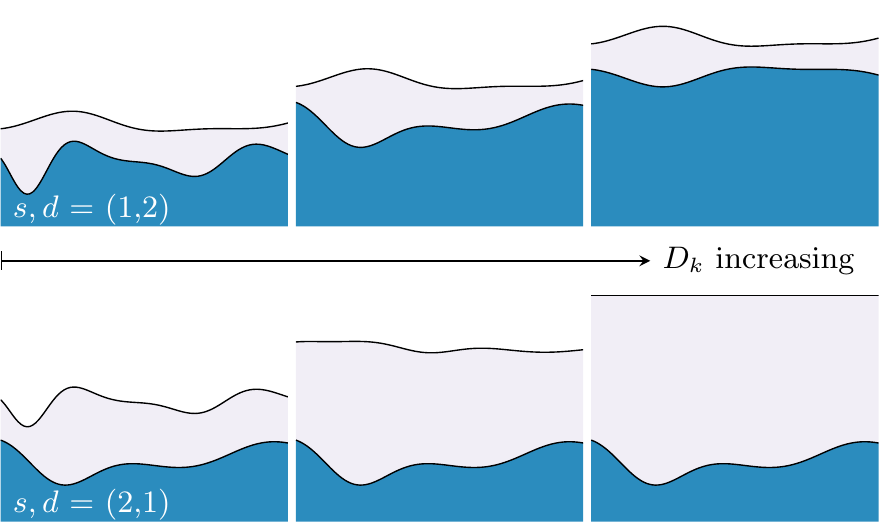}
  \caption{A sketch of how the two layers converge to one-layer cases with increasing $D_k$ for both of the cases $(s,d)=(1,2)$ and $(s,d)=(2,1)$.}
  \label{fig:theorem-cases}
\end{figure}

The interesting part about \cref{thm:oneLayerApproximation} is not that smooth solutions of the 2LSWEs can be approximated by solutions of the 1LSWE.
It is rather that a particular form of the 1LSWE, the locally conservative form, also captures weak solutions, meaning that it gives the correct shock speeds and relations between height- and velocity-distributions at either sides of discontinuities.
This is important, because while 1LSWE has been used to model two-layer spreading before, it has always been under the assumption that one must use additional equations at discontinuities in order to account for the effects of the additional layer.

A surprising implication of this result is that it suggests that the shallow water framework works better to describe shocks than one would anticipate from the assumption of negligible vertical acceleration.
By analytical and experimental considerations not related to the shallow water framework, it has been found that the Froude number at the leading edge of a spreading fluid in a two-layer system lies in the range $[1, \sqrt{2}]$.\citep{benjamin1968,ungarish2017,fay2007,fannelop1972,hoult1972,hatcher2014}
\Cref{thm:oneLayerApproximation} implies that this is also true in the shallow water model.
Using the Rankine-Hugoniot condition of the locally conservative 1LSWE we get $\FrLe = \sqrt{2}$.

From a practical standpoint, the result presented in this paper makes it more straightforward to use the shallow-water framework to model two-layer flow with discontinuous distributions, such as oil-spills.
Previous numerical schemes which have been created to ensure that the height- and velocity-distributions satisfy front conditions, which typically involves $\FrLe$, have had to track the position of the leading edge and alter the solution.\citep{hatcher2014,hatcher20013}
In contrast, when using the 1LSWE, which correctly captures shocks of 2LSWE, one automatically obtains numerical solutions that satisfy the Rankine-Hugoniot conditions and hence satisfies the front-condition $\FrLe = \sqrt{2}$.

Finally, we remark that the mathematical tools used to prove \cref{thm:oneLayerApproximation} are not directly applicable to the layerwise formulation of the 2LSWE~\eqref{eq:2LSWEGeneral}.
One way to possibly find if there is a one-layer model also for the layerwise 2LSWE is to viscously regularize the equations by an added viscosity.
How viscosity looks in the shallow water framework is for instance given in.\citep{marche2007}
Adding viscosity smooths out discontinuities and renders the interaction term $h_1\nabla h_2$ well-defined.
The equations can then be investigated numerically by studying how shocks emerge when the viscosity coefficient is reduced.
They can also be investigated analytically by looking at travelling wave solutions inside the emerging shocks.

\section{Cases for the one-dimensional dam-break problem}
\label{sec:cases}
In this section we present the cases that will be used to investigate \cref{thm:oneLayerApproximation} numerically.
The cases represent variations of the one-dimensional \emph{dam-break problem}.\citep{leveque02}
In the \emph{two-layer} dam-break problem, a lighter fluid of height $h_1$ spreads on top of a heavier fluid of height $h_2$ as shown in \cref{fig:dambreak}.
The problem has been frequently used in the literature as a benchmark case for spreading models.\citep{joshi2018,soares-frazao2012,zhou2004}
\begin{figure}
  \centering
  \includegraphics[width=0.9\columnwidth]{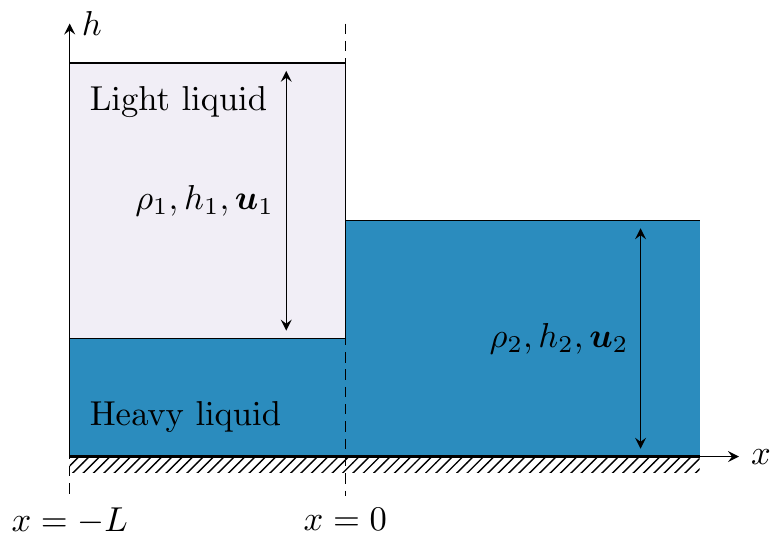}
  \caption{A simple sketch of the one-dimensional dam-break problem.}
  \label{fig:dambreak}
\end{figure}

In the following, we first consider the dam-break problem in an unrestricted spatial domain (``Case 0'').
This case will be used for convergence analyses.
We next consider the dam-break problem with a reflective wall boundary-condition (``Case R'' for ``reflective''), which is used both to compare qualitative differences between the forms of the 1LSWE and 2LSWE and to compare results with experimental data on two-layer spreading.
An overview of the cases is provided in \cref{fig:case-overview}.

\begin{figure}
  \centering
  \includegraphics[width=1.0\columnwidth]{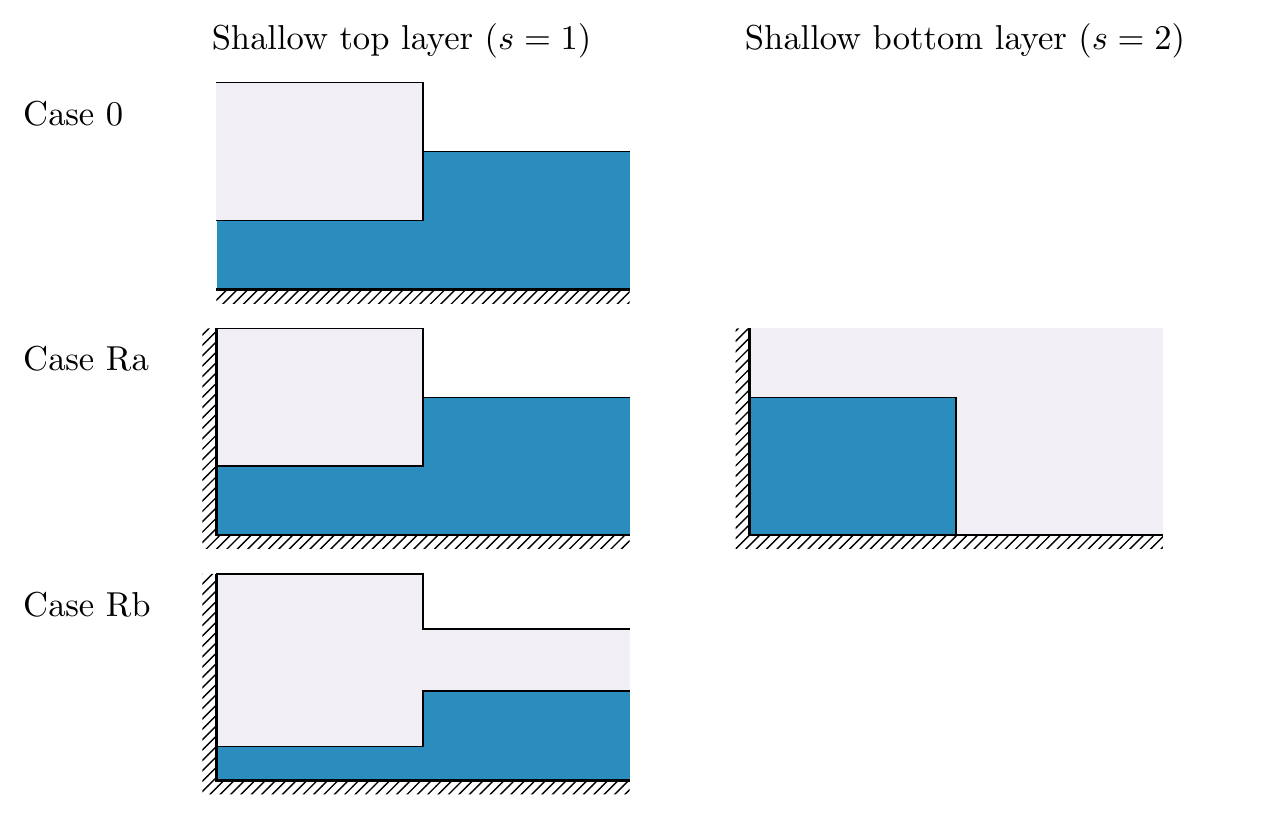}
  \caption{A tabular overview of initial conditions for the various test cases.
    The cases Ra, Rb and Rc all have reflecting walls to the left, and differ in the initial configuration of the fluids.
  Note that Case Rc uses the same initial conditions as Ra with $s=1$.}
  \label{fig:case-overview}
\end{figure}

\subsection{Case 0: Dam-break in an unrestricted spatial domain}
\label{sec:damBreak}
The initial conditions for the standard, one-dimensional dam-break problem that is not restricted in the flow-direction are
\begin{equation}
  \label{eq:dambreak}
  \begin{split}
    h_1(t=0, x) &=
    \begin{cases}
      h_0 & \text{if } x \le 0, \\
      0   & \text{if } x > 0,
    \end{cases} \\
    h_2(t=0, x) &= H - (1-\delta)h_1,\\
    u_1(t=0, x) &= 0, \\
    u_2(t=0, x) &= 0,
  \end{split}
\end{equation}
where $h_0$ is constant.

In this particular case, the corresponding one-layer problem has self-similar analytic solutions for both variants of the 1LSWE.
With the standard 1LSWE~\eqref{eq:1LSWEGlob}, there is the well-known Ritter solution,\citep{ritter1892}
\begin{subequations}
  \label{eq:ritterSol}
  \begin{align}
    h\left(x,t\right) &=
    \begin{cases}
      h_0 & \text{if } x \leq -c_0 t, \\
      \frac{h_0}{9}\left(2-\frac{x}{c_0 t}\right)^{2}
          & \text{if } -c_0 t < x \leq 2c_0 t, \\
      0   & \text{if } 2c_0 t < x,
    \end{cases}
    \label{eq:ritterSolH}\\
    u\left(x,t\right) &=
    \begin{cases}
      0 & \text{if } x \leq -c_0 t, \\
      \frac{2}{3}\left(c_0+\frac{x}{t}\right)
        & \text{if } -c_0 t < x \leq 2c_0 t, \\
      0 & \text{if } 2c_0 t < x,
    \end{cases}
  \end{align}
\end{subequations}
where $c_0 = \sqrt{\delta gh_0}$.
This solution is obtained from the assumption that \cref{eq:1LSWEGlob} is valid across discontinuities, as is normally the case when working with the 1LSWE.
For the locally conservative form~\eqref{eq:1LSWELoc}, the analytic solution is
\begin{widetext}
\begin{subequations}
  \label{eq:velSolDam}
  \begin{align}
    h\left(x,t\right) &=
    \begin{cases}
      h_0 & \text{if } x \leq -c_0 t, \\
      \frac{h_0}{9}\left(2-\frac{x}{c_0 t}\right)^{2}
    & \text{if } -c_0 t < x \leq \frac{(2-\sqrt{2})c_0 t}{1+\sqrt{2}}, \\
      \frac{4h_0}{(2+\sqrt{2})^2}
    & \text{if } \frac{(2-\sqrt{2})c_0 t}{1+\sqrt{2}} < x \leq \frac{2c_0 t}{1 +\sqrt{2}},
    \label{eq:velSolDamH}\\
    0   & \text{if } \frac{2c_0 t}{(1+\sqrt{2})} < x,
    \end{cases} \\
    u\left(x,t\right) &=
    \begin{cases}
      0 & \text{if } x \leq -c_0 t, \\
      \frac{2}{3}\left(c_0+\frac{x}{t}\right)
    & \text{if } -c_0 t < x \leq \frac{(2-\sqrt{2})c_0 t}{1+\sqrt{2}}, \\
      \frac{2c_0}{1+\sqrt{2}}
    & \text{if } \frac{(2-\sqrt{2})c_0 t}{1+\sqrt{2}} < x \leq \frac{2c_0 t}{1+\sqrt{2}}, \\
      0 & \text{if } \frac{2c_0 t}{1+\sqrt{2}} < x.
    \end{cases}
  \end{align}
\end{subequations}
\end{widetext}

A sketch of the two solutions for $h$ is shown in \cref{fig:ritter}.
One can see that the Ritter solution expands more than $2.4$ times faster than the solution of the locally conservative form.
The latter solution is the only one with a discontinuous height profile, and it has a constant Froude number of $\FrLe = \sqrt{2}$ at the leading edge.
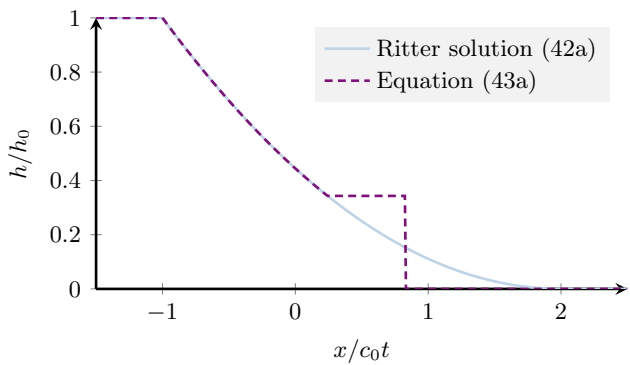
\begin{figure}
  \centering
  \tikzsetnextfilename{ritter}
  \begin{tikzpicture}
    \begin{axis}[
        axis lines = left,
        ylabel = $h/h_0$,
        xlabel = $x/c_0 t$,
        width = 1.0*\columnwidth,
        height = {0.6*\columnwidth},
        domain = -1.5:2.5,
        line width = 1pt,
        legend pos = north east,
        legend style = {draw=none, fill=black!5},
        legend cell align = {left},
        samples = 500,
        no marks,
      ]
      \addplot+[color=BuPu-D, solid]
        {x < -1
          ? 1
          : (x < 2
             ? (2 - x)^2 / 9
             : 0)};
      \addlegendentry{Ritter solution~\eqref{eq:ritterSolH}}

      \addplot+[color=BuPu-K, densely dashed]
        {x < -1
          ? 1
          : (x < ((2-sqrt(2))/(sqrt(2)+1))
             ? (2 - x)^2 / 9
             : (x < (2/(sqrt(2)+1))
                ? 4/(2+sqrt(2))^2
                : 0))};
      \addlegendentry{\Cref{eq:velSolDamH}}
    \end{axis}
  \end{tikzpicture}
  \caption{A sketch of the Ritter solution~\eqref{eq:ritterSol} and \cref{eq:velSolDam} for the one-layer dam-break problem.}
  \label{fig:ritter}
\end{figure}

\subsection{Case Ra: Quantify inaccuracies in the one-layer approximation}
\label{sec:experimentCaseRa}
In Case Ra, the initial conditions are the same as for Case 0 (\cref{eq:dambreak}).
However, a reflective wall is placed to the left of the dam at position $x=-L$ with boundary conditions $(\partial_x h)(x=-L, t) = 0$ and $u(x=-L, t) = 0$.
The reflective wall removes the self-similarity of the solution, which enables a study of how the accuracy of the one-layer approximation from \cref{thm:oneLayerApproximation} evolves in time.

We also consider a variant of this case where the top layer becomes deep, i.e. $s=2$ in \cref{thm:oneLayerApproximation}.
Here the initial conditions become
\begin{equation}
  \label{eq:dambreakBottom}
  \begin{split}
    h_1(t=0, x) &= H - h_2, \\
    h_2(t=0, x) &=
    \begin{cases}
      h_0 & \text{if } x \le 0, \\
      0   & \text{if } x > 0,
    \end{cases} \\
    u_1(t=0, x) &= 0, \\
    u_2(t=0, x) &= 0,
  \end{split}
\end{equation}
where again $h_0$ is constant.

\subsection{Case Rb: Effect of non-zero depth on both sides of dam}
\label{sec:experimentCaseRb}
Case Rb is a variant of case Ra where the initial conditions are relaxed to allow a non-zero depth to the right of the dam, that is,
\begin{equation}
  \label{eq:dambreakRb}
  \begin{split}
    h_1(t=0, x) &=
    \begin{cases}
      h_{0, a} & \text{if } x \le 0, \\
      h_{0, b} & \text{if } x > 0,
    \end{cases} \\
    h_2(t=0, x) &= H - (1-\delta)h_1,\\
    u_1(t=0, x) &= 0, \\
    u_2(t=0, x) &= 0.
  \end{split}
\end{equation}
In this case, the difference between the solutions of the locally and globally conservative 1LSWE will be less notable, because both give shocks.
This case will be used to show that the locally conservative 1LSWE captures quantitative behaviour of two-layer cases that is not captured by the globally conservative 1LSWE.

\subsection{Case Rc: Comparison to dam-break experiments}
\label{sec:experimentCases}
Finally, in case Rc we compare numerical results of the dam-break case with experimental results for liquid-on-liquid spreading.
In particular, we compare the spreading radius predicted by the one-layer approximation from \cref{thm:oneLayerApproximation} (\cref{eq:1LSWELocGen}) and by the two-layer equations to two sets of experimental results.
We use the same initial conditions as in case Ra, that is, \cref{eq:dambreak} with a reflective wall at $x=-L$.

\begin{table*}
  \caption{Initial conditions used by the one-dimensional dam-break experiments.
  The experiments by \citeauthor{suchon1970} are with oil spreading on water, while those from \citeauthor{chang1983} are liquified methane and liquified nitrogen spreading on water.}
  \label{tab:experiments}
  \begin{ruledtabular}
    \begin{tabular}{llD{.}{.}{2.1}D{.}{.}{2.2}D{.}{.}{2.1}D{.}{.}{3.3}}%
      Authors & Experiment
              & \multicolumn{1}{c}{Spill Volume (\si{\liter})}
              & \multicolumn{1}{c}{Height $h_0$ (\si{\centi\meter})}
              & \multicolumn{1}{c}{Width $L$ (\si{\centi\meter})}
              & \multicolumn{1}{c}{$\delta$} \\ \hline
      \citeauthor{suchon1970} & Run 11 &  10 & 16.51 & 10.16 & 0.1 \\
      \citeauthor{suchon1970} & Run 14 &  7.7 & 16.637 & 7.62 & 0.1 \\
      \citeauthor{suchon1970} & Run 17 &  5.1 & 10.9982 & 7.62 & 0.1 \\
      \citeauthor{suchon1970} & Run 18 &  5.1 & 16.51 & 5.08 & 0.1 \\
      \citeauthor{chang1983} & \SI{2}{\liter} methane &  2.0 & 17.3 & 7 & 0.746 \\
      \citeauthor{chang1983} & \SI{2}{\liter} nitrogen &  2.0 & 17.3 & 7 & 0.34 \\
      \citeauthor{chang1983} & \SI{0.75}{\liter} methane &  0.75 & 6.5 & 7 & 0.746 \\
      \citeauthor{chang1983} & \SI{1}{\liter} nitrogen &  1.0 & 8.7 & 7 & 0.34 \\
  \end{tabular}
  \end{ruledtabular}
\end{table*}
The first set of experiments is from \citeauthor{suchon1970},\cite{suchon1970} who studied the spreading of oil on water.
He used a \SI{2.5}{\meter} long and \SI{0.62}{\meter} wide channel with glass walls.
The initial dam was controlled by a thin aluminum plate that was manually removed to start the experiment.
We use initial conditions corresponding to 4 different runs by \citeauthor{suchon1970}, see \cref{tab:experiments}.
The initial depth of water in the experiments was about \SI{30}{\centi\meter}, which is nearly twice the initial heights.

The second set of experiments that will be considered are those presented by \citeauthor{chang1983}.\cite{chang1983}
They studied fluids at cryogenic temperatures (cryogens) spreading on water and presented both experimental results as well as model predictions.
In their model, they used the same empirical boundary condition for the spreading rate as discussed in \cref{sec:BenjaminAlternative}.\citep{chang1982}
We will demonstrate that their experimental results can be reproduced to a high accuracy without any empirical boundary condition or model for the spreading rate.

It should be noted that the experimental setup by \citet{chang1983} deviates from the dam-break case in that the initial reservoir of cryogen is emptied through a large slit.
The spreading then occurred inside a cylinder of length \SI{4}{\meter} with an inside diameter of \SI{16.5}{\centi\meter} where half of the volume was filled with water.
However, the case should be well approximated by a dam-break since the slit height is of the same order of magnitude as that of the leading edge of the spreading liquid.
To the best of our knowledge, the numerical predictions by \citeauthor{chang1983} were also based on the dam-break case.
\citeauthor{chang1983} do not list the initial height and width of the released cryogens, only the initial volumes.
The initial conditions are therefore estimated based on the description of the apparatus given by \citeauthor{chang1982}.\cite{chang1982}
We assume that the spreading occurs in a channel of the same width as that of the experiment.
We then estimated the area of the release tank and used this to find an estimate for the initial height and width from a given initial volume.
The initial conditions used are listed in \cref{tab:experiments}.

To accurately represent the spreading of cryogens, it is necessary to account for evaporation due to heat flow from the water and surrounding air.
The evaporation gives a source term in the mass conservation laws.
We follow \citet{chang1983} and include constant evaporation rates of \SI{0.16}{\kilogram\per\meter\squared\per\second} for methane and \SI{0.201}{\kilogram\per\meter\squared\per\second} for nitrogen in the mass balances.
These values are the same as those used by \citeauthor{chang1983}, which are based on experimental studies of the relevant substances.\citep{burgess1970}
Evaporation leads to the formation of bubbles in the liquid, which reduces its density.
\citeauthor{chang1983} called this reduced density the \emph{effective cryogenic density}, and they estimated it based on experimental results to be \SI{660}{\kilogram\per\meter\cubed} for nitrogen and \SI{254}{\kilogram\per\meter\cubed} for methane.
Similar to \citeauthor{chang1983}, we will use reduced densities in our simulations as well.
Finally, \citeauthor{chang1983} report the formation of some ice on the water surface downstream of the cryogen distributor.
This effect is not accounted for in the models, although it is also not expected to have a large impact on the spreading rates.

\section{Numerical results}
\label{sec:results}
In this section, we will discuss results from the cases described in \cref{sec:cases}.
The equations are discretized spatially using a finite-volume scheme. We employ the FORCE (first-order centered) flux~\citep{toro2000} and the second-order MUSCL (Monotonic Upstream-Centered Scheme for Conservation Laws) reconstruction with a minmod limiter~\citep{leveque02} in each finite volume.
The solutions are advanced in time with a standard third-order three-stage strong stability-preserving Runge-Kutte method.\citep{ketcheson05}
Although some of the equations have terms that are in general not conservative, \eg the convective term in the locally conservative 2LSWE~\eqref{eq:2LSWEGenVel}, it should be noted that these become conservative when the equations are restricted to a single spatial dimension.
The Courant–Friedrichs–Lewy (CFL)-number is \num{0.9} for all cases.

For quantification of errors, we use the $L^1$ norm, which for a function $y : \Omega \to \mathds{R}$ is defined as
\begin{equation}
  \norm{y} \equiv \int_{\Omega} \abs{y(\vect x)} \diff^n x.
\end{equation}

We present numerical results for the cases described in \cref{sec:cases} for $\delta$ of $0.6$ and $0.7$, which are similar to those of cryogenic spills on water (see for instance \cref{tab:experiments}).
To find the height and velocity distributions in the one-layer model for other values of $\delta$, a temporal scaling is all that is needed.
This is because the equations are invariant under the transformations $\delta g \to \lambda \delta g$, $t \to \sqrt{\lambda}^{-1} t$ and $u_s \to \sqrt{\lambda} u_s$ for all $\lambda$.
We find that the convergence is also similar for other values of $\delta$.

\subsection{Case 0: Dam-break in an unrestricted spatial domain}
In Case 0, the dam-break occurs in a one-dimensional, spatially unrestricted domain.
We solved this case with the parameters $h_0 = \SI{1}{\meter}$ and $\delta = 0.7$ with 400 grid cells on the domain $x/c_0 t\in(-2,2)$.
The initial depth, $H$, was increased stepwise from the initial value, $H = h_0$, to obtain a larger height difference between the layers.
The results presented in \Cref{fig:case0-qualitative} show that the locally conservative 2LSWE converge towards the analytic solution of the locally conservative 1LSWE when $H$ increases.
A higher value of $H$ translates into increasing $D_k$ in \cref{thm:oneLayerApproximation}.
The figure demonstrates a general trend found for the agreement between the 1LSWE and the 2LSWE, namely that the locally conserved 1LSWE become an increasingly good approximation to the complete 2LSWE with increasing $H$.
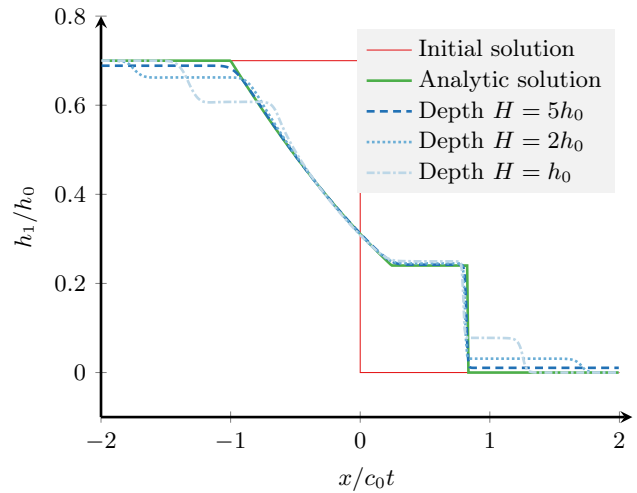
\begin{figure}
  \centering
  \tikzsetnextfilename{case0-qualitative}
  \begin{tikzpicture}
    \begin{axis}[
        height = 0.8\columnwidth,
        width = 1.0\columnwidth,
        xlabel = {$x/c_0 t$},
        ylabel = {$h_1/h_0$},
        axis lines = left,
        xmax = 2.1,
        ymin =-0.1,
        ymax = 0.8,
        line width = 1pt,
        legend pos = north east,
        legend style = {draw=none, fill=black!5},
        legend cell align = {left},
        hd1/.style = {
          densely dashdotted,
          color=Blues-E,
        },
        hd2/.style = {
          densely dotted,
          color=Blues-G,
        },
        hd3/.style = {
          densely dashed,
          color=Blues-J,
        },
        han/.style = {
          color=Set1-C,
        },
        initsol/.style = {
          thin,
          color=Set1-A,
        },
        table/x = x,
      ]

      \addplot[initsol] coordinates {
          (-2, 0.7)
          (0, 0.7)
          (0, 0)
          (2, 0)
        };
      \addlegendentry{Initial solution}

      \addplot[han] table [y=han] {data/case0_qualitative_local.txt};
      \addlegendentry{Analytic solution}

      \addplot[hd3] table [y=hd3] {data/case0_qualitative_local.txt};
      \addlegendentry{Depth $H=5h_0$}

      \addplot[hd2] table [y=hd2] {data/case0_qualitative_local.txt};
      \addlegendentry{Depth $H=2h_0$}

      \addplot[hd1] table [y=hd1] {data/case0_qualitative_local.txt};
      \addlegendentry{Depth $H=h_0$}
    \end{axis}
  \end{tikzpicture}
  \caption{Comparison of solutions of the local 2LSWE with increasing depths, $H$ (blue lines) to the analytic solution of the locally conservative 1LSWE for Case 0.
  As $H$ increases, the 2LSWE solution approaches the 1LSWE solution.}
  \label{fig:case0-qualitative}
\end{figure}

\Cref{fig:error} shows the normalized difference in $L^1$ for the top-layer height between the analytic solution~(\cref{eq:velSolDam}), and the solutions obtained with the 2LSWE, $\tilde h_1$ and $h_1$, respectively ($\lVert h_1 - \tilde h_1\rVert_1/\norm{h_0}$).
The differences are shown as a function of the initial depth, $H$.
The circles correspond to the locally conservative 2LSWE~\eqref{eq:2LSWEGenVel}, the triangles correspond to the globally conservative 2LSWE~\eqref{eq:2LSWEGenTot}, and the solid line indicates a slope of -1.
The plot shows that the difference between the globally and the locally conservative 2LSWE is small as expected.
Other relevant variables such as $h_2$, $u_1$, and $u_2$, were found to exhibit a similar behaviour.
\begin{figure}
  \centering
  \tikzsetnextfilename{error}
  \begin{tikzpicture}
    \begin{axis}[
        height = 0.7*\columnwidth,
        width = 1.0*\columnwidth,
        xlabel = {Initial depth, $H/h_0$},
        ylabel = {Difference in $L^1$},
        xmode = log,
        ymode = log,
        ymin = 0.001,
        ymax = 0.2,
        xmin = 0.5,
        xmax = 40,
        axis lines = left,
        line width = 1pt,
        grid=both,
        grid style={densely dotted, line width=0.1pt, draw=black!50},
        major grid style={line width=0.2pt, draw=black!50},
        log ticks with fixed point,
        legend cell align = {left},
        legend style={
          draw=none,
          fill=black!5,
          at={(1,1)},
          anchor=north east,
        },
        locmom/.style={
          only marks,
          mark=o,
          mark size=2.0pt,
          mark options={Set1-B, line width=1.1pt},
        },
        totmom/.style={
          only marks,
          thick,
          mark=triangle,
          mark options={Set1-A, line width=1.1pt, scale=1.5},
        },
        orderone/.style={
          no marks,
          black,
          line width=0.5pt,
        },
        domain=0.5:40,
        samples=10,
        table/x expr = \thisrowno{0}+0.15,
        table/y expr = \thisrowno{1},
      ]

      \addplot+[totmom] table {data/L1diffTot.txt};
      \addlegendentry{2LSWE Global}
      \addplot+[locmom] table {data/L1diffLoc.txt};
      \addlegendentry{2LSWE Local}
      \addplot+[orderone] {0.065/x};
      \addlegendentry{Slope of -1}
    \end{axis}
  \end{tikzpicture}
  \caption{The $L^1$ difference of the top-layer height, $h_1$, between analytic solutions of Case 0 and solutions with the 2LSWE for different initial depths, $H$.
    The circles correspond to the locally conservative 2LSWE and the triangles correspond to the globally conservative 2LSWE.
  The line indicate a slope of -1.}
  \label{fig:error}
\end{figure}
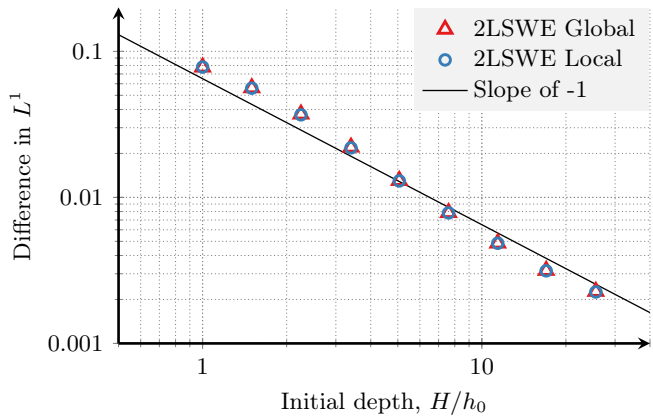

\subsection{Case Ra: Quantify inaccuracies in the one-layer approximation}
In Case Ra, a reflective wall is placed at $x = -L$, cf.~\cref{sec:experimentCaseRa}.
The case was solved with the parameters $h_0 = \SI{1}{\meter}$, $L = \SI{2}{\meter}$, and $\delta = 0.6$.
The domain width was $\SI{15}{\meter}$ and 1000 grid cells were used.

We first compare the two situations $s=1$ (the top layer is shallow) and $s=2$ (the bottom layer is shallow), cf.~\cref{thm:oneLayerApproximation}.
\Cref{fig:heightDistribution} shows a dam break in the left column ($s=1$) and the cross section of a gravity current in the right column ($s=2$) at $t=\SI{3}{\second}$.
An illustration of the initial configurations is presented at the top of the figure.
The globally conservative 2LSWE (green lines) are compared with the locally conservative 1LSWE (green lines) and the globally conservative 1LSWE (red lines).
For the dam break case (right column), we initialize the bottom layer in a perturbed state where the depth $h_2$ is constant.
This is done to show that a perturbation of the initial solution of the relatively deep layer does not prevent the 2LSWE to converge to the one-layer approximation when the depth increases.
As the depths increase, we see that the solutions of the 2LSWE converge toward the locally conservative 1LSWE as predicted by the theorem.
\begin{figure*}
  \centering
  \tikzsetnextfilename{heightDistribution}
  \begin{tikzpicture}[
      singleloctop/.style={
        color=Paired-B,
        thin,
      },
      singlelocbot/.style={
        color=Paired-B,
        densely dashed,
      },
      singletottop/.style={
        color=Paired-E,
        thin,
      },
      singletotbot/.style={
        color=Paired-E,
        densely dashed,
      },
      doubletop/.style={
        color=Paired-D,
        thin,
      },
      doublebot/.style={
        color=Paired-D,
        densely dashed,
      },
      >=stealth,
    ]

    \begin{groupplot}[
        height=3.5cm,
        width=6.7cm,
        ylabel={Height (\si{\meter})},
        axis lines = left,
        table/x expr = \thisrow{x} - 2.0,
        group style={
          group size=2 by 4,
          vertical sep=0.5cm,
          xticklabels at=edge bottom,
        },
        bottomFlow/.style={
          ylabel = {},
          separate axis lines,
          y axis line style={draw=none},
        },
      ]

      \nextgroupplot[ymin=0.3]
      \addplot[singleloctop] table [y expr = 0.5+\thisrow{h1}] {data/distOneLayerLocMom.txt};
      \addplot[singlelocbot] table [y expr = 0.5-\thisrow{h2}] {data/distOneLayerLocMom.txt};
      \addplot[singletottop] table [y expr = 0.5+\thisrow{h1}] {data/distOneLayerTotMom.txt};
      \addplot[singletotbot] table [y expr = 0.5-\thisrow{h2}] {data/distOneLayerTotMom.txt};
      \addplot[doubletop] table [y expr = \thisrow{h1} + \thisrow{h2}] {data/distTwoLayerTotMom0.5m.txt};
      \addplot[doublebot] table [y expr = \thisrow{h2}] {data/distTwoLayerTotMom0.5m.txt};

      \nextgroupplot[bottomFlow, yticklabels={0.0,0.4,2.0}, ytick={0,0.4,1.0}]
      \addplot[singleloctop] table [y expr = 1.0] {data/distBottomOneLayerLocMom.txt};
      \addplot[singlelocbot] table [y expr = \thisrow{h1}] {data/distBottomOneLayerLocMom.txt};
      \addplot[singletottop] table [y expr = 1.0] {data/distBottomOneLayerTotMom.txt};
      \addplot[singletotbot] table [y expr = \thisrow{h1}] {data/distBottomOneLayerTotMom.txt};
      \addplot[doubletop] table [y expr = \thisrow{h1} + \thisrow{h2}-1.0] {data/distBottomTwoLayerTotMom2m.txt};
      \addplot[doublebot] table [y expr = \thisrow{h2}] {data/distBottomTwoLayerTotMom2m.txt};

      \nextgroupplot[ymin=0.8]
      \addplot[singleloctop] table [y expr = 1+\thisrow{h1}] {data/distOneLayerLocMom.txt};
      \addplot[singlelocbot] table [y expr = 1-\thisrow{h2}] {data/distOneLayerLocMom.txt};
      \addplot[singletottop] table [y expr = 1+\thisrow{h1}] {data/distOneLayerTotMom.txt};
      \addplot[singletotbot] table [y expr = 1-\thisrow{h2}] {data/distOneLayerTotMom.txt};
      \addplot[doubletop] table [y expr = \thisrow{h1} + \thisrow{h2}] {data/distTwoLayerTotMom1m.txt};
      \addplot[doublebot] table [y expr = \thisrow{h2}] {data/distTwoLayerTotMom1m.txt};

      \nextgroupplot[bottomFlow, yticklabels={0.0,0.4,4.0}, ytick={0,0.4,1.0}]
      \addplot[singleloctop] table [y expr = 1.0] {data/distBottomOneLayerLocMom.txt};
      \addplot[singlelocbot] table [y expr = \thisrow{h1}] {data/distBottomOneLayerLocMom.txt};
      \addplot[singletottop] table [y expr = 1.0] {data/distBottomOneLayerTotMom.txt};
      \addplot[singletotbot] table [y expr = \thisrow{h1}] {data/distBottomOneLayerTotMom.txt};
      \addplot[doubletop] table [y expr = \thisrow{h1} + \thisrow{h2}-3] {data/distBottomTwoLayerTotMom4m.txt};
      \addplot[doublebot] table [y expr = \thisrow{h2}] {data/distBottomTwoLayerTotMom4m.txt};

      \nextgroupplot[ymin=9.8]
      \addplot[singleloctop] table [y expr = 10+\thisrow{h1}] {data/distOneLayerLocMom.txt};
      \addplot[singlelocbot] table [y expr = 10-\thisrow{h2}] {data/distOneLayerLocMom.txt};
      \addplot[singletottop] table [y expr = 10+\thisrow{h1}] {data/distOneLayerTotMom.txt};
      \addplot[singletotbot] table [y expr = 10-\thisrow{h2}] {data/distOneLayerTotMom.txt};
      \addplot[doubletop] table [y expr = \thisrow{h1} + \thisrow{h2}] {data/distTwoLayerTotMom10m.txt};
      \addplot[doublebot] table [y expr = \thisrow{h2}] {data/distTwoLayerTotMom10m.txt};

      \nextgroupplot[bottomFlow, yticklabels={0.0,0.4,10}, ytick={0,0.4,1.0}]
      \addplot[singleloctop] table [y expr = 1] {data/distBottomOneLayerLocMom.txt};
      \addplot[singlelocbot] table [y expr = \thisrow{h1}] {data/distBottomOneLayerLocMom.txt};
      \addplot[singletottop] table [y expr = 1] {data/distBottomOneLayerTotMom.txt};
      \addplot[singletotbot] table [y expr = \thisrow{h1}] {data/distBottomOneLayerTotMom.txt};
      \addplot[doubletop] table [y expr = \thisrow{h1} + \thisrow{h2}-9] {data/distBottomTwoLayerTotMom10m.txt};
      \addplot[doublebot] table [y expr = \thisrow{h2}] {data/distBottomTwoLayerTotMom10m.txt};

      \nextgroupplot[ymin=49.8, xlabel={$x$ (\si{\meter})}]
      \addplot[singleloctop] table [y expr = 50+\thisrow{h1}] {data/distOneLayerLocMom.txt};
      \addplot[singlelocbot] table [y expr = 50-\thisrow{h2}] {data/distOneLayerLocMom.txt};
      \addplot[singletottop] table [y expr = 50+\thisrow{h1}] {data/distOneLayerTotMom.txt};
      \addplot[singletotbot] table [y expr = 50-\thisrow{h2}] {data/distOneLayerTotMom.txt};
      \addplot[doubletop] table [y expr = \thisrow{h1} + \thisrow{h2}] {data/distTwoLayerTotMom50m.txt};
      \addplot[doublebot] table [y expr = \thisrow{h2}] {data/distTwoLayerTotMom50m.txt};

      \nextgroupplot[bottomFlow, yticklabels={0.0,0.4,50}, ytick={0,0.4,1.0}, xlabel={$x$ (\si{\meter})}]
      \addplot[singleloctop] table [y expr = 1] {data/distBottomOneLayerLocMom.txt};
      \addplot[singlelocbot] table [y expr = \thisrow{h1}] {data/distBottomOneLayerLocMom.txt};
      \addplot[singletottop] table [y expr = 1] {data/distBottomOneLayerTotMom.txt};
      \addplot[singletotbot] table [y expr = \thisrow{h1}] {data/distBottomOneLayerTotMom.txt};
      \addplot[doubletop] table [y expr = \thisrow{h1} + \thisrow{h2}-49] {data/distBottomTwoLayerTotMom50m.txt};
      \addplot[doublebot] table [y expr = \thisrow{h2}] {data/distBottomTwoLayerTotMom50m.txt};
    \end{groupplot}

    \node[above right] (rb)
      at ([xshift=-0.1cm, yshift=0.3cm] group c1r1.north west)
      {\includegraphics[width=5.16cm, height=2.25cm]{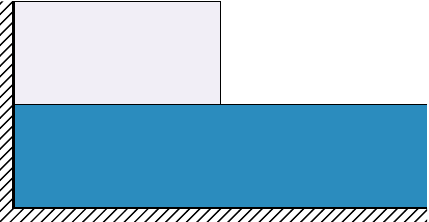}};
    \node[above right] (ra)
      at ([xshift=-0.1cm, yshift=0.3cm] group c2r1.north west)
      {\includegraphics[width=5.16cm, height=2.25cm]{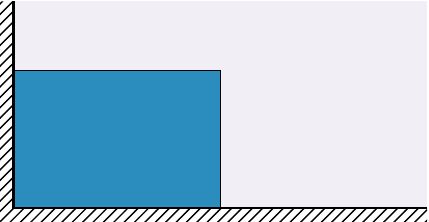}};
    \draw[thick, ->]
      ([xshift=-10ex] group c1r1.north west) --
      ([xshift=-10ex, yshift=-0.5cm] group c1r4.south west)
      node[midway, above, rotate=90] {Increasing $D_k$};

    \draw[thin, ->] (group c2r1.south west) -- (group c2r1.west) decorate[decoration=zigzag]{-- ($(group c2r1.west)!.5!(group c2r1.north west)$)} -- (group c2r1.north west);
    \draw[thin, ->] (group c2r2.south west) -- (group c2r2.west) decorate[decoration=zigzag]{-- ($(group c2r2.west)!.5!(group c2r2.north west)$)} -- (group c2r2.north west);
    \draw[thin, ->] (group c2r3.south west) -- (group c2r3.west) decorate[decoration=zigzag]{-- ($(group c2r3.west)!.5!(group c2r3.north west)$)} -- (group c2r3.north west);
    \draw[thin, ->] (group c2r4.south west) -- (group c2r4.west) decorate[decoration=zigzag]{-- ($(group c2r4.west)!.5!(group c2r4.north west)$)} -- (group c2r4.north west);
  \end{tikzpicture}
  \caption{Height distribution of the two layers in a dam-break problem at $t=\SI{3}{\second}$ solved with the locally conserved 1LSWE~\eqref{eq:1LSWELocGen} (blue lines), the globally conserved 1LSWE~\eqref{eq:1LSWEGlob} (red lines) and the globally conserved 2LSWE~\eqref{eq:2LSWEGenTot} (green lines).
  The solid lines and dashed lines indicate $h_1 + h_2$ and $h_2$, respectively.}
  \label{fig:heightDistribution}
\end{figure*}

To quantify how the solutions of the 2LSWE converge to those of the locally conservative 1LSWE, we will compare the solutions at various times, initial depths, $H$, and initial widths, $L$.
We consider solutions of the globally conservative 2LSWE and the locally conservative 1LSWE and evaluate two quantifiable differences.
In \cref{fig:error3D}, the $L^1$ difference of the top-layer height, $\lVert h_1^\mathrm{2LSWE} - h_1^\mathrm{1LSWE}\rVert_1/h_0 L$ is plotted for varying times and depths, $H/h_0$.
\Cref{fig:errorSpreadingrate} shows the difference of the \emph{leading edge} position at $t = \SI{5}{\second}$,  $|r_\mathrm{2LSWE} - r_\mathrm{1LSWE}|/h_0$, for varying initial depths, $H/h_0$, and widths, $L/h_0$.
The position of the leading edge is here defined as the smallest $x$-value where the top layer is thinner than $\SI{e-4}{\meter}$.
\Cref{fig:caseRa_errors} shows that the differences in $h_1$ decrease with time, which is reasonable since the spreading fluid becomes gradually thinner.
As expected, the differences decrease with increasing value of $H/h_0$.
Similar to Case 0, the errors in the variables $h_2$, $u_1$, and $u_2$ as quantified by the $L^1$ norm exhibit the same trends as the top layer height (not shown).
Further, the figure shows that the difference of the leading edge position decreases with decreasing width, which is reasonable because the spreading fluid becomes thinner as the initial volume decreases.
\begin{figure}
  \centering
  \begin{subfigure}[t]{0.9\columnwidth}
    \centering
    \tikzsetnextfilename{error3D}
    \begin{tikzpicture}
      \begin{axis}[
          height = 1.0\columnwidth,
          width = 1.0\columnwidth,
          xlabel = {Initial depth, $H/h_0$},
          ylabel = {Time ($\si{\second}$)},
          xmax = 20,
          xmode = log,
          log ticks with fixed point,
          xtick={1, 2, 5, 10, 20},
          view = {0}{90},
          colormap = {violet}{rgb=(0.3,0.06,0.5), rgb=(0.81,0.81,0.765)},
          plotstyle/.style={
            contour gnuplot = {
              levels = {0.001, 0.005, 0.01, 0.02, 0.05, 0.1, 0.2, 0.3},
              label distance=80pt,
              contour label style={
                nodes={text=black},
                /pgf/number format/fixed,
                /pgf/number format/fixed zerofill=true,
                /pgf/number format/precision=3,
              },
            },
          },
        ]

        \addplot3[plotstyle] table {data/3Dh1DiffTot.txt};
      \end{axis}
    \end{tikzpicture}
    \caption{The $L^1$ difference of the top-layer height.}
    \label{fig:error3D}
  \end{subfigure} \\[1em]
  \begin{subfigure}[t]{0.9\columnwidth}
    \centering
    \tikzsetnextfilename{errorSpreadingrate}
    \begin{tikzpicture}
      \begin{axis}[
          height = 0.9*\columnwidth,
          width = 1.0*\columnwidth,
          xlabel = {Initial depth, $H/h_0$},
          ylabel = {Initial dam width, $L/h_0$},
          xmin = 1.0,
          xmax = 20.0,
          xmode = log,
          log ticks with fixed point,
          xtick={1, 2, 5, 10, 20},
          view = {0}{90},
          colormap = {violet}{rgb=(0.3,0.06,0.5), rgb=(0.9,0.9,0.85)},
          plotstyle/.style={
            contour gnuplot = {
              levels = {
                0.01, 0.03, 0.05, 0.1, 0.2, 0.4, 0.6, 0.8, 1.0, 1.2, 1.4, 1.5
              },
              contour label style={
                nodes={text=black},
                /pgf/number format/fixed,
                /pgf/number format/fixed zerofill=true,
                /pgf/number format/precision=2,
              },
            },
          },
        ]

        \addplot3[plotstyle] table {data/L1diffTotdepth-width.txt};
      \end{axis}
    \end{tikzpicture}
    \caption{The difference of the \emph{leading edge} position at $t = \SI{5}{\second}$.}
    \label{fig:errorSpreadingrate}
  \end{subfigure}
  \caption{A quantitative comparison of solutions from the globally conservative 2LSWE and the locally conservative 1LSWE for Case Ra, showing a) the difference in top-layer height, b) the difference in leading edge position.}
  \label{fig:caseRa_errors}
\end{figure}
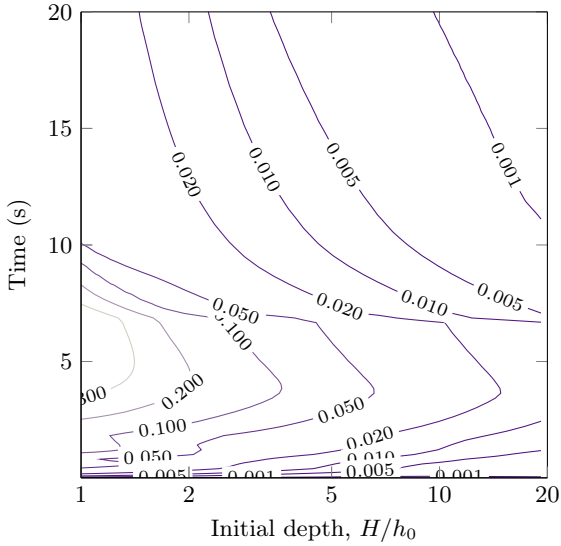
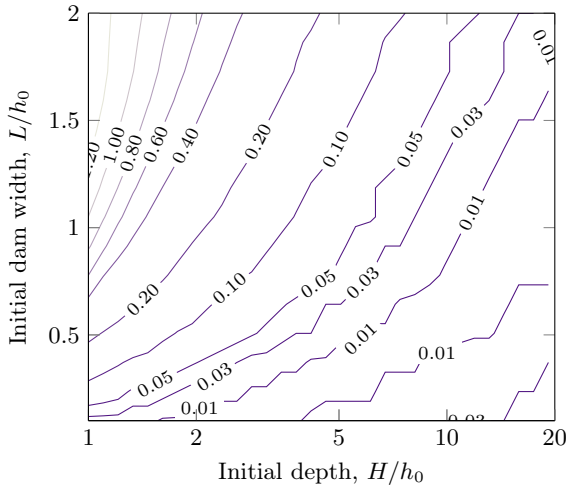

It is also interesting to see how the rate of spreading evolves with increasing depth, $H$.
\Cref{fig:spreadingRate} shows the leading edge position as a function of time for the two variants of the 1LSWE and the globally conservative 2LSWE for different depths $H$.
Again we observe a rapid convergence of the 2LSWE to the locally conservative 1LSWE.
Also for this case, we observe that the spreading rate from the globally conservative 1LSWE is higher that that from the full 2LSWE.
\begin{figure}
  \centering
  \tikzsetnextfilename{spreadingRate}
  \begin{tikzpicture}
  \begin{axis}[
        height = 0.9*\columnwidth,
        width = 1.0*\columnwidth,
        xlabel = {Time (\si{\second})},
        ylabel = {Leading edge (\si{\meter})},
        axis lines = left,
        xmax = 6.5,
        ymax = 15.0,
        line width = 1pt,
        legend cell align = {left},
        legend style={
          draw=none,
          fill=black!5,
          at={(0.55,0.07)},
          anchor=south west,
        },
        hd1/.style = {
          densely dashdotted,
          color=Blues-E,
        },
        hd3/.style = {
          densely dotted,
          line width=0.75pt,
          color=Blues-G,
        },
        hd9/.style = {
          densely dashed,
          color=Blues-J,
        },
        loc/.style = {
          color=Set1-C,
        },
        tot/.style = {
          dashed,
          color=Set1-A,
        },
        table/x = t,
        table/y = r,
      ]

      \addplot[loc] table  {data/leadingEdgeOneLayerLocMom.txt};
      \addplot[tot] table  {data/leadingEdgeOneLayerTotMom.txt};
      \addplot[hd9] table  {data/leadingEdgeTwoLayerTotMom9m.txt};
      \addplot[hd3] table  {data/leadingEdgeTwoLayerTotMom3m.txt};
      \addplot[hd1] table  {data/leadingEdgeTwoLayerTotMom1m.txt};

      \addlegendentry{1LSWE Local}
      \addlegendentry{1LSWE Global}
      \addlegendentry{Depth $H = 9h_0$}
      \addlegendentry{Depth $H = 3h_0$}
      \addlegendentry{Depth $H = h_0$}
    \end{axis}
  \end{tikzpicture}
  \caption{The position of leading edge as function of time for Case Ra. The blue lines shows the result for 2LSWE with various depths.}
  \label{fig:spreadingRate}
\end{figure}
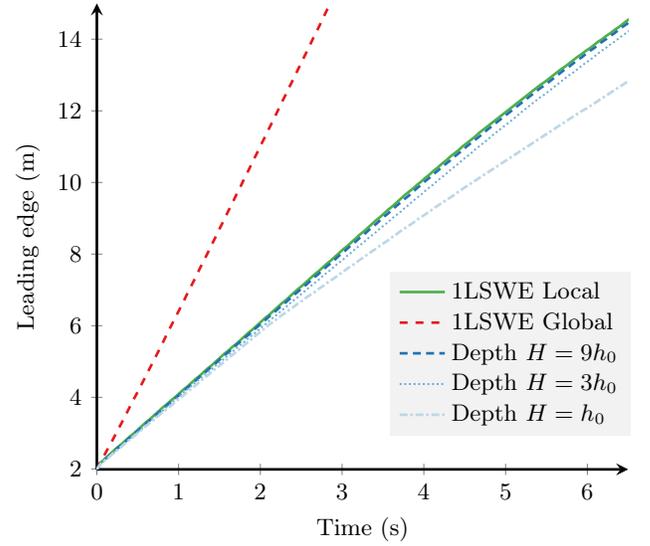

\subsection{Case Rb: Effect of non-zero depth at both sides of dam}
In Case Rb, the dam-break was initialized according to \cref{eq:dambreakRb} with a non-zero depth at both sides of the dam.
The case was solved with the parameters $\delta = 0.6$, $L=\SI{4}{\meter}$, $H=\SI{50}{\meter}$, $h_{0,a} = \SI{2}{\meter}$, and $h_{0,b} = \SI{0.5}{\meter}$.
The width of the domain is \SI{15}{\meter} and the results are again computed with 1000 grid cells.

\Cref{fig:Riemann} shows the height distributions at time $t=\SI{3}{\second}$ for both the globally and locally conserved 1LSWE and for both versions of the 2LSWE, \cref{eq:2LSWEGenVel,eq:2LSWEGenTot} at different depths $H$.
The solutions of both formulations of the 1LSWE have shocks, but the shock velocities differ.
As expected, the height profiles of the locally and globally conservative 2LSWE are similar, however, they are only in agreement with the locally conservative 1LSWE.
The figure shows that the locally conservative 1LSWE should be used for accurate representation of the position of the leading edge.
\begin{figure}
  \centering
  \tikzsetnextfilename{Riemann}
  \begin{tikzpicture}
    \begin{axis}[
        height=4.5cm,
        width=1.0*\columnwidth,
        ylabel={Height $(\si{\meter})$},
        xlabel={$x$ (\si{\meter})},
        ylabel style={align=center},
        xmin=-4,
        xmax=12.5,
        ymin=49.4,
        ymax=50.9,
        axis lines=left,
        legend cell align=left,
        legend columns=2,
        legend style={
          draw=none,
          fill=black!5,
          at={(0.1,1.3)},
          anchor=north west,
          /tikz/column 2/.style={column sep=5pt,},
        },
        singleloc/.style={
          color=Paired-E,
          solid,
        },
        singletot/.style={
          color=Paired-F,
          densely dotted,
        },
        doubleloc/.style={
          color=Paired-C,
          dashed,
        },
        doubletot/.style={
          color=Paired-D,
          densely dotted,
        },
        no markers,
        line width = 1pt,
        table/x expr = \thisrow{x} - 4,
        table/y expr = \thisrow{h1} + \thisrow{h2},
      ]

      \addplot+[singleloc, forget plot]
        table [y expr = 50+\thisrow{h1}] {data/distRiemannOneLayerLocMom.txt}
        node[black, below] {$h_1 + h_2$};
      \addplot+[singleloc]
        table [y expr = 50-\thisrow{h2}] {data/distRiemannOneLayerLocMom.txt}
        node[black, below] {$h_2$};
      \addlegendentry{1LSWE Loc}

      \addplot+[singletot, forget plot]
        table [y expr = 50+\thisrow{h1}] {data/distRiemannOneLayerTotMom.txt};
      \addplot+[singletot]
        table [y expr = 50-\thisrow{h2}] {data/distRiemannOneLayerTotMom.txt};
      \addlegendentry{1LSWE Glob}

      \addplot+[doubleloc, forget plot]
        table {data/distRiemannTwoLayerLocMom.txt};
      \addplot+[doubleloc]
        table [y expr = \thisrow{h2}] {data/distRiemannTwoLayerLocMom.txt};
      \addlegendentry{2LSWE Loc}

      \addplot+[doubletot, forget plot]
        table {data/distRiemannTwoLayerTotMom.txt};
      \addplot+[doubletot]
        table [y expr = \thisrow{h2}] {data/distRiemannTwoLayerTotMom.txt};
      \addlegendentry{2LSWE Glob}
    \end{axis}
  \end{tikzpicture}
  \caption{Height profiles of $h_2$ and $h_1 + h_2$ at $t=\SI{3}{\second}$ of solutions to Case Rb with the different formulations of the 1LSWE and 2LSWE.
  ``Loc'' and ``Glob'' denote the locally conservative and globally conservative formulation, respectively.}
  \label{fig:Riemann}
\end{figure}
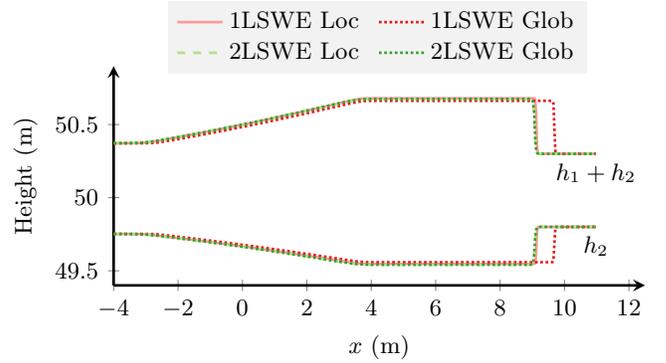

\subsection{Case Rc: Comparison to spreading experiments}
\label{sec:experiments}
\Cref{fig:suchon,fig:chang} present a comparison of the spreading rates predicted from the one-layer approximation from \cref{thm:oneLayerApproximation} (green solid line) with the experiments described in \cref{sec:experimentCases} (symbols) and with the full 2LSWE (blue dashed lines).
All cases were solved with 2000 grid cells.
A comparison to the simpler Fay model (\cref{eq:fay}) with $\beta = 1.31$ is also included for the Suchon experiments (orange dotted lines).

We find good agreement between the one-layer approximation and available experimental data.
The deviation in the spreading radius was calculated as the relative difference in the $L^1$ norm, that is,
\[
  \mathrm{dev}(r_\mathrm{exp}, r_\mathrm{sim})
  = \frac{\|r_\mathrm{exp} - r_\mathrm{sim}\|}{\|r_\mathrm{exp}\|}.
\]
The deviation in the spreading radius is \SI{4.5}{\percent} for oil on water, which is significantly better than the Fay model, where the average deviations are \SI{12.6}{\percent}.
For the cryogenic fluids, the deviations in the spreading radius were \SI{10.2}{\percent} for methane on water, and \SI{4.2}{\percent} for nitrogen on water.

We remark that in the experiments, the depth of the water is not much larger than the initial depth of the spreading liquid.
In the experiments by \citeauthor{suchon1970}, it is about twice the initial height of the oil, and in the cryogen experiments by \citeauthor{chang1983}, it is about the same as the initial cryogen height.
However, it was shown in \cref{fig:errorSpreadingrate} that the difference between the predicted spreading distance from the 2LSWE and the one-layer approximation after $\SI{5}{\second}$ is still small, even at these initial depths.
In all cases, the initial width $L$ is less than half the initial width, and so we expect a deviation at $\SI{5}{\second}$ that is smaller than 0.2 times the initial height $h_0$.

A comparison of the green solid lines (1LSWE) and the blue dashed lines (2LSWE) in \cref{fig:suchon} confirms a very good agreement between the two formulations.
For \cref{fig:chang}, and especially for the \SI{2}{\liter} cases where the initial height is large compared to the water depth, the discrepancy is larger.
We find that the deviation in the $L^1$ norm between the one-layer approximation and 2LSWE results is between \SI{0.7}{\percent} and \SI{5.5}{\percent} for all cases.
In the \SI{2}{\liter} cryogen experiments, we observe that the discrepancy is reducing after some time.
This is consistent with \cref{fig:error3D}.
The evaporation that occurs during the spreading of cryogenic fluids likely accelerates the decrease in error.

The results indicate that the proposed one-layer approximation may be used as an approximation to the 2LSWE even for cases where the depth ratio is small, as long as the main interest is to predict the spreading distance.
Although, in these cases one should not expect that the one-layer approximation captures all of the qualitative flow patterns that are captured by the 2LSWE.
In \cref{fig:chang-profiles}, we compare the profiles of the top layer at three different times for the 1LSWE (green solid lines) and the 2LSWE (blue dashed lines).
The observed spreading distance from the corresponding experiments are marked by red dots.
We see that the 2LSWE captures a more complex behaviour, especially in the early phase of the flow, but as expected, the agreement between the profiles improves with time.

Finally, we note that \citet{chang1983} also solved the 1LSWE numerically, but with an imposed boundary condition with $\FrLe = 1.28$ at the leading edge.
They motivated the use of a constant Froude number at the leading edge by frequent use in previous literature dealing with spreading of non-boiling fluids.
They treated the value of $\FrLe$ as a parameter that depends on the apparatus and must be determined experimentally and found that $\FrLe = 1.28$ worked best for their apparatus.
The analysis in this paper shows that the constant Froude number can in fact be derived from the 2LSWE.
That is, as the relative depth of the bottom layer increases, $\FrLe$ approaches $\sqrt{2}$ from below.
Moreover, our analysis shows that $\FrLe = \sqrt{2}$ is true also for boiling liquids and even when including other relevant source terms in the 2LSWE model.

\begin{figure}
  \centering
  \tikzsetnextfilename{suchon}
  \begin{tikzpicture}[
      label/.style={
        fill=black!5,
        above left=1ex,
        thin,
        draw,
      },
    ]

    \begin{groupplot}[
        width = 0.55*\columnwidth,
        height = 0.55*\columnwidth,
        xlabel = {Time (\si{\second})},
        ylabel = {Length (\si{\centi\meter})},
        xmin = 0,
        xmax = 11,
        ymin = 0,
        ymax = 220,
        group style={
          group size= 2 by 2,
          vertical sep=0.5cm,
          horizontal sep=0.5cm,
          xticklabels at=edge bottom,
          yticklabels at=edge left
        },
        grid=both,
        grid style={line width=.1pt, draw=gray!10},
        major grid style={line width=.2pt,draw=gray!50},
        legend style={line width=1.0pt},
        exp/.style={
          only marks,
          mark=*,
          mark size=1.5pt,
          mark options={Set1-A},
          table/skip first n=1,
        },
        sm1/.style={
          mark=none,
          color=Set1-C,
        },
        sm2/.style={
          mark=none,
          densely dashed,
          color=Set1-B,
        },
        fay/.style={
          mark=none,
          semithick,
          densely dotted,
          color=Set1-D,
        },
        table/x expr = \thisrow{t},
        table/y expr = \thisrow{r}*100,
        domain = 0:11,
        samples = 200,
      ]

      \newcommand*{\FayPlot}[2]{
        100*(1.5*1.31*sqrt(0.1*9.81*#1*#2)*x)^(0.666666667)
      }

      \nextgroupplot[xlabel={}]
      \addplot+[exp] table[y expr = \thisrow{r}*100 - 10.16] {data/suchon11Exp.txt};
      \addplot+[sm1] table {data/suchon11Sim_1LSWE.txt};
      \addplot+[sm2] table {data/suchon11Sim_2LSWE.txt};
      \addplot+[fay] {\FayPlot{0.1016}{0.1651}};

      \nextgroupplot[xlabel={}, ylabel={}]
      \addplot+[exp] table[y expr = \thisrow{r}*100 - 7.62] {data/suchon14Exp.txt};
      \addplot+[sm1] table {data/suchon14Sim_1LSWE.txt};
      \addplot+[sm2] table {data/suchon14Sim_2LSWE.txt};
      \addplot+[fay] {\FayPlot{0.0762}{0.16637}};

      \nextgroupplot
      \addplot+[exp] table[y expr = \thisrow{r}*100 - 7.62] {data/suchon17Exp.txt};
      \addplot+[sm1] table {data/suchon17Sim_1LSWE.txt};
      \addplot+[sm2] table {data/suchon17Sim_2LSWE.txt};
      \addplot+[fay] {\FayPlot{0.109982}{0.0762}};

      \nextgroupplot[ylabel={}]
      \addplot+[exp] table[y expr = \thisrow{r}*100 - 5.08] {data/suchon18Exp.txt};
      \addplot+[sm1] table {data/suchon18Sim_1LSWE.txt};
      \addplot+[sm2] table {data/suchon18Sim_2LSWE.txt};
      \addplot+[fay] {\FayPlot{0.1651}{0.0508}};
    \end{groupplot}

    \node[label] at (group c1r1.south east) {Run number 11};
    \node[label] at (group c2r1.south east) {Run number 14};
    \node[label] at (group c1r2.south east) {Run number 17};
    \node[label] at (group c2r2.south east) {Run number 18};

    \matrix[
      anchor=south east,
      inner sep=0.5ex,
      row sep=0.25ex,
      column sep=1ex,
      nodes={anchor=west, text width=21ex, inner sep=0pt},
      fill=black!5,
    ] at ([yshift=1.5ex] group c2r1.north east) {
      \draw[Set1-C] (0,0) -- (0.5,0);
        & \node {\strut Simulation 1LSWE};
      \draw[densely dashed, Set1-B] (3,0) -- (3.5,0);
        & \node {\strut Simulation 2LSWE}; \\
      \draw[semithick, densely dotted, Set1-D] (0,0) -- (0.5,0);
        & \node {\strut Fay model};
      \fill[Set1-A] (3.25,0) circle(1.5pt);
        & \node {\strut Experiment}; \\
    };
  \end{tikzpicture}
  \caption{The spreading distance of oil on water as function of time.
  A comparison of results from the locally conservative 1LSWE, the globally conservative 2LSWE, the Fay model, and experimental data from \citeauthor{suchon1970}.\cite{suchon1970}}
  \label{fig:suchon}
\end{figure}

\begin{figure}
  \centering
  \tikzsetnextfilename{chang}
  \begin{tikzpicture}[
      label/.style={
        fill=black!5,
        above left=1ex,
        thin,
        draw,
      },
    ]

    \begin{groupplot}[
        width = 0.55*\columnwidth,
        height = 0.55*\columnwidth,
        xlabel = {Time (\si{\second})},
        ylabel = {Length (\si{\centi\meter})},
        xmin = 0,
        xmax = 14,
        ymin = 0,
        ymax = 400,
        group style={
          group size= 2 by 2,
          vertical sep=0.5cm,
          horizontal sep=0.5cm,
          xticklabels at=edge bottom,
          yticklabels at=edge left
        },
        grid=both,
        grid style={line width=.1pt, draw=gray!10},
        major grid style={line width=.2pt,draw=gray!50},
        ex1/.style={
          only marks,
          mark=*,
          mark size=1.5pt,
          mark options={Set1-A},
          table/y expr = \thisrow{r}*100,
        },
        ex2/.style={
          only marks,
          mark=triangle*,
          mark size=2pt,
          mark options={Set1-E},
          table/y expr = \thisrow{r}*100,
        },
        sim/.style={
          mark=none,
          color=Set1-C,
        },
        sm2/.style={
          mark=none,
          densely dashed,
          color=Set1-B,
        },
        table/x expr = \thisrow{t},
        table/y expr = \thisrow{r}*100,
        domain = 0:11,
        samples = 200,
      ]

      \nextgroupplot[xlabel={}, xmax=10]
      \addplot[ex1] table {data/chang_CH4_2L_Exp.txt};
      \addplot[sim] table {data/chang_CH4_2L_1LSWE.txt};
      \addplot[sm2] table {data/chang_CH4_2L_2LSWE.txt};

      \nextgroupplot[xlabel={}, ylabel={}, xtick distance=3]
      \addplot[ex1] table {data/chang_N2_2L_Exp_1.txt};
      \addplot[ex2] table {data/chang_N2_2L_Exp_2.txt};
      \addplot[sim] table {data/chang_N2_2L_1LSWE.txt};
      \addplot[sm2] table {data/chang_N2_2L_2LSWE.txt};

      \nextgroupplot[xmax=10]
      \addplot[ex1] table {data/chang_CH4_075L_Exp.txt};
      \addplot[sim] table {data/chang_CH4_075L_1LSWE.txt};
      \addplot[sm2] table {data/chang_CH4_075L_2LSWE.txt};

      \nextgroupplot[ylabel={}, xtick distance=3]
      \addplot[ex1] table {data/chang_N2_1L_Exp_1.txt};
      \addplot[ex2] table {data/chang_N2_1L_Exp_2.txt};
      \addplot[sim] table {data/chang_N2_1L_1LSWE.txt};
      \addplot[sm2] table {data/chang_N2_1L_2LSWE.txt};
    \end{groupplot}

    \node[label] at (group c1r1.south east) {\SI{2}{\liter} CH$_4$};
    \node[label] at (group c2r1.south east) {\SI{2}{\liter} N};
    \node[label] at (group c1r2.south east) {\SI{0.75}{\liter} CH$_4$};
    \node[label] at (group c2r2.south east) {\SI{1}{\liter} N};

    \matrix[
      anchor=south east,
      inner sep=0.5ex,
      row sep=0.25ex,
      column sep=1ex,
      nodes={anchor=west, text width=21ex, inner sep=0pt},
      fill=black!5,
    ] at ([yshift=1.5ex] group c2r1.north east) {
      \draw[Set1-C] (0,0) -- (0.5,0);
        & \node {\strut Simulation 1LSWE};
      \draw[densely dashed, Set1-B] (3,0) -- (3.5,0);
        & \node {\strut Simulation 2LSWE}; \\
      \fill[Set1-A] (0.25,0) circle(1.5pt);
        & \node {\strut Experiment 1};
      \node[
        anchor=center,
        isosceles triangle,
        isosceles triangle apex angle=60,
        rotate=90,
        fill=Set1-E,
        text width=0.12cm,
      ] at (3.25,0) {};
        & \node {\strut Experiment 2}; \\
    };
  \end{tikzpicture}
  \caption{The spreading distance of liquid methane (CH$_4$) and nitrogen (N) on water as function of time.
  A comparison of results from the locally conservative 1LSWE and experiments by \citeauthor{chang1983}.\cite{chang1983}}
  \label{fig:chang}
\end{figure}
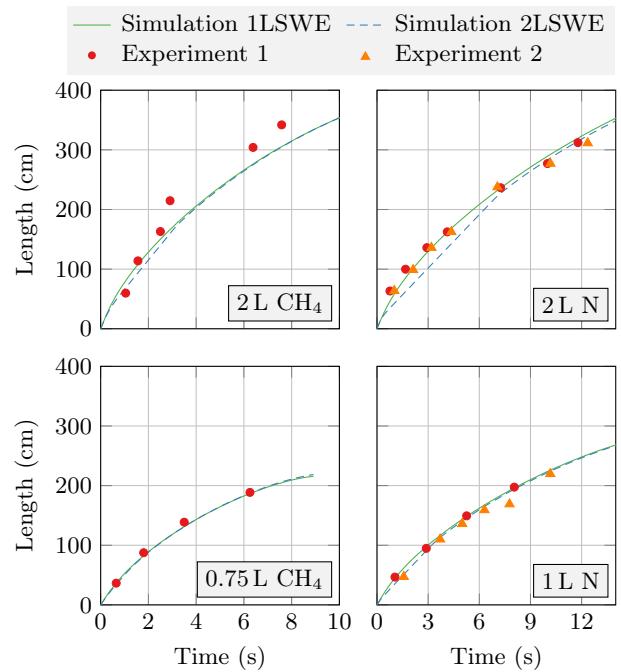

\begin{figure*}
  \centering
  \tikzsetnextfilename{chang-profiles}
  \begin{tikzpicture}[
      timelabel/.style={black, fill=white},
    ]

    \begin{axis}[
        height=7cm,
        width=0.8*\textwidth,
        ylabel={Height (\si{\centi\meter})},
        xlabel={Length (\si{\centi\meter})},
        ylabel style={align=center},
        xmin=-7,
        xmax=360,
        ymin=-2.1,
        ymax=2.1,
        axis lines=left,
        legend cell align=left,
        legend style={
          draw=none,
          fill=black!5,
          at={(1.05,1)},
          anchor=north east,
        },
        1lswe/.style={
          color=Set1-C,
          thick,
        },
        1lsweforget/.style={
          1lswe,
          forget plot,
        },
        2lswe/.style={
          color=Set1-B,
          densely dashed,
          thick,
        },
        2lsweforget/.style={
          2lswe,
          forget plot,
        },
        expr/.style={
          only marks,
          mark=*,
          mark size=2.0pt,
          mark options={Set1-A},
        },
        table/y expr = 100*\thisrow{h1},
        table/x expr = 100*\thisrow{x},
      ]

      \addplot[1lswe] table[
        restrict x to domain=-7:81.5,
        ] {data/chang_CH4_2L_1LSWE_t=1.047.txt};
      \addlegendentry{One-layer approximation}
      \addplot[1lsweforget] table[
          restrict x to domain=-7:81.5,
          y expr=-100*\thisrow{h2},
        ] {data/chang_CH4_2L_1LSWE_t=1.047.txt};
      \addplot[2lswe] table[restrict x to domain=-7:69.5]
        {data/chang_CH4_2L_2LSWE_t=1.047.txt};
      \addlegendentry{Globally conservative 2LSWE}
      \addplot[2lsweforget] table[
          restrict x to domain=-7:69.5,
          y expr=100*\thisrow{h2},
        ] {data/chang_CH4_2L_2LSWE_t=1.047.txt};

      \addplot[1lsweforget] table[
          restrict x to domain=-7:150.5,
        ] {data/chang_CH4_2L_1LSWE_t=2.503.txt};
      \addplot[1lsweforget] table[
          restrict x to domain=-7:150.5,
          y expr=-100*\thisrow{h2},
        ] {data/chang_CH4_2L_1LSWE_t=2.503.txt};
      \addplot[2lsweforget] table[
          restrict x to domain=-7:140.5,
        ] {data/chang_CH4_2L_2LSWE_t=2.503.txt};
      \addplot[2lsweforget] table[
          restrict x to domain=-7:140.5,
          y expr=100*\thisrow{h2},
        ] {data/chang_CH4_2L_2LSWE_t=2.503.txt};

      \addplot[1lsweforget] table[
          restrict x to domain=-7:276.2,
        ] {data/chang_CH4_2L_1LSWE_t=6.384.txt};
      \addplot[1lsweforget] table[
          restrict x to domain=-7:276.2,
          y expr=-100*\thisrow{h2},
        ] {data/chang_CH4_2L_1LSWE_t=6.384.txt};
      \addplot[2lsweforget] table[
          restrict x to domain=-7:275.5,
        ] {data/chang_CH4_2L_2LSWE_t=6.384.txt};
      \addplot[2lsweforget] table[
          restrict x to domain=-7:275.5,
          y expr=100*\thisrow{h2},
        ] {data/chang_CH4_2L_2LSWE_t=6.384.txt};

      \draw[densely dotted] (axis cs: -7, 0) -- (axis cs: 360, 0);

      \addplot[expr] coordinates {
        (59.75, 0)
        (163.1, 0)
        (304.1, 0)
      };
      \addlegendentry{Experiments (spreading distance)}

      \node[timelabel] at (axis cs: 59.75, -1.05) {$t = \SI{1.05}{\second}$};
      \node[timelabel] at (axis cs: 163.1, -1.05) {$t = \SI{2.50}{\second}$};
      \node[timelabel] at (axis cs: 304.1, -1.05) {$t = \SI{6.38}{\second}$};
    \end{axis}
  \end{tikzpicture}
  \caption{A comparison of height profiles produced by the one-layer approximation and the globally conservative 2LSWE at different times for the \SI{2}{\liter} liquid methane case.}
  \label{fig:chang-profiles}
\end{figure*}
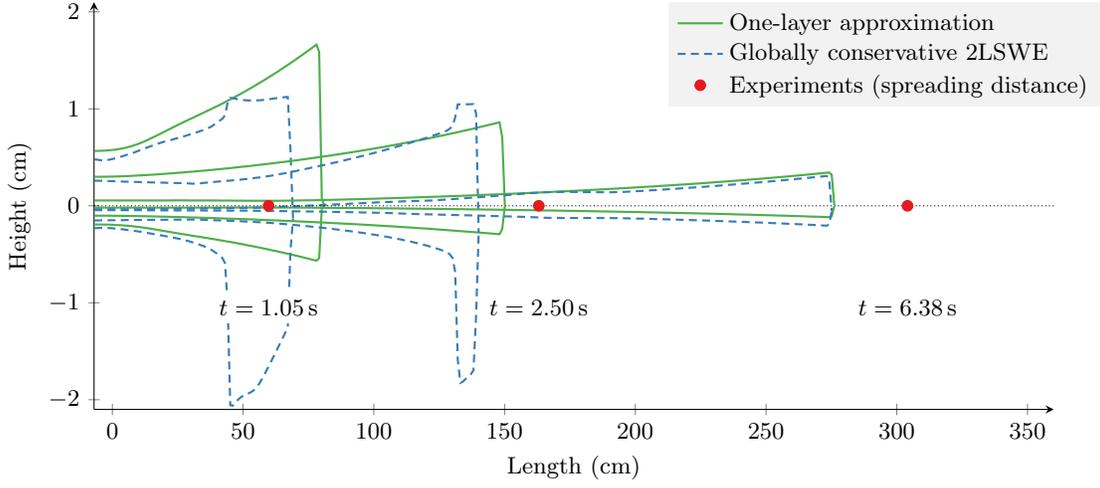

\section{Conclusions}
\label{sec:conclusion}
We have presented a comprehensive study of two-layer spreading where the depth of one layer is significantly larger than the other.
The main result is that the two-layer shallow-water equations can be approximated by an effective one-layer model with an effective gravitational constant as described in \cref{thm:oneLayerApproximation}.
In the literature, the globally conservative one-layer momentum equations are frequently used.
We have demonstrated both analytically and numerically that the locally conservative momentum equations should be used instead for a precise representation with an effective one-layer model.

Earlier works in the literature have made use of an additional boundary condition for the speed of the leading edge as a closure relation for the one-layer spreading model.
The speed is typically represented in terms of a constant Froude number which is adjusted to match experimental data so that $\FrLe \in [1,\sqrt{2}]$.
We have shown that this boundary condition can in fact be derived from the full two-layer shallow water equations.
By using the locally conservative version of the one-layer shallow water equations with the effective gravitational constant $(1-\rho_1/\rho_2)g$, the one-layer model correctly captures the behaviour of shocks and contact discontinuities.
In particular, the one-layer model results in $\FrLe = \sqrt{2}$, which is exactly the same as the theoretical predictions by \citet{karman1940,benjamin1968,ungarish2017} in the limit captured by \cref{thm:oneLayerApproximation}.

By using the same mathematical tools that were used to derive the one-layer approximation in \cref{thm:oneLayerApproximation}, we derived an expression for the Froude number at the front of a spreading fluid inside a rectangular cavity from the full two-layer shallow water equations.
The expression that we obtained from the analysis of the shallow-water equations is in good agreement with the expression by \citeauthor{benjamin1968}.
The agreement between these expressions suggests that the validity breakdown of the shallow-water equations in vicinity of shocks is less severe than previously suggested.

We compared to available experimental data for one-dimensional dam break experiments and found good agreement between the one-layer model derived in this work and experiments, where the mean relative deviation in the spreading radius was \SI{4.5}{\percent} for oil on water, \SI{10.2}{\percent} for methane on water, and \SI{4.2}{\percent} for nitrogen on water.
The spreading radius from the one-layer and two-layer descriptions could hardly be distinguished from each other after 10 seconds of spreading, but the fluid profiles from the two formulations differed at short times.
In comparison, the mean relative deviation in the spreading radius of the Fay model was \SI{12.6}{\percent} for oil on water.

The treatment in this paper has also included source terms, as long as they are source-bounded.
Source terms representing Coriolis forces are not source-bounded as defined in \cref{sec:reducingToOneLayer} because they are proportional to the depth.
Thus they are not covered by the present analysis.
It should be possible to include Coriolis-like source terms in the analysis, because although they are proportional to the depth, they are also proportional to the flow velocity which vanishes with increasing depth in the deep layer.
Since Coriolis forces are relevant particularly for the modelling of geophysical phenomena, it represents an attractive possibility for future work.

\section{Acknowledgements}
The authors wish to acknowledge fruitful discussions with Hans Langva Skarsvåg and Svend Tollak Munkejord.
This work was undertaken as part of the research project ``Predicting the risk of rapid phase-transition events in LNG spills (Predict-RPT)'', and the authors would like to acknowledge the financial support of the Research Council of Norway under the MAROFF programme (Grant 244076/O80).

\appendix
\section{Deriving Rankine-Hugoniot Conditions}
\label{app:travellingWave}
To obtain Rankine-Hugoniot condition for the 2LSWE we use an approach similar to that of \citeauthor{smoller1983}.\cite{smoller1983}
Consider a shock along $\Gamma$ which is normal to $\n$ and let $\phi$ be a test function with compact support $D$ which lies in the $x_nt$-plane.

Consider first the locally conservative system.
Because the velocity $\vect u_i$ is a weak solution, the normal component must satisfy
\begin{multline}
  \int_{D}\Biggl(\phi_t \ndot u_i + \phi_n \Biggl[\frac 1 2 (\ndot u_i)^2
  + \left(\frac{\rho_1}{\rho_2}\right)^{i-1}gh_1 \\
  + gh_2 + gb\Biggr] - \phi J_i\Biggr)\diff x_n\diff t = 0
  \label{eq:velWeakForm}
\end{multline}
where the subscript on $\phi$ denote partial differentiations and
\begin{equation}
  J_i = u_i^T\partial_T \left(\ndot u_i\right) - \frac{\n \cdot (\vect G_{h_i u_i} - \vect u_i G_{h_i})}{\rho_i h_i},
\end{equation}
$u_i^T$ is the tangential component of $\vect u_i$ and $\partial_T$ is differentiation with respect to the tangential direction.
The integrand in \cref{eq:velWeakForm} is a normal function and hence we can seperate the integral into two, one for each region where the velocities and heights are differentiable.
Call these regions $D_1$ and $D_2$.
Because the solution is differentiable inside these regions we may use Green's theorem and obtain
\begin{widetext}
\begin{multline}
  \int_{D_j}\left(\phi_t \ndot u_i + \phi_n \left[\frac 1 2 (\ndot u_i)^2 + \left(\frac{\rho_1}{\rho_2}\right)^{i-1}gh_1 + gh_2 + gb\right] - \phi J_i\right)\diff x_n\diff t \\
  =
  \int_{D_j}\left(\pd t (\phi \ndot u_i) + \pd {x_n}\left[\phi \left(\frac 1 2 \left(\ndot u_i\right)^2 + \left(\frac{\rho_1}{\rho_2}\right)^{i-1}gh_1 + gh_2 + gb\right)\right]\right)\diff x_n\diff t \\
  = \pm \lim_{\varepsilon\to 0}\int_{\Gamma \pm \varepsilon x_n}\phi \left(\left[\frac 1 2 \left(\ndot u_i\right)^2 + \left(\frac{\rho_1}{\rho_2}\right)^{i-1}gh_1 + gh_2 + gb\right]\diff t - \ndot u_i \diff x_n\right),
  \label{eq:velWeakPartj}
\end{multline}
\end{widetext}
because $\phi$ vanish on the boundary of $D$.
\Cref{eq:velWeakForm} is obtained by adding \cref{eq:velWeakPartj} with $j = 1$ and $j = 2$.
Because \cref{eq:velWeakForm} must hold for all test functions we obtain the Rankine-Hugoniot condition for the normal velocity component,
\begin{equation}
  S\jump{\ndot u_i} = \jump{\frac 1 2 (\ndot u_i)^2 + g \left(\frac{\rho_1}{\rho_2}\right)^{i-1}h_1 + gh_2}.
\end{equation}
For the globally conservative system a similar treatment yields
\begin{multline}
  S\jump{\rho_1h_1\vect u_1 + \rho_2h_2\vect u_2} \\
  = \jump{(\ndot u_1)\rho_1h_1\vect u_1 + (\ndot u_2)\rho_2h_2\vect u_2} \\
  + \jump{\frac 1 2 g \rho_1 h_1^2  + \rho_1 g h_1 h_2 + \frac 1 2 \rho_2 g h_2^2}\n
\end{multline}
and
\begin{multline}
  S\nd\jump{\vect u_2 - \vect u_1} =
  \Biggl\llbracket
    \frac 1 2\biggl[(\ndot u_2)^2 \\
    - (\ndot u_1)^2\biggr] - g\delta h_1
  \Biggr\rrbracket.
\end{multline}

The treatment presented above is not applicable for the tangential component of the velocity equations because they involve a term on the form $\ndot u_i\partial_n u_i^T$.
Note, that the tangential velocity components does not enter any of the other Rankine-Hugoniot conditions and that the equations are consistent if the tangential component are continuous across shocks.
Requiring $\jump{u_i^T} = 0$ has the additional advantage of making $\ndot u_i\partial_n u_i^T$ well defined as the product of  $\ndot u_i$ and $\partial_n u_i^T$.
Otherwise the distribution $\ndot u_i\partial_n u_i^T$ can not be decomposed without relying on some mollification scheme.

\citet{ostapenko2001} proposed that in the two-dimensional case one can use a Rankine-Hugoniot condition for the vorticity instead of the tangential velocity component.
Unfortunately, the proposed equation works only if one assumes $\jump{u_i^T} = 0$.
A conservation law for the quantity $\partial_1 u_{i,2} - \partial_2 u_{i,1} \equiv w_i$ can be obtained by taking distributional derivatives of the different components of the velocity equation, yielding
\begin{equation}
  \pdfun{w_i}{t} + \div \left(w_i\vect u_i + \vect J_i^{\perp}\right)  = 0,
  \label{eq:rotCon}
\end{equation}
where
\begin{equation}
  \vect J_i^{\perp} =
  \frac 1 {\rho_i h_i}
  \begin{pmatrix}
    G_{h_iu_i,2} - u_{i,2}G_{h_i} \\
    -G_{h_iu_i,1} + u_{i,1}G_{h_i}
  \end{pmatrix}.
\end{equation}
It may be tempting from \cref{eq:rotCon} to conclude that the vorticity must obey the jump condition
\begin{equation}
  S\jump{w_i} = \jump{w_i\ndot u_i + \ndot J_i^{\perp}}.
\end{equation}
However, this is only true if the vorticity $w_i$ can be interpreted as a normal function.
If the tangential velocity component is discontinuous across the shock, the vorticity would have a contribution similar to a delta distribution at the shock.
In that case one can not seperate the integral into two as was done in the derivation above, and the Rankine-Hugoniot condition would gain an additional contribution from the delta-like term.

For the purposes of this paper we can ignore the tangential velocity components across jumps.
The relevant equation in the one layer system is equal in both the globally conservative two-layer system and in the locally conservative two-layer system in the relevant limits.
Solutions of the two-layer systems are therefore also solutions of the locally conservative one-layer system.

\section{Rankine-Hugoniot conditions for 2LSWE}
\label{app:2LSWE-Rankine-Hugoniot}
In this appendix, we will show that the Rankine-Hugoniot conditions for the 2LSWE may be written as \cref{eq:2LSWECombinedRH}, repeated here for convenience,
\begin{subequations}
  \begin{align}
    \label{eq:CombinedRHa}
    S\jump{\rho_s h_s} &= \nd\jump{\rho_s h_s\vect u_s}, \\
    S\nd\jump{\vect u_s} &= \jump{\frac 1 2 (\ndot u_s)^2 + \delta g h_s}
    \nonumber \\
    &+ g_1(\gamma, S, h_s, \ndot u_s, \ndot u_d), \\
    \jump{\rho_1^{d-1}h_1 + \rho_2^{d-1}h_2} &= g_2(\gamma, S, h_s, \ndot u_s, \ndot u_d), \\
    S\nd \jump{\vect u_d} &= g_3(\gamma, S, h_s, \ndot u_s, \ndot u_d).
  \end{align}
  \label{eq:CombinedRH}
\end{subequations}
Here $g_1$ and $g_2$ differ for \cref{eq:2LSWEGenVel} and \cref{eq:2LSWEGenTot}, while $g_3$ will be the same.
Further, we will show that all of $g_1, g_2$, and $g_3$ vanish when $\gamma = 0$.
Note that \cref{eq:CombinedRHa} follows directly from \cref{eq:rankineHugoniotScalar} applied to the shallowest layer.

We first consider $g_3$, which can be obtained from mass conservation of layer $d$.
The scalar Rankine-Hugoniot condition~\eqref{eq:rankineHugoniotScalar} immediately yields
\begin{multline}
  S\jump{h_d}
    = \nd\jump{h_d \vect u_d} \nonumber \\
    \implies \n\cdot\jump{\vect u_d}
    = \frac{\jump{h_d}}{\av{h_d}}\left(S - \av{\ndot u_d}\right)
    = \frac{g_3}{S},
  \label{eq:g3}
\end{multline}
where we used that $\jump{ab} = \jump a \av b + \av a \jump b$.
This gives
\begin{equation}
  g_3 = \gamma S \jump{h_d} \left(S - \av{\ndot u_d}\right).
  \label{eq:g3}
\end{equation}

Next we consider the expressions for $g_1$ and $g_2$.
We first consider the locally conservative momentum equations~(\eqref{eq:2LSWEGenVel_u1} and~\eqref{eq:2LSWEGenVel_u2}).
We apply the Rankine-Hugoniot condition~\eqref{eq:2LSWEGenVelRH} with $i=d$ and insert \cref{eq:g3} to obtain
\begin{equation}
  \jump{\rho_1^{d-1}h_1 + \rho_2^{d-1}h_2} = \frac{\rho_2^{d-1}\jump{h_d}}{g\av{h_d}}(S - \av{\ndot u_d})^2,
  \label{eq:udRan}
\end{equation}
that is,
\begin{equation}
  g_2 = \frac{\rho_2^{d-1}}{g}\gamma \jump{h_d}(S - \av{\ndot u_d})^2.
  \label{eq:g2Loc}
\end{equation}

Now consider \cref{eq:2LSWEGenVelRH} with $i=s$,
\begin{equation}
  S\nd\jump{\vect u_s} = \jump{\frac 1 2 (\ndot u_s)^2
  + g\left(\frac{\rho_1}{\rho_2}\right)^{s-1}h_1 + gh_2}.
\end{equation}
If we consider the cases $s=1$ and $s=2$ separately and use that $d - 1 = 2 - s$, we find from \cref{eq:udRan} that
\begin{multline}
  \jump{g\left(\frac{\rho_1}{\rho_2}\right)^{s-1}h_1 + gh_2}
  = \jump{\delta g h_s} \\
  + \left(\frac{\rho_1}{\rho_2}\right)^{s-1}\gamma \jump{h_d}(S - \av{\ndot u_d})^2,
\end{multline}
where $\delta = 1 - \rho_1/\rho_2$, as defined in \cref{eq:delta}.
Inserting into \cref{eq:g2Loc} we get that
\begin{equation}
  S\nd\jump{\vect u_s} = \jump{\frac 1 2 (\ndot u_s)^2 + \delta g h_s} + g_1
\end{equation}
with
\begin{equation}
  g_1 = \left(\frac{\rho_1}{\rho_2}\right)^{s-1}\gamma \jump{h_d}(S - \av{\ndot u_d})^2.
  \label{eq:g1Loc}
\end{equation}

Finally, we consider the globally conservative momentum equations~(\eqref{eq:2LSWEGenTot_sumMom} and~\eqref{eq:2LSWEGenTot_diffVel}).
We first consider the case where $d=2$.
If we take the scalar product of the Rankine-Hugoniot conditions \cref{eq:2LSWEGenTotRH_diffVel,eq:2LSWEGenTotRH_totMom} with $\n$ and use \cref{eq:g3}, we obtain
\begin{widetext}
  \begin{subequations}
    \begin{align}
      \jump{\rho_1h_1 + \rho_2h_2} &=
        \frac 1 {g\av{h_2}}\Biggl(
          S\jump{\rho_1h_1\ndot u_1}
          - \jump{\rho_1h_1(\ndot u_1)^2}
          + S\rho_2 \jump{h_2}\av{\ndot u_2}
          - \rho_2\jump{h_2}\av{(\ndot u_2)^2}
          \nonumber \\
          &\hspace{3cm}- \jump{\frac 1 2 g \rho_1 h_1^2}
          - \rho_1g\av{h_1}\jump{h_2}
          + \rho_2\jump{h_2}(S - \av{\ndot u_2})
      (S - 2 \av{\ndot u_2})\Biggr), \\
      S\nd\jump{\vect u_1} &= \jump{\frac 1 2 (\ndot u_1)^2
                           + g\delta h_1}
                           + \frac{\jump{h_2}}{\av{h_2}}(S - \av{\ndot u_2})^2,
    \end{align}
  \end{subequations}
\end{widetext}
that is,
\begin{equation}
  g_1 = \gamma \jump{h_2}(S - \av{\ndot u_2})^2
\end{equation}
and
\begin{multline}
  g_2 = \frac{\gamma}{g}\Biggl(
      S\jump{\rho_1h_1\ndot u_1}
      - \jump{\rho_1h_1(\ndot u_1)^2} \\
      + S\rho_2 \jump{h_2}\av{\ndot u_2}
      - \rho_2\jump{h_2}\av{(\ndot u_2)^2} \\
      - \jump{\frac 1 2 g \rho_1 h_1^2}
      - \rho_1g\av{h_1}\jump{h_2} \\
    + \rho_2\jump{h_2}(S - \av{\ndot u_2})(S - 2 \av{\ndot u_2})\Biggr).
\end{multline}

Next we consider the case when $d=1$.
We may then write \cref{eq:2LSWEGenTotRH_totMom} as
\begin{multline}
  \jump{h_1 + h_2}
  = g_2 = \frac{\gamma}{g}\Biggl(
    \frac{S}{\rho_1}\jump{\rho_2h_2\ndot u_2} \\
    - \frac{1}{\rho_1}\jump{\rho_2h_2(\ndot u_2)^2}
    + \jump{h_1}\Bigl(S\av{\ndot u_1}\\ - \av{(\ndot u_1)^2}\Bigr)
    - g\av{h_2}\jump{h_1}
    - \frac{\rho_2}{2\rho_1}\jump{h_2^2} \\
    + \jump{h_1}\left(S-\av{\ndot u_1}\right)\left(S - 2\av{\ndot u_1}\right)\Biggr),
    \label{eq:g2GlobThm2}
\end{multline}
We then insert this into \cref{eq:2LSWEGenTotRH_diffVel} to get
\begin{multline}
  S\nd\jump{\vect u_2} = \jump{\frac 1 2 (\ndot u_2)^2 + \delta g h_2} \\
  + \left(\gamma \jump{h_1}(S - \av{\ndot u_1})^2 +\delta g\, g_2\right),
\end{multline}
which gives
\begin{equation}
  g_1 = \left(\gamma\jump{h_1}(S - \av{\ndot u_1})^2 +\delta g\, g_2\right).
\end{equation}


\begin{thebibliography}{78}%
\makeatletter
\providecommand \@ifxundefined [1]{%
 \@ifx{#1\undefined}
}%
\providecommand \@ifnum [1]{%
 \ifnum #1\expandafter \@firstoftwo
 \else \expandafter \@secondoftwo
 \fi
}%
\providecommand \@ifx [1]{%
 \ifx #1\expandafter \@firstoftwo
 \else \expandafter \@secondoftwo
 \fi
}%
\providecommand \natexlab [1]{#1}%
\providecommand \enquote  [1]{``#1''}%
\providecommand \bibnamefont  [1]{#1}%
\providecommand \bibfnamefont [1]{#1}%
\providecommand \citenamefont [1]{#1}%
\providecommand \href@noop [0]{\@secondoftwo}%
\providecommand \href [0]{\begingroup \@sanitize@url \@href}%
\providecommand \@href[1]{\@@startlink{#1}\@@href}%
\providecommand \@@href[1]{\endgroup#1\@@endlink}%
\providecommand \@sanitize@url [0]{\catcode `\\12\catcode `\$12\catcode
  `\&12\catcode `\#12\catcode `\^12\catcode `\_12\catcode `\%12\relax}%
\providecommand \@@startlink[1]{}%
\providecommand \@@endlink[0]{}%
\providecommand \url  [0]{\begingroup\@sanitize@url \@url }%
\providecommand \@url [1]{\endgroup\@href {#1}{\urlprefix }}%
\providecommand \urlprefix  [0]{URL }%
\providecommand \Eprint [0]{\href }%
\providecommand \doibase [0]{https://doi.org/}%
\providecommand \selectlanguage [0]{\@gobble}%
\providecommand \bibinfo  [0]{\@secondoftwo}%
\providecommand \bibfield  [0]{\@secondoftwo}%
\providecommand \translation [1]{[#1]}%
\providecommand \BibitemOpen [0]{}%
\providecommand \bibitemStop [0]{}%
\providecommand \bibitemNoStop [0]{.\EOS\space}%
\providecommand \EOS [0]{\spacefactor3000\relax}%
\providecommand \BibitemShut  [1]{\csname bibitem#1\endcsname}%
\let\auto@bib@innerbib\@empty
\bibitem [{\citenamefont {Franklin}, \citenamefont {Brownrigg},\ and\
  \citenamefont {Farish}(1774)}]{franklin1774}%
  \BibitemOpen
  \bibfield  {author} {\bibinfo {author} {\bibfnamefont {B.}~\bibnamefont
  {Franklin}}, \bibinfo {author} {\bibfnamefont {W.}~\bibnamefont
  {Brownrigg}},\ and\ \bibinfo {author} {\bibnamefont {Farish}},\ }\bibfield
  {title} {\enquote {\bibinfo {title} {{XLIV.} {Of} the stilling of waves by
  means of oil. {Extracted} from sundry letters between {Benjamin} {Franklin},
  {LL.~D.~F.~R.~S.}~{William} {Brownrigg}, {M.~D.~F.~R.~S.}~and the {Reverend}
  {Mr.} {Farish}},}\ }\href {https://doi.org/10.1098/rstl.1774.0044} {\bibfield
   {journal} {\bibinfo  {journal} {Philosophical Transactions}\ }\textbf
  {\bibinfo {volume} {64}},\ \bibinfo {pages} {445--460} (\bibinfo {year}
  {1774})}\BibitemShut {NoStop}%
\bibitem [{\citenamefont {Hoult}(1972)}]{hoult1972}%
  \BibitemOpen
  \bibfield  {author} {\bibinfo {author} {\bibfnamefont {D.~P.}\ \bibnamefont
  {Hoult}},\ }\bibfield  {title} {\enquote {\bibinfo {title} {Oil spreading on
  the sea},}\ }\href {https://doi.org/10.1146/annurev.fl.04.010172.002013}
  {\bibfield  {journal} {\bibinfo  {journal} {Annual Review of Fluid
  Mechanics}\ }\textbf {\bibinfo {volume} {4}},\ \bibinfo {pages} {341--368}
  (\bibinfo {year} {1972})}\BibitemShut {NoStop}%
\bibitem [{\citenamefont {Fay}(1971)}]{fay1971}%
  \BibitemOpen
  \bibfield  {author} {\bibinfo {author} {\bibfnamefont {J.~A.}\ \bibnamefont
  {Fay}},\ }\bibfield  {title} {\enquote {\bibinfo {title} {Physical processes
  in the spread of oil on a water surface},}\ }\href
  {https://doi.org/10.7901/2169-3358-1971-1-463} {\bibfield  {journal}
  {\bibinfo  {journal} {International Oil Spill Conference Proceedings}\
  }\textbf {\bibinfo {volume} {1971}},\ \bibinfo {pages} {463--467} (\bibinfo
  {year} {1971})}\BibitemShut {NoStop}%
\bibitem [{\citenamefont {Chebbi}(2001)}]{chebbi2001}%
  \BibitemOpen
  \bibfield  {author} {\bibinfo {author} {\bibfnamefont {R.}~\bibnamefont
  {Chebbi}},\ }\bibfield  {title} {\enquote {\bibinfo {title}
  {{Viscous-Gravity} spreading of oil on water},}\ }\href
  {https://doi.org/10.1002/aic.690470207} {\bibfield  {journal} {\bibinfo
  {journal} {AIChE Journal}\ }\textbf {\bibinfo {volume} {47}},\ \bibinfo
  {pages} {288--294} (\bibinfo {year} {2001})}\BibitemShut {NoStop}%
\bibitem [{\citenamefont {Fay}(2007)}]{fay2007}%
  \BibitemOpen
  \bibfield  {author} {\bibinfo {author} {\bibfnamefont {J.}~\bibnamefont
  {Fay}},\ }\bibfield  {title} {\enquote {\bibinfo {title} {Spread of large
  {LNG} pools on the sea},}\ }\href
  {https://doi.org/10.1016/j.jhazmat.2006.10.024} {\bibfield  {journal}
  {\bibinfo  {journal} {Journal of Hazardous Materials}\ }\textbf {\bibinfo
  {volume} {140}},\ \bibinfo {pages} {541 -- 551} (\bibinfo {year} {2007})},\
  \bibinfo {note} {lNG Special Issue - Dedicated to Risk Assessment and
  Consequence Analysis for Liquefied Natural Gas Spills}\BibitemShut {NoStop}%
\bibitem [{\citenamefont {Fay}(2003)}]{fay2003}%
  \BibitemOpen
  \bibfield  {author} {\bibinfo {author} {\bibfnamefont {J.}~\bibnamefont
  {Fay}},\ }\bibfield  {title} {\enquote {\bibinfo {title} {Model of spills and
  fires from {LNG} and oil tankers},}\ }\href
  {https://doi.org/10.1016/S0304-3894(02)00197-8} {\bibfield  {journal}
  {\bibinfo  {journal} {Journal of Hazardous Materials}\ }\textbf {\bibinfo
  {volume} {96}},\ \bibinfo {pages} {171 -- 188} (\bibinfo {year}
  {2003})}\BibitemShut {NoStop}%
\bibitem [{\citenamefont {Brandeis}\ and\ \citenamefont
  {Ermak}(1983)}]{brandeis1983_part2}%
  \BibitemOpen
  \bibfield  {author} {\bibinfo {author} {\bibfnamefont {J.}~\bibnamefont
  {Brandeis}}\ and\ \bibinfo {author} {\bibfnamefont {D.~L.}\ \bibnamefont
  {Ermak}},\ }\bibfield  {title} {\enquote {\bibinfo {title} {Numerical
  simulation of liquefied fuel spills: {II.} instantaneous and continuous {LNG}
  spills on an unconfined water surface},}\ }\href
  {https://doi.org/10.1002/fld.1650030405} {\bibfield  {journal} {\bibinfo
  {journal} {International Journal for Numerical Methods in Fluids}\ }\textbf
  {\bibinfo {volume} {3}},\ \bibinfo {pages} {347--361} (\bibinfo {year}
  {1983})}\BibitemShut {NoStop}%
\bibitem [{\citenamefont {Stanislav}, \citenamefont {Kokal},\ and\
  \citenamefont {Nich}(1986)}]{stanislav1986}%
  \BibitemOpen
  \bibfield  {author} {\bibinfo {author} {\bibfnamefont {J.~F.}\ \bibnamefont
  {Stanislav}}, \bibinfo {author} {\bibfnamefont {S.}~\bibnamefont {Kokal}},\
  and\ \bibinfo {author} {\bibfnamefont {M.~K.}\ \bibnamefont {Nich}},\
  }\bibfield  {title} {\enquote {\bibinfo {title} {Gas liquid flow in downward
  and upward inclined pipes},}\ }\href
  {https://doi.org/10.1002/cjce.5450640601} {\bibfield  {journal} {\bibinfo
  {journal} {The Canadian Journal of Chemical Engineering}\ }\textbf {\bibinfo
  {volume} {64}},\ \bibinfo {pages} {881--890} (\bibinfo {year}
  {1986})}\BibitemShut {NoStop}%
\bibitem [{\citenamefont {Adduce}, \citenamefont {Sciortino},\ and\
  \citenamefont {Proietti}(2012)}]{adduce2012}%
  \BibitemOpen
  \bibfield  {author} {\bibinfo {author} {\bibfnamefont {C.}~\bibnamefont
  {Adduce}}, \bibinfo {author} {\bibfnamefont {G.}~\bibnamefont {Sciortino}},\
  and\ \bibinfo {author} {\bibfnamefont {S.}~\bibnamefont {Proietti}},\
  }\bibfield  {title} {\enquote {\bibinfo {title} {Gravity currents produced by
  lock exchanges: Experiments and simulations with a two-layer shallow-water
  model with entrainment},}\ }\href
  {https://doi.org/10.1061/(ASCE)HY.1943-7900.0000484} {\bibfield  {journal}
  {\bibinfo  {journal} {Journal of Hydraulic Engineering}\ }\textbf {\bibinfo
  {volume} {138}},\ \bibinfo {pages} {111--121} (\bibinfo {year}
  {2012})}\BibitemShut {NoStop}%
\bibitem [{\citenamefont {Shin}, \citenamefont {Dalziel},\ and\ \citenamefont
  {Linden}(2004)}]{shin2004}%
  \BibitemOpen
  \bibfield  {author} {\bibinfo {author} {\bibfnamefont {J.~O.}\ \bibnamefont
  {Shin}}, \bibinfo {author} {\bibfnamefont {S.~B.}\ \bibnamefont {Dalziel}},\
  and\ \bibinfo {author} {\bibfnamefont {P.~F.}\ \bibnamefont {Linden}},\
  }\bibfield  {title} {\enquote {\bibinfo {title} {Gravity currents produced by
  lock exchange},}\ }\href {https://doi.org/10.1017/S002211200400165X}
  {\bibfield  {journal} {\bibinfo  {journal} {Journal of Fluid Mechanics}\
  }\textbf {\bibinfo {volume} {521}},\ \bibinfo {pages} {1–34} (\bibinfo
  {year} {2004})}\BibitemShut {NoStop}%
\bibitem [{\citenamefont {Moodie}(2002)}]{moodie2002}%
  \BibitemOpen
  \bibfield  {author} {\bibinfo {author} {\bibfnamefont {T.}~\bibnamefont
  {Moodie}},\ }\bibfield  {title} {\enquote {\bibinfo {title} {Gravity
  currents},}\ }\href {https://doi.org/10.1016/S0377-0427(01)00551-9}
  {\bibfield  {journal} {\bibinfo  {journal} {Journal of Computational and
  Applied Mathematics}\ }\textbf {\bibinfo {volume} {144}},\ \bibinfo {pages}
  {49 -- 83} (\bibinfo {year} {2002})},\ \bibinfo {note} {selected papers of
  the Int. Symp. on Applied Mathematics, August 2000, Dalian,
  China}\BibitemShut {NoStop}%
\bibitem [{\citenamefont {Mériaux}\ \emph {et~al.}(2016)\citenamefont
  {Mériaux}, \citenamefont {Zemach}, \citenamefont {Kurz-Besson},\ and\
  \citenamefont {Ungarish}}]{meriaux2016}%
  \BibitemOpen
  \bibfield  {author} {\bibinfo {author} {\bibfnamefont {C.~A.}\ \bibnamefont
  {Mériaux}}, \bibinfo {author} {\bibfnamefont {T.}~\bibnamefont {Zemach}},
  \bibinfo {author} {\bibfnamefont {C.~B.}\ \bibnamefont {Kurz-Besson}},\ and\
  \bibinfo {author} {\bibfnamefont {M.}~\bibnamefont {Ungarish}},\ }\bibfield
  {title} {\enquote {\bibinfo {title} {The propagation of particulate gravity
  currents in a v-shaped triangular cross section channel: Lock-release
  experiments and shallow-water numerical simulations},}\ }\href
  {https://doi.org/10.1063/1.4942241} {\bibfield  {journal} {\bibinfo
  {journal} {Physics of Fluids}\ }\textbf {\bibinfo {volume} {28}},\ \bibinfo
  {pages} {036601} (\bibinfo {year} {2016})}\BibitemShut {NoStop}%
\bibitem [{\citenamefont {Stickland}(1972)}]{stickland1972}%
  \BibitemOpen
  \bibfield  {author} {\bibinfo {author} {\bibfnamefont {F.}~\bibnamefont
  {Stickland}},\ }\bibfield  {title} {\enquote {\bibinfo {title} {The formation
  of monomolecular layers by spreading a copper stearate solution},}\ }\href
  {https://doi.org/10.1016/0021-9797(72)90003-3} {\bibfield  {journal}
  {\bibinfo  {journal} {Journal of Colloid and Interface Science}\ }\textbf
  {\bibinfo {volume} {40}},\ \bibinfo {pages} {142 -- 153} (\bibinfo {year}
  {1972})}\BibitemShut {NoStop}%
\bibitem [{\citenamefont {Mar}\ and\ \citenamefont {Mason}(1968)}]{mar1968}%
  \BibitemOpen
  \bibfield  {author} {\bibinfo {author} {\bibfnamefont {A.}~\bibnamefont
  {Mar}}\ and\ \bibinfo {author} {\bibfnamefont {S.~G.}\ \bibnamefont
  {Mason}},\ }\bibfield  {title} {\enquote {\bibinfo {title} {Coalescence in
  three-phase fluid systems},}\ }\href {https://doi.org/10.1007/BF01533973}
  {\bibfield  {journal} {\bibinfo  {journal} {Kolloid-Zeitschrift und
  Zeitschrift f{\"u}r Polymere}\ }\textbf {\bibinfo {volume} {224}},\ \bibinfo
  {pages} {161--172} (\bibinfo {year} {1968})}\BibitemShut {NoStop}%
\bibitem [{\citenamefont {White}(2011)}]{white2003}%
  \BibitemOpen
  \bibfield  {author} {\bibinfo {author} {\bibfnamefont {F.~M.}\ \bibnamefont
  {White}},\ }\href@noop {} {\emph {\bibinfo {title} {Fluid Mechanics}}},\
  \bibinfo {edition} {seventh}\ ed.\ (\bibinfo  {publisher} {McGraw-Hill},\
  \bibinfo {address} {New York},\ \bibinfo {year} {2011})\BibitemShut {NoStop}%
\bibitem [{\citenamefont {Vaughan}\ and\ \citenamefont
  {O’Malley}(2005)}]{vaughan2005}%
  \BibitemOpen
  \bibfield  {author} {\bibinfo {author} {\bibfnamefont {C.~L.}\ \bibnamefont
  {Vaughan}}\ and\ \bibinfo {author} {\bibfnamefont {M.~J.}\ \bibnamefont
  {O’Malley}},\ }\bibfield  {title} {\enquote {\bibinfo {title} {Froude and
  the contribution of naval architecture to our understanding of bipedal
  locomotion},}\ }\href {https://doi.org/10.1016/j.gaitpost.2004.01.011}
  {\bibfield  {journal} {\bibinfo  {journal} {Gait \& Posture}\ }\textbf
  {\bibinfo {volume} {21}},\ \bibinfo {pages} {350 -- 362} (\bibinfo {year}
  {2005})}\BibitemShut {NoStop}%
\bibitem [{\citenamefont {von Kármán}(1940)}]{karman1940}%
  \BibitemOpen
  \bibfield  {author} {\bibinfo {author} {\bibfnamefont {T.}~\bibnamefont {von
  Kármán}},\ }\bibfield  {title} {\enquote {\bibinfo {title} {The engineer
  grapples with nonlinear problems},}\ }\href
  {https://doi.org/10.1090/S0002-9904-1940-07266-0} {\bibfield  {journal}
  {\bibinfo  {journal} {Bulletin of the American Mathematical Society}\
  }\textbf {\bibinfo {volume} {46}},\ \bibinfo {pages} {615–683} (\bibinfo
  {year} {1940})}\BibitemShut {NoStop}%
\bibitem [{\citenamefont {Benjamin}(1968)}]{benjamin1968}%
  \BibitemOpen
  \bibfield  {author} {\bibinfo {author} {\bibfnamefont {T.~B.}\ \bibnamefont
  {Benjamin}},\ }\bibfield  {title} {\enquote {\bibinfo {title} {Gravity
  currents and related phenomena},}\ }\href
  {https://doi.org/10.1017/S0022112068000133} {\bibfield  {journal} {\bibinfo
  {journal} {Journal of Fluid Mechanics}\ }\textbf {\bibinfo {volume} {31}},\
  \bibinfo {pages} {209–248} (\bibinfo {year} {1968})}\BibitemShut {NoStop}%
\bibitem [{\citenamefont {Ungarish}(2017)}]{ungarish2017}%
  \BibitemOpen
  \bibfield  {author} {\bibinfo {author} {\bibfnamefont {M.}~\bibnamefont
  {Ungarish}},\ }\bibfield  {title} {\enquote {\bibinfo {title} {Benjamin's
  gravity current into an ambient fluid with an open surface},}\ }\href
  {https://doi.org/10.1017/jfm.2017.460} {\bibfield  {journal} {\bibinfo
  {journal} {Journal of Fluid Mechanics}\ }\textbf {\bibinfo {volume} {825}},\
  \bibinfo {pages} {R1} (\bibinfo {year} {2017})}\BibitemShut {NoStop}%
\bibitem [{\citenamefont {Fay}(1969)}]{fay1969}%
  \BibitemOpen
  \bibfield  {author} {\bibinfo {author} {\bibfnamefont {J.~A.}\ \bibnamefont
  {Fay}},\ }\bibfield  {title} {\enquote {\bibinfo {title} {The spread of oil
  slicks on a calm sea},}\ }in\ \href
  {https://doi.org/10.1007/978-1-4684-9019-0_5} {\emph {\bibinfo {booktitle}
  {Oil on the Sea: Proceedings of a symposium on the scientific and engineering
  aspects of oil pollution of the sea, sponsored by {Massachusetts} {Institute}
  of {Technology} and {Woods} {Hole} {Oceanographic} {Institution} and held at
  {Cambridge}, {Massachusetts}, {May} 16, 1969}}},\ \bibinfo {editor} {edited
  by\ \bibinfo {editor} {\bibfnamefont {D.~P.}\ \bibnamefont {Hoult}}}\
  (\bibinfo  {publisher} {Springer US},\ \bibinfo {address} {Boston, MA},\
  \bibinfo {year} {1969})\ pp.\ \bibinfo {pages} {53--63}\BibitemShut {NoStop}%
\bibitem [{\citenamefont {Fannelop}\ and\ \citenamefont
  {Waldman}(1972)}]{fannelop1972}%
  \BibitemOpen
  \bibfield  {author} {\bibinfo {author} {\bibfnamefont {T.~K.}\ \bibnamefont
  {Fannelop}}\ and\ \bibinfo {author} {\bibfnamefont {G.~D.}\ \bibnamefont
  {Waldman}},\ }\bibfield  {title} {\enquote {\bibinfo {title} {Dynamics of oil
  slicks},}\ }\href {https://doi.org/10.2514/3.50127} {\bibfield  {journal}
  {\bibinfo  {journal} {AIAA Journal}\ }\textbf {\bibinfo {volume} {10}},\
  \bibinfo {pages} {506--510} (\bibinfo {year} {1972})}\BibitemShut {NoStop}%
\bibitem [{\citenamefont {Ovsyannikov}(1979)}]{ovsyannikov1979}%
  \BibitemOpen
  \bibfield  {author} {\bibinfo {author} {\bibfnamefont {L.}~\bibnamefont
  {Ovsyannikov}},\ }\bibfield  {title} {\enquote {\bibinfo {title} {Two-layer
  “shallow water” model},}\ }\href {https://doi.org/10.1007/BF00910010}
  {\bibfield  {journal} {\bibinfo  {journal} {Journal of Applied Mechanics and
  Technical Physics}\ }\textbf {\bibinfo {volume} {20}},\ \bibinfo {pages}
  {127--134} (\bibinfo {year} {1979})}\BibitemShut {NoStop}%
\bibitem [{\citenamefont {Vreugdenhil}(1979)}]{vreugdenhil1979}%
  \BibitemOpen
  \bibfield  {author} {\bibinfo {author} {\bibfnamefont {C.}~\bibnamefont
  {Vreugdenhil}},\ }\bibfield  {title} {\enquote {\bibinfo {title} {Two-layer
  shallow-water flow in two dimensions, a numerical study},}\ }\href
  {https://doi.org/10.1016/0021-9991(79)90014-7} {\bibfield  {journal}
  {\bibinfo  {journal} {Journal of Computational Physics}\ }\textbf {\bibinfo
  {volume} {33}},\ \bibinfo {pages} {169 -- 184} (\bibinfo {year}
  {1979})}\BibitemShut {NoStop}%
\bibitem [{\citenamefont {Audusse}\ \emph {et~al.}(2011)\citenamefont
  {Audusse}, \citenamefont {Bristeau}, \citenamefont {Perthame},\ and\
  \citenamefont {Sainte-Marie}}]{audusse11}%
  \BibitemOpen
  \bibfield  {author} {\bibinfo {author} {\bibfnamefont {E.}~\bibnamefont
  {Audusse}}, \bibinfo {author} {\bibfnamefont {M.-O.}\ \bibnamefont
  {Bristeau}}, \bibinfo {author} {\bibfnamefont {B.}~\bibnamefont {Perthame}},\
  and\ \bibinfo {author} {\bibfnamefont {J.}~\bibnamefont {Sainte-Marie}},\
  }\bibfield  {title} {\enquote {\bibinfo {title} {A multilayer {Saint-Venant}
  system with mass exchanges for shallow water flows. derivation and numerical
  validation},}\ }\href {https://doi.org/10.1051/m2an/2010036} {\bibfield
  {journal} {\bibinfo  {journal} {ESAIM: Mathematical Modelling and Numerical
  Analysis}\ }\textbf {\bibinfo {volume} {45}},\ \bibinfo {pages} {169--–200}
  (\bibinfo {year} {2011})}\BibitemShut {NoStop}%
\bibitem [{\citenamefont {Milewski}\ \emph {et~al.}(2004)\citenamefont
  {Milewski}, \citenamefont {Tabak}, \citenamefont {Turner}, \citenamefont
  {Rosales},\ and\ \citenamefont {Menzaque}}]{milewski2004}%
  \BibitemOpen
  \bibfield  {author} {\bibinfo {author} {\bibfnamefont {P.}~\bibnamefont
  {Milewski}}, \bibinfo {author} {\bibfnamefont {E.}~\bibnamefont {Tabak}},
  \bibinfo {author} {\bibfnamefont {C.}~\bibnamefont {Turner}}, \bibinfo
  {author} {\bibfnamefont {R.}~\bibnamefont {Rosales}},\ and\ \bibinfo {author}
  {\bibfnamefont {F.}~\bibnamefont {Menzaque}},\ }\bibfield  {title} {\enquote
  {\bibinfo {title} {Nonlinear stability of two-layer flows},}\ }\href
  {https://projecteuclid.org/euclid.cms/1109868729} {\bibfield  {journal}
  {\bibinfo  {journal} {Communications in Mathematical Sciences}\ }\textbf
  {\bibinfo {volume} {2}},\ \bibinfo {pages} {427--442} (\bibinfo {year}
  {2004})}\BibitemShut {NoStop}%
\bibitem [{\citenamefont {Lannes}\ and\ \citenamefont
  {Ming}(2015)}]{lannes2015}%
  \BibitemOpen
  \bibfield  {author} {\bibinfo {author} {\bibfnamefont {D.}~\bibnamefont
  {Lannes}}\ and\ \bibinfo {author} {\bibfnamefont {M.}~\bibnamefont {Ming}},\
  }\bibfield  {title} {\enquote {\bibinfo {title} {The {Kelvin}-{Helmholtz}
  instabilities in two-fluids shallow water models},}\ }in\ \href
  {https://doi.org/10.1007/978-1-4939-2950-4_7} {\emph {\bibinfo {booktitle}
  {Hamiltonian Partial Differential Equations and Applications}}},\ \bibinfo
  {editor} {edited by\ \bibinfo {editor} {\bibfnamefont {P.}~\bibnamefont
  {Guyenne}}, \bibinfo {editor} {\bibfnamefont {D.}~\bibnamefont {Nicholls}},\
  and\ \bibinfo {editor} {\bibfnamefont {C.}~\bibnamefont {Sulem}}}\ (\bibinfo
  {publisher} {Springer New York},\ \bibinfo {year} {2015})\ pp.\ \bibinfo
  {pages} {185--234}\BibitemShut {NoStop}%
\bibitem [{\citenamefont {Stewart}\ and\ \citenamefont
  {Dellar}(2013)}]{stewart2013}%
  \BibitemOpen
  \bibfield  {author} {\bibinfo {author} {\bibfnamefont {A.~L.}\ \bibnamefont
  {Stewart}}\ and\ \bibinfo {author} {\bibfnamefont {P.~J.}\ \bibnamefont
  {Dellar}},\ }\bibfield  {title} {\enquote {\bibinfo {title} {Multilayer
  shallow water equations with complete coriolis force. {Part} 3.
  {Hyperbolicity} and stability under shear},}\ }\href
  {https://doi.org/10.1017/jfm.2013.121} {\bibfield  {journal} {\bibinfo
  {journal} {Journal of Fluid Mechanics}\ }\textbf {\bibinfo {volume} {723}},\
  \bibinfo {pages} {289–317} (\bibinfo {year} {2013})}\BibitemShut {NoStop}%
\bibitem [{\citenamefont {Lam}, \citenamefont {Ghidaoui},\ and\ \citenamefont
  {Kolyshkin}(2016)}]{lam2016}%
  \BibitemOpen
  \bibfield  {author} {\bibinfo {author} {\bibfnamefont {M.~Y.}\ \bibnamefont
  {Lam}}, \bibinfo {author} {\bibfnamefont {M.~S.}\ \bibnamefont {Ghidaoui}},\
  and\ \bibinfo {author} {\bibfnamefont {A.~A.}\ \bibnamefont {Kolyshkin}},\
  }\bibfield  {title} {\enquote {\bibinfo {title} {The roll-up and merging of
  coherent structures in shallow mixing layers},}\ }\href
  {https://doi.org/10.1063/1.4960391} {\bibfield  {journal} {\bibinfo
  {journal} {Physics of Fluids}\ }\textbf {\bibinfo {volume} {28}},\ \bibinfo
  {pages} {094103} (\bibinfo {year} {2016})}\BibitemShut {NoStop}%
\bibitem [{\citenamefont {Bouchut}\ and\ \citenamefont {Morales~de
  Luna}(2008)}]{bouchut2008}%
  \BibitemOpen
  \bibfield  {author} {\bibinfo {author} {\bibfnamefont {F.}~\bibnamefont
  {Bouchut}}\ and\ \bibinfo {author} {\bibfnamefont {T.}~\bibnamefont
  {Morales~de Luna}},\ }\bibfield  {title} {\enquote {\bibinfo {title} {An
  entropy satisfying scheme for two-layer shallow water equations with
  uncoupled treatment},}\ }\href {https://doi.org/10.1051/m2an:2008019}
  {\bibfield  {journal} {\bibinfo  {journal} {ESAIM: M2AN}\ }\textbf {\bibinfo
  {volume} {42}},\ \bibinfo {pages} {683--698} (\bibinfo {year}
  {2008})}\BibitemShut {NoStop}%
\bibitem [{\citenamefont {Castro-D{\'i}az}\ \emph {et~al.}(2011)\citenamefont
  {Castro-D{\'i}az}, \citenamefont {Fern{\'a}ndez-Nieto}, \citenamefont
  {Gonz{\'a}lez-Vida},\ and\ \citenamefont
  {Par{\'e}s-Madro{\~{n}}al}}]{castro2011}%
  \BibitemOpen
  \bibfield  {author} {\bibinfo {author} {\bibfnamefont {M.~J.}\ \bibnamefont
  {Castro-D{\'i}az}}, \bibinfo {author} {\bibfnamefont {E.~D.}\ \bibnamefont
  {Fern{\'a}ndez-Nieto}}, \bibinfo {author} {\bibfnamefont {J.~M.}\
  \bibnamefont {Gonz{\'a}lez-Vida}},\ and\ \bibinfo {author} {\bibfnamefont
  {C.}~\bibnamefont {Par{\'e}s-Madro{\~{n}}al}},\ }\bibfield  {title} {\enquote
  {\bibinfo {title} {Numerical treatment of the loss of hyperbolicity of the
  two-layer shallow-water system},}\ }\href
  {https://doi.org/10.1007/s10915-010-9427-5} {\bibfield  {journal} {\bibinfo
  {journal} {Journal of Scientific Computing}\ }\textbf {\bibinfo {volume}
  {48}},\ \bibinfo {pages} {16--40} (\bibinfo {year} {2011})}\BibitemShut
  {NoStop}%
\bibitem [{\citenamefont {Bouchut}\ and\ \citenamefont
  {Zeitlin}(2010)}]{bouchut2010}%
  \BibitemOpen
  \bibfield  {author} {\bibinfo {author} {\bibfnamefont {F.}~\bibnamefont
  {Bouchut}}\ and\ \bibinfo {author} {\bibfnamefont {V.}~\bibnamefont
  {Zeitlin}},\ }\bibfield  {title} {\enquote {\bibinfo {title} {A robust
  well-balanced scheme for multi-layer shallow water equations},}\ }\href
  {https://doi.org/10.3934/dcdsb.2010.13.739} {\bibfield  {journal} {\bibinfo
  {journal} {Discrete \& Continuous Dynamical Systems - B}\ }\textbf {\bibinfo
  {volume} {13}},\ \bibinfo {pages} {739} (\bibinfo {year} {2010})}\BibitemShut
  {NoStop}%
\bibitem [{\citenamefont {Chiapolino}\ and\ \citenamefont
  {Saurel}(2018)}]{chiapolino2018}%
  \BibitemOpen
  \bibfield  {author} {\bibinfo {author} {\bibfnamefont {A.}~\bibnamefont
  {Chiapolino}}\ and\ \bibinfo {author} {\bibfnamefont {R.}~\bibnamefont
  {Saurel}},\ }\bibfield  {title} {\enquote {\bibinfo {title} {Models and
  methods for two-layer shallow water flows},}\ }\href
  {https://doi.org/10.1016/j.jcp.2018.05.034} {\bibfield  {journal} {\bibinfo
  {journal} {Journal of Computational Physics}\ }\textbf {\bibinfo {volume}
  {371}},\ \bibinfo {pages} {1043 -- 1066} (\bibinfo {year}
  {2018})}\BibitemShut {NoStop}%
\bibitem [{\citenamefont {Hatcher}\ and\ \citenamefont
  {Vasconcelos}(2014)}]{hatcher2014}%
  \BibitemOpen
  \bibfield  {author} {\bibinfo {author} {\bibfnamefont {T.~M.}\ \bibnamefont
  {Hatcher}}\ and\ \bibinfo {author} {\bibfnamefont {J.~G.}\ \bibnamefont
  {Vasconcelos}},\ }\bibfield  {title} {\enquote {\bibinfo {title}
  {Alternatives for flow solution at the leading edge of gravity currents using
  the shallow water equations},}\ }\href
  {https://doi.org/10.1080/00221686.2013.874376} {\bibfield  {journal}
  {\bibinfo  {journal} {Journal of Hydraulic Research}\ }\textbf {\bibinfo
  {volume} {52}},\ \bibinfo {pages} {228--240} (\bibinfo {year}
  {2014})}\BibitemShut {NoStop}%
\bibitem [{\citenamefont {Rottman}\ and\ \citenamefont
  {Simpson}(1983)}]{rottman1983}%
  \BibitemOpen
  \bibfield  {author} {\bibinfo {author} {\bibfnamefont {J.~W.}\ \bibnamefont
  {Rottman}}\ and\ \bibinfo {author} {\bibfnamefont {J.~E.}\ \bibnamefont
  {Simpson}},\ }\bibfield  {title} {\enquote {\bibinfo {title} {Gravity
  currents produced by instantaneous releases of a heavy fluid in a rectangular
  channel},}\ }\href {https://doi.org/10.1017/S0022112083002979} {\bibfield
  {journal} {\bibinfo  {journal} {Journal of Fluid Mechanics}\ }\textbf
  {\bibinfo {volume} {135}},\ \bibinfo {pages} {95–110} (\bibinfo {year}
  {1983})}\BibitemShut {NoStop}%
\bibitem [{\citenamefont {Ungarish}(2013)}]{ungarish2013}%
  \BibitemOpen
  \bibfield  {author} {\bibinfo {author} {\bibfnamefont {M.}~\bibnamefont
  {Ungarish}},\ }\bibfield  {title} {\enquote {\bibinfo {title} {Two-layer
  shallow-water dam-break solutions for gravity currents in non-rectangular
  cross-area channels},}\ }\href {https://doi.org/10.1017/jfm.2013.417}
  {\bibfield  {journal} {\bibinfo  {journal} {Journal of Fluid Mechanics}\
  }\textbf {\bibinfo {volume} {732}},\ \bibinfo {pages} {537–570} (\bibinfo
  {year} {2013})}\BibitemShut {NoStop}%
\bibitem [{\citenamefont {Lombardi}\ \emph {et~al.}(2015)\citenamefont
  {Lombardi}, \citenamefont {Adduce}, \citenamefont {Sciortino},\ and\
  \citenamefont {La~Rocca}}]{lombardi2015}%
  \BibitemOpen
  \bibfield  {author} {\bibinfo {author} {\bibfnamefont {V.}~\bibnamefont
  {Lombardi}}, \bibinfo {author} {\bibfnamefont {C.}~\bibnamefont {Adduce}},
  \bibinfo {author} {\bibfnamefont {G.}~\bibnamefont {Sciortino}},\ and\
  \bibinfo {author} {\bibfnamefont {M.}~\bibnamefont {La~Rocca}},\ }\bibfield
  {title} {\enquote {\bibinfo {title} {Gravity currents flowing upslope:
  Laboratory experiments and shallow-water simulations},}\ }\href
  {https://doi.org/10.1063/1.4905305} {\bibfield  {journal} {\bibinfo
  {journal} {Physics of Fluids}\ }\textbf {\bibinfo {volume} {27}},\ \bibinfo
  {pages} {016602} (\bibinfo {year} {2015})}\BibitemShut {NoStop}%
\bibitem [{\citenamefont {Bjørnestad}\ and\ \citenamefont
  {Kalisch}(2017)}]{bjornestad2017}%
  \BibitemOpen
  \bibfield  {author} {\bibinfo {author} {\bibfnamefont {M.}~\bibnamefont
  {Bjørnestad}}\ and\ \bibinfo {author} {\bibfnamefont {H.}~\bibnamefont
  {Kalisch}},\ }\bibfield  {title} {\enquote {\bibinfo {title} {Shallow water
  dynamics on linear shear flows and plane beaches},}\ }\href@noop {}
  {\bibfield  {journal} {\bibinfo  {journal} {Physics of Fluids}\ }\textbf
  {\bibinfo {volume} {29}},\ \bibinfo {pages} {073602} (\bibinfo {year}
  {2017})}\BibitemShut {NoStop}%
\bibitem [{\citenamefont {Zemach}\ \emph {et~al.}(2019)\citenamefont {Zemach},
  \citenamefont {Ungarish}, \citenamefont {Martin},\ and\ \citenamefont
  {Negretti}}]{zemach2019}%
  \BibitemOpen
  \bibfield  {author} {\bibinfo {author} {\bibfnamefont {T.}~\bibnamefont
  {Zemach}}, \bibinfo {author} {\bibfnamefont {M.}~\bibnamefont {Ungarish}},
  \bibinfo {author} {\bibfnamefont {A.}~\bibnamefont {Martin}},\ and\ \bibinfo
  {author} {\bibfnamefont {M.~E.}\ \bibnamefont {Negretti}},\ }\bibfield
  {title} {\enquote {\bibinfo {title} {On gravity currents of fixed volume that
  encounter a down-slope or up-slope bottom},}\ }\href
  {https://doi.org/10.1063/1.5121290} {\bibfield  {journal} {\bibinfo
  {journal} {Physics of Fluids}\ }\textbf {\bibinfo {volume} {31}},\ \bibinfo
  {pages} {096604} (\bibinfo {year} {2019})}\BibitemShut {NoStop}%
\bibitem [{\citenamefont {Fraccarollo}\ and\ \citenamefont
  {Toro}(1995)}]{fracarollo1995}%
  \BibitemOpen
  \bibfield  {author} {\bibinfo {author} {\bibfnamefont {L.}~\bibnamefont
  {Fraccarollo}}\ and\ \bibinfo {author} {\bibfnamefont {E.~F.}\ \bibnamefont
  {Toro}},\ }\bibfield  {title} {\enquote {\bibinfo {title} {Experimental and
  numerical assessment of the shallow water model for two-dimensional dam-break
  type problems},}\ }\href {https://doi.org/10.1080/00221689509498555}
  {\bibfield  {journal} {\bibinfo  {journal} {Journal of Hydraulic Research}\
  }\textbf {\bibinfo {volume} {33}},\ \bibinfo {pages} {843--864} (\bibinfo
  {year} {1995})}\BibitemShut {NoStop}%
\bibitem [{\citenamefont {Zhou}\ \emph {et~al.}(2001)\citenamefont {Zhou},
  \citenamefont {Causon}, \citenamefont {Mingham},\ and\ \citenamefont
  {Ingram}}]{zhou2001}%
  \BibitemOpen
  \bibfield  {author} {\bibinfo {author} {\bibfnamefont {J.}~\bibnamefont
  {Zhou}}, \bibinfo {author} {\bibfnamefont {D.}~\bibnamefont {Causon}},
  \bibinfo {author} {\bibfnamefont {C.}~\bibnamefont {Mingham}},\ and\ \bibinfo
  {author} {\bibfnamefont {D.}~\bibnamefont {Ingram}},\ }\bibfield  {title}
  {\enquote {\bibinfo {title} {The surface gradient method for the treatment of
  source terms in the shallow-water equations},}\ }\href
  {https://doi.org/https://doi.org/10.1006/jcph.2000.6670} {\bibfield
  {journal} {\bibinfo  {journal} {Journal of Computational Physics}\ }\textbf
  {\bibinfo {volume} {168}},\ \bibinfo {pages} {1 -- 25} (\bibinfo {year}
  {2001})}\BibitemShut {NoStop}%
\bibitem [{\citenamefont {LeFloch}\ and\ \citenamefont
  {Thanh}(2007)}]{lefloch2007}%
  \BibitemOpen
  \bibfield  {author} {\bibinfo {author} {\bibfnamefont {P.~G.}\ \bibnamefont
  {LeFloch}}\ and\ \bibinfo {author} {\bibfnamefont {M.~D.}\ \bibnamefont
  {Thanh}},\ }\bibfield  {title} {\enquote {\bibinfo {title} {The riemann
  problem for the shallow water equations with discontinuous topography},}\
  }\href {https://projecteuclid.org:443/euclid.cms/1199377555} {\bibfield
  {journal} {\bibinfo  {journal} {Commun. Math. Sci.}\ }\textbf {\bibinfo
  {volume} {5}},\ \bibinfo {pages} {865--885} (\bibinfo {year}
  {2007})}\BibitemShut {NoStop}%
\bibitem [{\citenamefont {Liang}\ and\ \citenamefont
  {Marche}(2009)}]{liang2009}%
  \BibitemOpen
  \bibfield  {author} {\bibinfo {author} {\bibfnamefont {Q.}~\bibnamefont
  {Liang}}\ and\ \bibinfo {author} {\bibfnamefont {F.}~\bibnamefont {Marche}},\
  }\bibfield  {title} {\enquote {\bibinfo {title} {Numerical resolution of
  well-balanced shallow water equations with complex source terms},}\ }\href
  {https://doi.org/https://doi.org/10.1016/j.advwatres.2009.02.010} {\bibfield
  {journal} {\bibinfo  {journal} {Advances in Water Resources}\ }\textbf
  {\bibinfo {volume} {32}},\ \bibinfo {pages} {873 -- 884} (\bibinfo {year}
  {2009})}\BibitemShut {NoStop}%
\bibitem [{\citenamefont {LeFloch}\ and\ \citenamefont
  {Thanh}(2011)}]{lefloch2011}%
  \BibitemOpen
  \bibfield  {author} {\bibinfo {author} {\bibfnamefont {P.~G.}\ \bibnamefont
  {LeFloch}}\ and\ \bibinfo {author} {\bibfnamefont {M.~D.}\ \bibnamefont
  {Thanh}},\ }\bibfield  {title} {\enquote {\bibinfo {title} {A godunov-type
  method for the shallow water equations with discontinuous topography in the
  resonant regime},}\ }\href
  {https://doi.org/https://doi.org/10.1016/j.jcp.2011.06.017} {\bibfield
  {journal} {\bibinfo  {journal} {Journal of Computational Physics}\ }\textbf
  {\bibinfo {volume} {230}},\ \bibinfo {pages} {7631 -- 7660} (\bibinfo {year}
  {2011})}\BibitemShut {NoStop}%
\bibitem [{\citenamefont {Murillo}\ and\ \citenamefont
  {Navas-Montilla}(2016)}]{murillo2016}%
  \BibitemOpen
  \bibfield  {author} {\bibinfo {author} {\bibfnamefont {J.}~\bibnamefont
  {Murillo}}\ and\ \bibinfo {author} {\bibfnamefont {A.}~\bibnamefont
  {Navas-Montilla}},\ }\bibfield  {title} {\enquote {\bibinfo {title} {A
  comprehensive explanation and exercise of the source terms in hyperbolic
  systems using roe type solutions. application to the 1d-2d shallow water
  equations},}\ }\href
  {https://doi.org/https://doi.org/10.1016/j.advwatres.2016.10.019} {\bibfield
  {journal} {\bibinfo  {journal} {Advances in Water Resources}\ }\textbf
  {\bibinfo {volume} {98}},\ \bibinfo {pages} {70 -- 96} (\bibinfo {year}
  {2016})}\BibitemShut {NoStop}%
\bibitem [{\citenamefont {LeVeque}(2002)}]{leveque02}%
  \BibitemOpen
  \bibfield  {author} {\bibinfo {author} {\bibfnamefont {R.~J.}\ \bibnamefont
  {LeVeque}},\ }\href {https://doi.org/10.1017/CBO9780511791253} {\emph
  {\bibinfo {title} {Finite-Volume Methods for Hyperbolic Problems}}}\
  (\bibinfo  {publisher} {Cambridge University Press},\ \bibinfo {year}
  {2002})\BibitemShut {NoStop}%
\bibitem [{\citenamefont {Monjarret}(2015)}]{monjarret2015}%
  \BibitemOpen
  \bibfield  {author} {\bibinfo {author} {\bibfnamefont {R.}~\bibnamefont
  {Monjarret}},\ }\bibfield  {title} {\enquote {\bibinfo {title} {Local
  well-posedness of the two-layer shallow water model with free surface},}\
  }\href {https://doi.org/10.1137/140957020} {\bibfield  {journal} {\bibinfo
  {journal} {SIAM Journal on Applied Mathematics}\ }\textbf {\bibinfo {volume}
  {75}},\ \bibinfo {pages} {2311--2332} (\bibinfo {year} {2015})}\BibitemShut
  {NoStop}%
\bibitem [{\citenamefont {Ostapenko}(1999)}]{ostapenko1999}%
  \BibitemOpen
  \bibfield  {author} {\bibinfo {author} {\bibfnamefont {V.~V.}\ \bibnamefont
  {Ostapenko}},\ }\bibfield  {title} {\enquote {\bibinfo {title} {Complete
  systems of conservation laws for two-layer shallow water models},}\ }\href
  {https://doi.org/10.1007/BF02468461} {\bibfield  {journal} {\bibinfo
  {journal} {Journal of Applied Mechanics and Technical Physics}\ }\textbf
  {\bibinfo {volume} {40}},\ \bibinfo {pages} {796--804} (\bibinfo {year}
  {1999})}\BibitemShut {NoStop}%
\bibitem [{\citenamefont {Ostapenko}(2001)}]{ostapenko2001}%
  \BibitemOpen
  \bibfield  {author} {\bibinfo {author} {\bibfnamefont {V.}~\bibnamefont
  {Ostapenko}},\ }\bibfield  {title} {\enquote {\bibinfo {title} {Stable shock
  waves in two-layer shallow water},}\ }\href
  {https://doi.org/10.1016/S0021-8928(01)00010-7} {\bibfield  {journal}
  {\bibinfo  {journal} {Journal of Applied Mathematics and Mechanics}\ }\textbf
  {\bibinfo {volume} {65}},\ \bibinfo {pages} {89 -- 108} (\bibinfo {year}
  {2001})}\BibitemShut {NoStop}%
\bibitem [{\citenamefont {Whitham}(1974)}]{whitham1974}%
  \BibitemOpen
  \bibfield  {author} {\bibinfo {author} {\bibfnamefont {G.}~\bibnamefont
  {Whitham}},\ }\href@noop {} {\emph {\bibinfo {title} {Linear and Nonlinear
  Waves}}}\ (\bibinfo  {publisher} {Wiley},\ \bibinfo {year}
  {1974})\BibitemShut {NoStop}%
\bibitem [{\citenamefont {Holden}\ and\ \citenamefont
  {Risebro}(2015)}]{holden2015}%
  \BibitemOpen
  \bibfield  {author} {\bibinfo {author} {\bibfnamefont {H.}~\bibnamefont
  {Holden}}\ and\ \bibinfo {author} {\bibfnamefont {N.~H.}\ \bibnamefont
  {Risebro}},\ }\href {https://doi.org/10.1007/978-3-662-47507-2} {\emph
  {\bibinfo {title} {Front tracking for hyperbolic conservation laws}}},\ Vol.\
  \bibinfo {volume} {152}\ (\bibinfo  {publisher} {Springer},\ \bibinfo {year}
  {2015})\BibitemShut {NoStop}%
\bibitem [{\citenamefont {Borthwick}(2016)}]{borthwick2016}%
  \BibitemOpen
  \bibfield  {author} {\bibinfo {author} {\bibfnamefont {D.}~\bibnamefont
  {Borthwick}},\ }\enquote {\bibinfo {title} {Weak solutions},}\ in\ \href
  {https://doi.org/10.1007/978-3-319-48936-0_10} {\emph {\bibinfo {booktitle}
  {Introduction to Partial Differential Equations}}}\ (\bibinfo  {publisher}
  {Springer International Publishing},\ \bibinfo {address} {Cham},\ \bibinfo
  {year} {2016})\ Chap.~\bibinfo {chapter} {10}, pp.\ \bibinfo {pages}
  {177--204}\BibitemShut {NoStop}%
\bibitem [{\citenamefont {Rankine}(1870)}]{rankine1870}%
  \BibitemOpen
  \bibfield  {author} {\bibinfo {author} {\bibfnamefont {W.~J.~M.}\
  \bibnamefont {Rankine}},\ }\bibfield  {title} {\enquote {\bibinfo {title}
  {{XV.} on the thermodynamic theory of waves of finite longitudinal
  disturbance},}\ }\href {https://doi.org/10.1098/rstl.1870.0015} {\bibfield
  {journal} {\bibinfo  {journal} {Philosophical Transactions of the Royal
  Society of London}\ }\textbf {\bibinfo {volume} {160}},\ \bibinfo {pages}
  {277--288} (\bibinfo {year} {1870})}\BibitemShut {NoStop}%
\bibitem [{\citenamefont {Hugoniot}(1887)}]{hugoniot1887}%
  \BibitemOpen
  \bibfield  {author} {\bibinfo {author} {\bibfnamefont {H.}~\bibnamefont
  {Hugoniot}},\ }\bibfield  {title} {\enquote {\bibinfo {title} {M{\'e}moire
  sur la propagation du mouvement dans un fluide ind{\'e}fini. premi{\`e}re
  partie},}\ }\href@noop {} {\bibfield  {journal} {\bibinfo  {journal} {Journal
  de Math{\'e}matiques Pures et Appliqu{\'e}es}\ }\textbf {\bibinfo {volume}
  {4}},\ \bibinfo {pages} {477--492} (\bibinfo {year} {1887})}\BibitemShut
  {NoStop}%
\bibitem [{\citenamefont {Hugoniot}(1889)}]{hugoniot1889}%
  \BibitemOpen
  \bibfield  {author} {\bibinfo {author} {\bibfnamefont {H.}~\bibnamefont
  {Hugoniot}},\ }\bibfield  {title} {\enquote {\bibinfo {title} {M{\'e}moire
  sur la propagation des mouvements dans les corps et sp{\'e}cialement dans les
  gas parfaits. deuxi{\`e}me partie},}\ }\href@noop {} {\bibfield  {journal}
  {\bibinfo  {journal} {J. {\'E}cole Polytechnique}\ }\textbf {\bibinfo
  {volume} {57}},\ \bibinfo {pages} {1--126} (\bibinfo {year}
  {1889})}\BibitemShut {NoStop}%
\bibitem [{\citenamefont {Huppert}\ and\ \citenamefont
  {Simpson}(1980)}]{huppert1980}%
  \BibitemOpen
  \bibfield  {author} {\bibinfo {author} {\bibfnamefont {H.~E.}\ \bibnamefont
  {Huppert}}\ and\ \bibinfo {author} {\bibfnamefont {J.~E.}\ \bibnamefont
  {Simpson}},\ }\bibfield  {title} {\enquote {\bibinfo {title} {The slumping of
  gravity currents},}\ }\href {https://doi.org/10.1017/S0022112080000894}
  {\bibfield  {journal} {\bibinfo  {journal} {Journal of Fluid Mechanics}\
  }\textbf {\bibinfo {volume} {99}},\ \bibinfo {pages} {785–799} (\bibinfo
  {year} {1980})}\BibitemShut {NoStop}%
\bibitem [{\citenamefont {Priede}(2019)}]{priede2019}%
  \BibitemOpen
  \bibfield  {author} {\bibinfo {author} {\bibfnamefont {J.}~\bibnamefont
  {Priede}},\ }\bibfield  {title} {\enquote {\bibinfo {title} {Self-contained
  two-layer shallow water theory of strong internal bores},}\ }\href@noop {}
  {\bibfield  {journal} {\bibinfo  {journal} {ArXiv e-prints}\ } (\bibinfo
  {year} {2019})},\ \Eprint {https://arxiv.org/abs/1806.06041}
  {arXiv:1806.06041 [physics.flu-dyn]} \BibitemShut {NoStop}%
\bibitem [{\citenamefont {Borden}\ and\ \citenamefont
  {Meiburg}(2013)}]{borden2013}%
  \BibitemOpen
  \bibfield  {author} {\bibinfo {author} {\bibfnamefont {Z.}~\bibnamefont
  {Borden}}\ and\ \bibinfo {author} {\bibfnamefont {E.}~\bibnamefont
  {Meiburg}},\ }\bibfield  {title} {\enquote {\bibinfo {title} {Circulation
  based models for {Boussinesq} gravity currents},}\ }\href
  {https://doi.org/10.1063/1.4825035} {\bibfield  {journal} {\bibinfo
  {journal} {Physics of Fluids}\ }\textbf {\bibinfo {volume} {25}},\ \bibinfo
  {pages} {101301} (\bibinfo {year} {2013})}\BibitemShut {NoStop}%
\bibitem [{\citenamefont {Boussinesq}(1903)}]{boussinesq1903}%
  \BibitemOpen
  \bibfield  {author} {\bibinfo {author} {\bibfnamefont {J.}~\bibnamefont
  {Boussinesq}},\ }\href@noop {} {\emph {\bibinfo {title} {Th{\'e}orie
  analytique de la chaleur: mise en harmonie avec la thermodynamique et avec la
  th{\'e}orie m{\'e}canique de la lumi{\`e}re}}},\ Vol.~\bibinfo {volume} {2}\
  (\bibinfo  {publisher} {Gauthier-Villars},\ \bibinfo {year}
  {1903})\BibitemShut {NoStop}%
\bibitem [{\citenamefont {Ungarish}(2011)}]{ungarish2011}%
  \BibitemOpen
  \bibfield  {author} {\bibinfo {author} {\bibfnamefont {M.}~\bibnamefont
  {Ungarish}},\ }\bibfield  {title} {\enquote {\bibinfo {title} {Two-layer
  shallow-water dam-break solutions for non-{Boussinesq} gravity currents in a
  wide range of fractional depth},}\ }\href
  {https://doi.org/10.1017/S0022112010006397} {\bibfield  {journal} {\bibinfo
  {journal} {Journal of Fluid Mechanics}\ }\textbf {\bibinfo {volume} {675}},\
  \bibinfo {pages} {27–59} (\bibinfo {year} {2011})}\BibitemShut {NoStop}%
\bibitem [{\citenamefont {Lowe}, \citenamefont {Rottman},\ and\ \citenamefont
  {Linden}(2005)}]{lowe2005}%
  \BibitemOpen
  \bibfield  {author} {\bibinfo {author} {\bibfnamefont {R.~J.}\ \bibnamefont
  {Lowe}}, \bibinfo {author} {\bibfnamefont {J.~W.}\ \bibnamefont {Rottman}},\
  and\ \bibinfo {author} {\bibfnamefont {P.~F.}\ \bibnamefont {Linden}},\
  }\bibfield  {title} {\enquote {\bibinfo {title} {The non-{Boussinesq}
  lock-exchange problem. {Part} 1. {Theory} and experiments},}\ }\href
  {https://doi.org/10.1017/S0022112005005069} {\bibfield  {journal} {\bibinfo
  {journal} {Journal of Fluid Mechanics}\ }\textbf {\bibinfo {volume} {537}},\
  \bibinfo {pages} {101–124} (\bibinfo {year} {2005})}\BibitemShut {NoStop}%
\bibitem [{\citenamefont {Birman}, \citenamefont {Martin},\ and\ \citenamefont
  {Meiburg}(2005)}]{birman2005}%
  \BibitemOpen
  \bibfield  {author} {\bibinfo {author} {\bibfnamefont {V.~K.}\ \bibnamefont
  {Birman}}, \bibinfo {author} {\bibfnamefont {J.~E.}\ \bibnamefont {Martin}},\
  and\ \bibinfo {author} {\bibfnamefont {E.}~\bibnamefont {Meiburg}},\
  }\bibfield  {title} {\enquote {\bibinfo {title} {The non-{Boussinesq}
  lock-exchange problem. {Part} 2. {High}-resolution simulations},}\ }\href
  {https://doi.org/10.1017/S0022112005005033} {\bibfield  {journal} {\bibinfo
  {journal} {Journal of Fluid Mechanics}\ }\textbf {\bibinfo {volume} {537}},\
  \bibinfo {pages} {125–144} (\bibinfo {year} {2005})}\BibitemShut {NoStop}%
\bibitem [{\citenamefont {Sher}\ and\ \citenamefont
  {Woods}(2015)}]{sherwoods2015}%
  \BibitemOpen
  \bibfield  {author} {\bibinfo {author} {\bibfnamefont {D.}~\bibnamefont
  {Sher}}\ and\ \bibinfo {author} {\bibfnamefont {A.~W.}\ \bibnamefont
  {Woods}},\ }\bibfield  {title} {\enquote {\bibinfo {title} {Gravity currents:
  entrainment, stratification and self-similarity},}\ }\href
  {https://doi.org/10.1017/jfm.2015.576} {\bibfield  {journal} {\bibinfo
  {journal} {Journal of Fluid Mechanics}\ }\textbf {\bibinfo {volume} {784}},\
  \bibinfo {pages} {130--162} (\bibinfo {year} {2015})}\BibitemShut {NoStop}%
\bibitem [{\citenamefont {Dunford}\ and\ \citenamefont
  {Schwartz}(1957)}]{dunford1957}%
  \BibitemOpen
  \bibfield  {author} {\bibinfo {author} {\bibfnamefont {N.}~\bibnamefont
  {Dunford}}\ and\ \bibinfo {author} {\bibfnamefont {J.~T.}\ \bibnamefont
  {Schwartz}},\ }\href@noop {} {\emph {\bibinfo {title} {Linear Operators. Part
  1: General Theory}}}\ (\bibinfo  {publisher} {New York Interscience},\
  \bibinfo {year} {1957})\BibitemShut {NoStop}%
\bibitem [{\citenamefont {Rudin}(1976)}]{rudin1976}%
  \BibitemOpen
  \bibfield  {author} {\bibinfo {author} {\bibfnamefont {W.}~\bibnamefont
  {Rudin}},\ }\href@noop {} {\emph {\bibinfo {title} {Principles of
  Mathematical Analysis}}}\ (\bibinfo  {publisher} {McGraw-Hill Professional},\
  \bibinfo {year} {1976})\BibitemShut {NoStop}%
\bibitem [{\citenamefont {Doran}\ and\ \citenamefont
  {Lasenby}(2003)}]{doran2003}%
  \BibitemOpen
  \bibfield  {author} {\bibinfo {author} {\bibfnamefont {C.}~\bibnamefont
  {Doran}}\ and\ \bibinfo {author} {\bibfnamefont {A.}~\bibnamefont
  {Lasenby}},\ }\href {https://doi.org/10.1017/CBO9780511807497} {\emph
  {\bibinfo {title} {Geometric Algebra for Physicists}}}\ (\bibinfo
  {publisher} {Cambridge University Press},\ \bibinfo {year}
  {2003})\BibitemShut {NoStop}%
\bibitem [{\citenamefont {Hatcher}\ and\ \citenamefont
  {Vasconcelos}(2013)}]{hatcher20013}%
  \BibitemOpen
  \bibfield  {author} {\bibinfo {author} {\bibfnamefont {T.~M.}\ \bibnamefont
  {Hatcher}}\ and\ \bibinfo {author} {\bibfnamefont {J.~G.}\ \bibnamefont
  {Vasconcelos}},\ }\bibfield  {title} {\enquote {\bibinfo {title}
  {Finite-volume and shock-capturing shallow water equation model to simulate
  boussinesq-type lock-exchange flows},}\ }\href
  {https://doi.org/10.1061/(ASCE)HY.1943-7900.0000775} {\bibfield  {journal}
  {\bibinfo  {journal} {Journal of Hydraulic Engineering}\ }\textbf {\bibinfo
  {volume} {139}},\ \bibinfo {pages} {1223--1233} (\bibinfo {year}
  {2013})}\BibitemShut {NoStop}%
\bibitem [{\citenamefont {Marche}(2007)}]{marche2007}%
  \BibitemOpen
  \bibfield  {author} {\bibinfo {author} {\bibfnamefont {F.}~\bibnamefont
  {Marche}},\ }\bibfield  {title} {\enquote {\bibinfo {title} {Derivation of a
  new two-dimensional viscous shallow water model with varying topography,
  bottom friction and capillary effects},}\ }\href
  {https://doi.org/https://doi.org/10.1016/j.euromechflu.2006.04.007}
  {\bibfield  {journal} {\bibinfo  {journal} {European Journal of Mechanics -
  B/Fluids}\ }\textbf {\bibinfo {volume} {26}},\ \bibinfo {pages} {49 -- 63}
  (\bibinfo {year} {2007})}\BibitemShut {NoStop}%
\bibitem [{\citenamefont {Joshi}\ and\ \citenamefont
  {Jaiman}(2018)}]{joshi2018}%
  \BibitemOpen
  \bibfield  {author} {\bibinfo {author} {\bibfnamefont {V.}~\bibnamefont
  {Joshi}}\ and\ \bibinfo {author} {\bibfnamefont {R.~K.}\ \bibnamefont
  {Jaiman}},\ }\bibfield  {title} {\enquote {\bibinfo {title} {An adaptive
  variational procedure for the conservative and positivity preserving
  {Allen}–{Cahn} phase-field model},}\ }\href
  {https://doi.org/10.1016/j.jcp.2018.04.022} {\bibfield  {journal} {\bibinfo
  {journal} {Journal of Computational Physics}\ }\textbf {\bibinfo {volume}
  {366}},\ \bibinfo {pages} {478 -- 504} (\bibinfo {year} {2018})}\BibitemShut
  {NoStop}%
\bibitem [{\citenamefont {Soares-Frazão}\ \emph {et~al.}(2012)\citenamefont
  {Soares-Frazão}, \citenamefont {Canelas}, \citenamefont {Cao}, \citenamefont
  {Cea}, \citenamefont {Chaudhry}, \citenamefont {Moran}, \citenamefont {Kadi},
  \citenamefont {Ferreira}, \citenamefont {Cadórniga}, \citenamefont
  {Gonzalez-Ramirez}, \citenamefont {Greco}, \citenamefont {Huang},
  \citenamefont {Imran}, \citenamefont {Coz}, \citenamefont {Marsooli},
  \citenamefont {Paquier}, \citenamefont {Pender}, \citenamefont {Pontillo},
  \citenamefont {Puertas}, \citenamefont {Spinewine}, \citenamefont
  {Swartenbroekx}, \citenamefont {Tsubaki}, \citenamefont {Villaret},
  \citenamefont {Wu}, \citenamefont {Yue},\ and\ \citenamefont
  {Zech}}]{soares-frazao2012}%
  \BibitemOpen
  \bibfield  {author} {\bibinfo {author} {\bibfnamefont {S.}~\bibnamefont
  {Soares-Frazão}}, \bibinfo {author} {\bibfnamefont {R.}~\bibnamefont
  {Canelas}}, \bibinfo {author} {\bibfnamefont {Z.}~\bibnamefont {Cao}},
  \bibinfo {author} {\bibfnamefont {L.}~\bibnamefont {Cea}}, \bibinfo {author}
  {\bibfnamefont {H.~M.}\ \bibnamefont {Chaudhry}}, \bibinfo {author}
  {\bibfnamefont {A.~D.}\ \bibnamefont {Moran}}, \bibinfo {author}
  {\bibfnamefont {K.~E.}\ \bibnamefont {Kadi}}, \bibinfo {author}
  {\bibfnamefont {R.}~\bibnamefont {Ferreira}}, \bibinfo {author}
  {\bibfnamefont {I.~F.}\ \bibnamefont {Cadórniga}}, \bibinfo {author}
  {\bibfnamefont {N.}~\bibnamefont {Gonzalez-Ramirez}}, \bibinfo {author}
  {\bibfnamefont {M.}~\bibnamefont {Greco}}, \bibinfo {author} {\bibfnamefont
  {W.}~\bibnamefont {Huang}}, \bibinfo {author} {\bibfnamefont
  {J.}~\bibnamefont {Imran}}, \bibinfo {author} {\bibfnamefont {J.~L.}\
  \bibnamefont {Coz}}, \bibinfo {author} {\bibfnamefont {R.}~\bibnamefont
  {Marsooli}}, \bibinfo {author} {\bibfnamefont {A.}~\bibnamefont {Paquier}},
  \bibinfo {author} {\bibfnamefont {G.}~\bibnamefont {Pender}}, \bibinfo
  {author} {\bibfnamefont {M.}~\bibnamefont {Pontillo}}, \bibinfo {author}
  {\bibfnamefont {J.}~\bibnamefont {Puertas}}, \bibinfo {author} {\bibfnamefont
  {B.}~\bibnamefont {Spinewine}}, \bibinfo {author} {\bibfnamefont
  {C.}~\bibnamefont {Swartenbroekx}}, \bibinfo {author} {\bibfnamefont
  {R.}~\bibnamefont {Tsubaki}}, \bibinfo {author} {\bibfnamefont
  {C.}~\bibnamefont {Villaret}}, \bibinfo {author} {\bibfnamefont
  {W.}~\bibnamefont {Wu}}, \bibinfo {author} {\bibfnamefont {Z.}~\bibnamefont
  {Yue}},\ and\ \bibinfo {author} {\bibfnamefont {Y.}~\bibnamefont {Zech}},\
  }\bibfield  {title} {\enquote {\bibinfo {title} {Dam-break flows over mobile
  beds: {Experiments} and benchmark tests for numerical models},}\ }\href
  {https://doi.org/10.1080/00221686.2012.689682} {\bibfield  {journal}
  {\bibinfo  {journal} {Journal of Hydraulic Research}\ }\textbf {\bibinfo
  {volume} {50}},\ \bibinfo {pages} {364--375} (\bibinfo {year}
  {2012})}\BibitemShut {NoStop}%
\bibitem [{\citenamefont {Zhou}\ \emph {et~al.}(2004)\citenamefont {Zhou},
  \citenamefont {Causon}, \citenamefont {Mingham},\ and\ \citenamefont
  {Ingram}}]{zhou2004}%
  \BibitemOpen
  \bibfield  {author} {\bibinfo {author} {\bibfnamefont {J.~G.}\ \bibnamefont
  {Zhou}}, \bibinfo {author} {\bibfnamefont {D.~M.}\ \bibnamefont {Causon}},
  \bibinfo {author} {\bibfnamefont {C.~G.}\ \bibnamefont {Mingham}},\ and\
  \bibinfo {author} {\bibfnamefont {D.~M.}\ \bibnamefont {Ingram}},\ }\bibfield
   {title} {\enquote {\bibinfo {title} {Numerical prediction of dam-break flows
  in general geometries with complex bed topography},}\ }\href
  {https://doi.org/10.1061/(ASCE)0733-9429(2004)130:4(332)} {\bibfield
  {journal} {\bibinfo  {journal} {Journal of Hydraulic Engineering}\ }\textbf
  {\bibinfo {volume} {130}},\ \bibinfo {pages} {332--340} (\bibinfo {year}
  {2004})}\BibitemShut {NoStop}%
\bibitem [{\citenamefont {Ritter}(1892)}]{ritter1892}%
  \BibitemOpen
  \bibfield  {author} {\bibinfo {author} {\bibfnamefont {A.}~\bibnamefont
  {Ritter}},\ }\bibfield  {title} {\enquote {\bibinfo {title} {Die
  {Fortpflanzung} der {Wasserwellen}},}\ }\href@noop {} {\bibfield  {journal}
  {\bibinfo  {journal} {Zeitschrift Verein Deutscher Ingenieure}\ }\textbf
  {\bibinfo {volume} {36}},\ \bibinfo {pages} {947--954} (\bibinfo {year}
  {1892})}\BibitemShut {NoStop}%
\bibitem [{\citenamefont {Suchon}(1970)}]{suchon1970}%
  \BibitemOpen
  \bibfield  {author} {\bibinfo {author} {\bibfnamefont {W.}~\bibnamefont
  {Suchon}},\ }\href
  {https://dspace.mit.edu/bitstream/handle/1721.1/67123/28161279-MIT.pdf}
  {\enquote {\bibinfo {title} {An experimental investigation of oil spreading
  over water},}\ } (\bibinfo {year} {1970}),\ \bibinfo {note} {{M}aster Thesis;
  {M}assachusetts {I}nstitute of {T}echnology}\BibitemShut {NoStop}%
\bibitem [{\citenamefont {Chang}, \citenamefont {Reid},\ and\ \citenamefont
  {Fay}(1983)}]{chang1983}%
  \BibitemOpen
  \bibfield  {author} {\bibinfo {author} {\bibfnamefont {H.-R.}\ \bibnamefont
  {Chang}}, \bibinfo {author} {\bibfnamefont {R.}~\bibnamefont {Reid}},\ and\
  \bibinfo {author} {\bibfnamefont {J.}~\bibnamefont {Fay}},\ }\bibfield
  {title} {\enquote {\bibinfo {title} {Boiling and spreading of liquid nitrogen
  and liquid methane on water},}\ }\href
  {https://doi.org/10.1016/0735-1933(83)90010-6} {\bibfield  {journal}
  {\bibinfo  {journal} {International Communications in Heat and Mass
  Transfer}\ }\textbf {\bibinfo {volume} {10}},\ \bibinfo {pages} {253 -- 263}
  (\bibinfo {year} {1983})}\BibitemShut {NoStop}%
\bibitem [{\citenamefont {Chang}\ and\ \citenamefont {Reid}(1982)}]{chang1982}%
  \BibitemOpen
  \bibfield  {author} {\bibinfo {author} {\bibfnamefont {H.-R.}\ \bibnamefont
  {Chang}}\ and\ \bibinfo {author} {\bibfnamefont {R.}~\bibnamefont {Reid}},\
  }\bibfield  {title} {\enquote {\bibinfo {title} {Spreading—boiling model
  for instantaneous spills of liquefied petroleum gas {(LPG)} on water},}\
  }\href {https://doi.org/10.1016/0304-3894(82)87002-7} {\bibfield  {journal}
  {\bibinfo  {journal} {Journal of Hazardous Materials}\ }\textbf {\bibinfo
  {volume} {7}},\ \bibinfo {pages} {19 -- 35} (\bibinfo {year}
  {1982})}\BibitemShut {NoStop}%
\bibitem [{\citenamefont {Burgess}, \citenamefont {Murphy},\ and\ \citenamefont
  {Zabetakis}(1970)}]{burgess1970}%
  \BibitemOpen
  \bibfield  {author} {\bibinfo {author} {\bibfnamefont {D.~S.}\ \bibnamefont
  {Burgess}}, \bibinfo {author} {\bibfnamefont {J.}~\bibnamefont {Murphy}},\
  and\ \bibinfo {author} {\bibfnamefont {M.}~\bibnamefont {Zabetakis}},\
  }\href@noop {} {\enquote {\bibinfo {title} {Hazards of {LNG} spillage in
  marine transportation},}\ }\bibinfo {type} {Tech. Rep.}\ (\bibinfo
  {institution} {Bureau of Mines Pittsburgh PA Safety Research Center},\
  \bibinfo {year} {1970})\BibitemShut {NoStop}%
\bibitem [{\citenamefont {Toro}\ and\ \citenamefont
  {Billett}(2000)}]{toro2000}%
  \BibitemOpen
  \bibfield  {author} {\bibinfo {author} {\bibfnamefont {E.~F.}\ \bibnamefont
  {Toro}}\ and\ \bibinfo {author} {\bibfnamefont {S.~J.}\ \bibnamefont
  {Billett}},\ }\href {https://doi.org/10.1093/imanum/20.1.47} {\emph {\bibinfo
  {title} {Centred {TVD} schemes for hyperbolic conservation laws}}},\
  Vol.~\bibinfo {volume} {20}\ (\bibinfo {year} {2000})\ pp.\ \bibinfo {pages}
  {47--79}\BibitemShut {NoStop}%
\bibitem [{\citenamefont {Ketcheson}\ and\ \citenamefont
  {Robinson}(2005)}]{ketcheson05}%
  \BibitemOpen
  \bibfield  {author} {\bibinfo {author} {\bibfnamefont {D.~I.}\ \bibnamefont
  {Ketcheson}}\ and\ \bibinfo {author} {\bibfnamefont {A.~C.}\ \bibnamefont
  {Robinson}},\ }\bibfield  {title} {\enquote {\bibinfo {title} {On the
  practical importance of the {SSP} property for {Runge}-{Kutta} time
  integrators for some common {Godunov}-type schemes},}\ }\href
  {https://doi.org/10.1002/fld.837} {\bibfield  {journal} {\bibinfo  {journal}
  {International Journal for Numerical Methods in Fluids}\ }\textbf {\bibinfo
  {volume} {48}},\ \bibinfo {pages} {271--303} (\bibinfo {year}
  {2005})}\BibitemShut {NoStop}%
\bibitem [{\citenamefont {Smoller}(1983)}]{smoller1983}%
  \BibitemOpen
  \bibfield  {author} {\bibinfo {author} {\bibfnamefont {J.}~\bibnamefont
  {Smoller}},\ }\href {https://doi.org/10.1007/978-1-4684-0152-3} {\emph
  {\bibinfo {title} {Shock Waves and Reaction-Diffusion Equations}}}\ (\bibinfo
   {publisher} {Springer-Verlag New York},\ \bibinfo {year} {1983})\BibitemShut
  {NoStop}%
\end{thebibliography}
%

\end{document}